\newcommand{\QSL}{QSL\xspace}
\newcommand{\ML}{ML\xspace}
\newcommand{\MT}{MT\xspace}
\newcommand{\Chau}{Chau\xspace}
\newcommand{\LC}{LC\xspace}
\newcommand{\LZ}{LZ\xspace}
\newcommand{\CZ}{CZ\xspace}
\newcommand{\dualML}{dual {\ML}\xspace}
\DeclareMathOperator{\cf}{{\mathbf 1}}
\DeclareMathOperator{\Oh}{O}
\DeclareMathOperator{\sgn}{sgn}
\newcommand{\eplus}[2]{\braket{[{#1}^{+}]^{#2}}}
\newcommand{\opt}{{\rm opt}}
\newcommand{\crit}{{\rm crit}}
\newcommand{\llangle}{\langle\!\langle}
\newcommand{\rrangle}{\rangle\!\rangle}
\newtheorem{lemma}{Lemma}
\newtheorem{theorem}{Theorem}
\newtheorem{corollary}{Corollary}
\newtheorem{remark}{Remark}
\begin{document}
\title{A Unifying Quantum Speed Limit For Time-Independent Hamiltonian
 Evolution}
\author{H. F. Chau}
\email[Corresponding author, email: ]{{\tt hfchau@hku.hk}}
\author{Wenxin Zeng}
\affiliation{Department of Physics, University of Hong Kong, Pokfulam Road,
 Hong Kong}
\date{\today}

\begin{abstract}
 Quantum speed limit (\QSL) is the study of fundamental limits on the
 evolution time of quantum systems.  For instance, under the action of a
 time-independent Hamiltonian, the evolution time between an initial and a
 final quantum state obeys various mutually complementary lower bounds.  They
 include the Mandelstam-Tamm bound, the Margolus-Levitin bound, the Luo-Zhang
 bound, the Lee-Chau bound together with the \dualML bound introduced by Ness
 and coworkers.  Here we show that the Mandelstam-Tamm bound can be obtained
 by optimizing the Lee-Chau bound over a certain parameter.  More importantly,
 we report a \QSL that includes all the above bounds as special cases before
 optimizing over the physically meaningless reference energy level of a
 quantum system.  This unifying bound depends on a certain parameter $p$.  For
 any fixed $p$, we find all pairs of time-independent Hamiltonian and initial
 pure quantum state that saturate this unifying bound.  More importantly,
 these pairs allow us to compute this bound accurately and efficiently using
 an oracle that returns certain $p$th moments related to the absolute value of
 energy of the quantum state.  Moreover, this oracle can be simulated by a
 computationally efficient and accurate algorithm for finite-dimensional
 quantum systems as well as for certain infinite-dimensional quantum states
 with bounded and continuous energy spectra.  This makes our computational
 method feasible in a lot of practical situations.  We further compare the
 performance of this bound for the case of a fixed $p$ as well as the case of
 optimizing over $p$ with existing \QSL{s}.  We find that if the dimension of
 the underlying Hilbert space is $\lesssim 2000$, our unifying bound optimized
 over $p$ can be computed accurately in a few minutes using Mathematica code
 with just-in-time compilation in a typical desktop.  Besides, this optimized
 unifying \QSL is at least as good as all the existing ones combined and can
 occasionally be a few percent to a few times better.
\end{abstract}

\maketitle

\section{Introduction}
\label{Sec:Intro}
 Quantum speed limit (\QSL) is the study of fundamental limits in quantum
 information processing speed~\cite{Lloyd00}.  The first and probably the
 most well-known \QSL is the Mandelstam-Tamm (\MT) bound.  It says that the
 evolution time $\tau$ under the action of a time-independent Hamiltonian
 obeys the inequality
\begin{equation}
 \frac{\tau}{\hbar} \ge \frac{\cos^{-1}(\sqrt{\epsilon})}{\Delta E} ,
 \label{E:MT_bound}
\end{equation}
 where $\epsilon$ is the fidelity between the initial and final quantum state
 and $\Delta E$ is the energy standard deviation of the quantum
 state~\cite{MT45}.  Actually, the \MT bound was discovered well before the
 quantum information era and the coinage of the term \QSL.  Moreover, it can
 be extended to cover the cases of evolution under a time-dependent
 Hamiltonian or open system dynamics~\cite{BCM96}.

 A lot of \QSL bounds have been discovered.  A few are applicable to
 time-dependent Hamiltonians as well as open systems.  Relations with quantum
 control, entanglement, resource theory as well as the so-called time-fractional Schr\"{o}dinger equation have also been
 explored~\cite{[{See, for example, }]Frey16,*DC17,T13,C13,MP16,C22,JLB23,PMSB23,PSMDP23,WLLFQW23}.
 Recently, a collection of articles on \QSL and its applications was published
 in a special section of a journal~\cite{NJPissue}.  In there,
 Takahashi considered not just lower bound but also upper bound on evolution
 time of quantum system~\cite{T22}; also Aifer and Deffner relates \QSL to
 energy efficient implementation of quantum gates~\cite{AD22}.
 Along a different line, Shanahan \emph{et al.}~\cite{SCMC18} as well as
 Okuyama and Ohzeki~\cite{OO18} found that there is a classical correspondence
 to certain \QSL and concluded that \QSL is a universal dynamical property of
 Hilbert space rather than a pure quantum phenomenon.
 In this paper, we go back to the basics by studying \QSL{s} of
 time-independent Hamiltonians for closed systems.  We make the following four
 major contributions in this study.

 First, we report a \QSL that we called the \CZ bound for easy reference.
 This bound generalizes a number of existing \QSL{s} for time-independent
 Hamiltonian evolution.  They include the \MT bound~\cite{MT45}, the
 Margolus-Levitin (\ML) bound~\cite{PHYSCOMP96,ML98,GLM03}, the Luo-Zhang
 (\LZ) bound~\cite{LZ05}, the Lee-Chau (\LC) bound~\cite{LC13} as well as the
 \dualML bound introduced by Ness \emph{et al.}~\cite{NAS22}.  Actually, the
 \LZ, \LC and \CZ bounds are families of bounds each depending on a parameter
 $p$~\cite{LZ05,LC13}.  In addition, for a given $p$, the \LC bound is
 optimized over the physically meaningless reference energy level $E_r$ of the
 quantum state~\cite{LC13}.  And for the \CZ bound, we will see that it is
 optimized over both $E_r$ and another variable named $\theta$.  In this
 paper, we prove that the \CZ bound generalizes the \MT, \ML, \dualML, \LZ and
 \LC bounds in two steps.  Because all but the \MT bound have similar forms,
 our first step is to show that the \CZ bound can be reduced to all but the
 \MT bound above by fixing one or more of these three parameters to certain
 specific values instead of optimizing over them.  Surprisingly, even though
 the forms of the \ML, \dualML, \LZ, \LC and \CZ bounds are very different
 from that of the \MT bound, in our second step, we prove that the \MT bound
 is in fact a result of the \LC bound by optimizing over the parameter $p$.

 Our second major contribution is to study the necessary and sufficient
 conditions for a pair of time-independent Hamiltonian and initial pure
 quantum state to saturate the \CZ bound, and as a result explicitly write
 them down in a computationally usable form.  This analysis can be extended to
 give the set of all time-independent Hamiltonian and initial pure state pairs
 that saturate the \LC bound as well.

 Our third contribution is a numerically efficient (that is, polynomial-time
 computable) and accurate (that is, numerically stable and rounding error is
 not serious) method to compute the \CZ bound (and hence also the \LC bound)
 under the assumption that there is an efficient and accurate method to
 evaluate the minimum $p$th moment of the absolute value of energy of the
 quantum state and the corresponding $p$th signed moment of the absolute
 value of energy of the state.  This assumption is true for finite-dimensional
 quantum systems as well as certain infinite-dimensional systems with bounded
 and continuous energy spectra.  Hence, the \CZ bound is computationally
 feasible in almost all realistic situations.  Interestingly, this method is
 closely related to the Hamiltonian and quantum state pairs that saturate the
 \CZ bound.

 Our last contribution is an extensive comparison of the evolution time lower
 bound and actual runtime between the \CZ bound that is further optimized over
 the parameter $p$ and other existing bounds.  We discover that even when the
 dimension of the underlying Hilbert space is as large as $2048$, this
 optimized \CZ bound can be computed in a few minutes using Mathematica with
 just-in-time compilation installed in a typical desktop computer.
 Furthermore, the bound obtained can be a few percent to a few times better
 than the best existing one.  In this regard, the \CZ bound is the best choice
 in practice.

 In what follows, we first state various \QSL{s} that we are going to extend
 in Sec.~\ref{Sec:prior_art}.  We then report several auxiliary results in
 Sec.~\ref{Sec:auxiliary}.  These results allow us to prove the \CZ bound and
 to study the necessary and sufficient conditions for its saturation in
 Sec.~\ref{Sec:bound}.  In doing so, we find the set of all time-independent
 Hamiltonian and initial pure state pairs that saturate the \CZ and the \LC
 bounds, respectively.  Also in Sec.~\ref{Sec:bound}, we show that by
 optimizing over the parameter $p$, the \LC bound is reduced to the \MT bound.
 Besides, the \CZ bound extends the \ML, \dualML, \LZ and \LC bounds.  Hence,
 the \CZ bound unifies the \MT, \ML, \dualML, \LC and \LC bounds.  Next, we
 report an accurate and efficient method to numerically compute the \CZ as
 well as the \LC bounds given an oracle returning the minimum $p$th moment of
 the absolute value of energy of the quantum state and the corresponding $p$th
 signed moment of the absolute value of energy of the state in
 Sec.~\ref{Sec:numeric}.  We also show that this efficient oracle exists in
 the sense that it can be replaced by an efficient and accurate algorithm for
 finite-dimensional as well as certain infinite-dimensional systems.  As a
 byproduct, we report there a simple expression for the \CZ bound for
 two-dimensional quantum systems.  Using the numerical method developed in
 Sec.~\ref{Sec:numeric}, we compare the performance of the \CZ bound optimized
 over the parameter $p$ with existing ones in practice in
 Sec.~\ref{Sec:performance}.  Finally, we summarize our findings in
 Sec.~\ref{Sec:conclusion}.

\section{Prior Art}
\label{Sec:prior_art}
 Here we list some of the most important \QSL bounds for quantum state
 evolution under time-independent Hamiltonian discovered so far.
 Collectively, they are the most powerful \QSL{s} for time-independent
 Hamiltonian evolution among those based on a single parameter describing the
 energy of the initial quantum state.

\begin{itemize}
 \item The \MT bound~\cite{MT45} is given by Inequality~\eqref{E:MT_bound}.
 \item Several equivalent forms of the \ML bound have been
  reported~\cite{PHYSCOMP96,ML98,GLM03,HS23,Chau23}.  The one we use here
  is~\cite{Chau23}
  \begin{equation}
   \frac{\tau}{\hbar} \ge \max_{\theta\in [-\cos^{-1}(\sqrt{\epsilon}),0]}
   \frac{\cos\theta - \sqrt{\epsilon}}{\braket{E-E_0} \sin\varphi(\theta)} ,
   \label{E:ML_bound}
  \end{equation}
  where $\braket{E-E_0}$ is the expected energy of the state relative to the
  ground state energy of the Hamiltonian, and $\varphi(\theta)$ is the unique
  root of the equation
  \begin{equation}
   \cos \varphi(\theta) - \cos\theta + [\varphi(\theta)-\theta]
   \sin\varphi(\theta) = 0
   \label{E:f_def}
  \end{equation}
  in the interval $[\max(\pi/2,|\theta|),\pi]$.
 \item The \dualML bound is the \ML-like bound reported by Ness
  \emph{et al.}~\cite{NAS22}.  Using the notation in Eq.~\eqref{E:ML_bound},
  it says that for states with bounded energy spectrum,
  \begin{equation}
   \frac{\tau}{\hbar} \ge \max_{\theta\in [-\cos^{-1}(\sqrt{\epsilon}),0]}
   \frac{\cos\theta - \sqrt{\epsilon}}{\braket{E_{\max}-E}
   \sin\varphi(\theta)}
   \label{E:dual_ML_bound}
  \end{equation}
  where $\braket{E - E_{\max}} = -\braket{E_{\max} - E}$ is the expectation
  energy of the state relative to the maximum eigenenergy of the system.
  Actually, the \dualML bound can be derived from the \ML bound through the
  following duality.  Any initial state $\ket{\Psi(0)}$ can be written in the
  form $\sum_j a_j \ket{E_j}$ with $\ket{E_j}$ being an energy eigenstate with
  energy $E_j$.  Denote its ``energy-reversed'' state $\sum_j a_j \ket{-E_j}$
  by $\ket{\tilde{\Psi}(0)}$.  Clearly, the time-evolved state $\ket{\Psi(t)}$
  equals the reversed-time evolved state $\ket{\tilde{\Psi}(-t)}$.  More
  importantly, the fidelity between $\ket{\Psi(0)}$ and $\ket{\Psi(t)}$ equals
  the fidelity between $\ket{\tilde{\Psi}(0)}$ and $\ket{\tilde{\Psi}(-t)}$,
  which in turn equals the fidelity between $\ket{\tilde{\Psi}(0)}$ and
  $\ket{\tilde{\Psi}(t)}$.  Consequently, the \ML bound for $\ket{\Psi(0)}$
  induces a bound for $\ket{\tilde{\Psi}(0)}$.  And this induced bound is the
  \dualML bound.  Thus, the \dualML bound holds for a slightly more general
  case when the energy spectrum is bounded from above.  Surely, the arguments
  reported here is general and can be used to obtain the corresponding dual
  \QSL from any given \QSL involving fidelity between the initial and final
  states.
 \item The \LZ bound~\cite{LZ05} refers to the family of \QSL{s} in the form
  \begin{equation}
   \frac{\tau}{\hbar} \ge \pi \left[ \frac{1 - \sqrt{\epsilon \left( 1 +
   4p^2/\pi^2 \right)}}{2\braket{(E - E_0)^p}} \right]^\frac{1}{p}
   \label{E:LZ_bound}
  \end{equation}
  for $0 \le p \le 2$ and $0 \le \epsilon \le \pi^2/(\pi^2 + 4p^2)$.  Surely,
  one may consider optimizing \LZ bound by taking supremum of the R.H.S. of
  Inequality~\eqref{E:LZ_bound} over $p$.  We call this the optimized \LZ
  bound.
 \item The \LC bound~\cite{LC13} is the family of inequalities
  \begin{equation}
   \frac{\tau}{\hbar} \ge \max_{E_r} \left( \frac{1 - \sqrt{\epsilon}}{
    A_p \braket{|E - E_r|^p}} \right)^\frac{1}{p} ,
   \label{E:LC_bound}
  \end{equation}
  for $0 < p \le 2$, where $A_p = \sup \{ (1-\cos x)/x^p \colon x > 0 \}$.
  Just like the optimized \LZ bound, we refer to the \LC bound optimized over
  all possible $p$ as the optimized \LC bound.  Note that the special case of
  $p = 1$ is also known as the Chau bound~\cite{Chau13}.  Moreover, by
  adapting from Ref.~\cite{Chau23}, we know that
  \begin{equation}
   A_1 = \sin\varphi(0) .
   \label{E:A_1_value}
  \end{equation}
\end{itemize}

 Actually, all but the \LC bound above can be saturated in the sense that for
 each $p \in (0,2]$ and for any $\epsilon \in [0,1]$, there exists a pair of
 time-independent Hamiltonian and initial pure quantum state that attains the
 bound.  Whereas for the \LC bound, it can be saturated for all $\epsilon \in
 [0,1]$ when $p \le \pi/2$.  However, it is not clear if it can be saturated
 for $p \in (\pi/2,\pi]$~\cite{LC13}.  We give a negative answer to this
 question in Sec.~\ref{Sec:bound} by showing that the \LC bound may not be
 tight when $p \in (\pi/2,\pi]$ for a general $\epsilon$.  Lastly, we remark
 that all these bounds are complementary in the sense that each of the bounds
 cannot be reduced to another if we are not allowed to optimize the \LZ and
 \LC bounds over the parameter $p$.

 Observe that all the above \QSL{s} are in the form of a product of two terms.
 One is a function of a certain energy moment of the quantum state only.  The
 other one is a function of the fidelity and perhaps also the parameter $p$
 only.  We will contrast this feature when we discuss an efficient algorithm
 in computing the \CZ bound in Sec.~\ref{Sec:numeric} below.

 Last but not least, there is a different type of \QSL reported in the
 literature known as the exact \QSL{s}.  In particular, Pati \emph{et al.}
 proved an exact evolution time expression for finite-dimensional quantum
 systems as well as systems evolving under an Hamiltonian $H$ with $H^2 = I$.
 These expressions are valid for time-dependent as well as time-independent
 Hamiltonians under a technical condition to be discussed below.  Moreover,
 such an exact relation becomes an inequality for infinite-dimensional
 systems~\cite{PMSB23}.  So why extending and strengthening other more
 ``conventional'' \QSL{s} if the actual evolution time is know?  Here we
 answer this question by discussing the case of a two-dimensional quantum
 state as the starting point.

 For a two-dimensional initial state $\ket{\Psi(0)}$ evolving under a
 time-independent Hamiltonian, Pati \emph{et al.} proved that evolution time
 $\tau$ satisfies~\cite{PMSB23}
 \begin{equation}
  \frac{\tau}{\hbar} = \frac{\cos^{-1}(\sqrt{\epsilon})}{\llangle \Delta
  H^\text{nc} \rrangle_\tau}
  \label{E:exact_result}
 \end{equation}
 where
 \begin{equation}
  \llangle \Delta H^\text{nc} \rrangle_\tau = -\frac{\hbar}{2\tau} \int_0^\tau
  \frac{d p_t/dt}{\sqrt{p_t (1-p_t)}} \ dt
  \label{E:Delta_H_nc_def}
 \end{equation}
 provided that $p_t \equiv \left| \braket{\Psi(0)|\Psi(t)} \right|^2$ is
 monotonic decreasing for $t \in [0,\tau]$.  Note that even though the
 Hamiltonian is time-independent, computing $\llangle \Delta H^\text{nc}
 \rrangle$ requires integration over time.  More importantly, since $p_t$ is
 the fidelity square between $\ket{\Psi(0)}$ and $\ket{\Psi(t)}$, knowing
 $p_t$ at all times is equivalent to knowing $\ket{\Psi(0)}$ and the
 Hamiltonian.  That is to say, the information needed to determine the
 evolution time is encoded in $p_t$.  Therefore, the $\tau$ in
 Eq.~\eqref{E:exact_result} is an equality because it comes from tracing the
 time evolution of $\ket{\Psi(0)}$.  Actually, explicitly integrating the
 R.H.S. of Eq.~\eqref{E:Delta_H_nc_def}, Pati \emph{et al.}
 obtains~\cite{PMSB23}
\begin{equation}
 \llangle \Delta H^\text{nc} \rrangle_\tau = \frac{\hbar
 \cos^{-1}(\sqrt{p_\tau})}{2} .
 \label{E:Delta_H_nc_2d}
\end{equation}
 Thus, $2\llangle \Delta H^\text{nc} \rrangle_\tau / \hbar$ is the Bures angle
 between $\ket{\Psi(0)}$ and $\ket{\Psi(\tau)}$ in disguise.  This explains
 why a monotonically decreasing $p_t$ is needed to arrive at
 Eq.~\eqref{E:exact_result}.  In this regard, it is more efficient and
 accurate as well as conceptually simpler to compute the evolution time $\tau$
 by finding the smallest non-negative root of the equation $p_t = \epsilon^2$.
 By the same token, an exact \QSL is simply an alternative, possibly
 mathematically pleasing and inspiring, form of expressing the evolution time
 given a complete description of the initial state and evolution Hamiltonian.
 This argument holds for all exact \QSL{s}.

 To conclude, the exact \QSL is equivalent to computing the actual evolution
 time given complete information on the Hamiltonian and the initial state.
 Obviously, its calculation is difficult in general.  In contrast, this type
 of bounds are markedly different from the conventional \QSL{s} that give
 lower time bounds that are relatively easy to calculate based on partial
 information on the system (such as $\braket{E-E_0}$ or $\Delta E$).  Since
 the study of exact \QSL is conceptually different from those of conventional
 \QSL{s}, we do not consider this type of exact \QSL{s} in this paper.

\section{Auxiliary Results}
\label{Sec:auxiliary}
 We need the following auxiliary results whose proofs can be found in the
 Appendix.  Lemma~\ref{Lem:cos_inequality} below generalizes Lemma~1 in
 Ref.~\cite{Chau13} as well as Lemma~1 and Corollary~1 in Ref.~\cite{Chau23}.
 Its proof is based partly on those of Lemma~1 and Corollary~1 in
 Ref.~\cite{Chau23}.  Lemma~\ref{Lem:opt_LC} was first reported in one of the
 authors' capstone project report~\cite{zeng_thesis}.  In addition,
 Corollary~\ref{Cor:varphi_trend} extends Corollary~1 in Ref.~\cite{Chau23}.

\begin{lemma}
 Suppose $(p,\theta)\in {\mathcal R} \equiv (0,1] \times (-\pi,\pi/2] \cup
 (1,2] \times (-\pi,0]$, then
 \begin{equation}
  \cos x \ge \cos\theta - A_{p,\theta} (x-\theta)^p
  \label{E:cos_inequality}
 \end{equation}
 for all $x > \theta$, where
 \begin{equation}
  A_{p,\theta} = \max_{x \in [|\theta|,\pi)} \frac{\cos\theta - \cos
  x}{(x-\theta)^p} > 0 .
  \label{E:A_p_theta_def}
 \end{equation}
 (Note that for the case of $x = \theta$, the R.H.S. of
 Eq.~\eqref{E:A_p_theta_def} is regarded as the limit $x\to \theta^+$.  It
 exists for $(p,\theta) \in {\mathcal R}$).  Moreover, the $x$ that maximizes
 the R.H.S. of Eq.~\eqref{E:A_p_theta_def} is unique.  By writing this unique
 $x$ as $\varphi_p \equiv \varphi_p(\theta)$, then $A_{p,\theta}$ can be
 expressed as
 \begin{equation}
  A_{p,\theta} = \frac{\sin\varphi_p(\theta)}{p [\varphi_p(\theta) -
  \theta]^{p-1}} .
  \label{E:A_p_theta_expression}
 \end{equation}
 In fact, $\varphi_p$ is also the unique solution of the equation
 \begin{equation}
  f_{p,\theta}(\varphi_p) \equiv p (\cos\varphi_p - \cos\theta) + (\varphi_p -
  \theta) \sin\varphi_p = 0
  \label{E:varphi_def}
 \end{equation}
 in the interval $(|\theta|,\pi)$ if $(p,\theta) \ne (2,0)$.  Whereas if
 $(p,\theta) = (2,0)$, then $\varphi_2(0) = 0$, which is the unique solution
 of Eq.~\eqref{E:varphi_def} in the interval $[|\theta|,\pi)$.  Furthermore,
 for $p \in (0,1]$, the maximum in the R.H.S. of Eq.~\eqref{E:A_p_theta_def}
 can be taken over $x\in [\max(|\theta|,\pi/2),\pi)$.  Moreover, in the domain
 $x \in [\theta,+\infty)$, Inequality~\eqref{E:cos_inequality} becomes an
 equality if and only if $x = \theta$ or $\varphi_p(\theta)$.  Lastly,
 $\varphi_p(\theta)$ is a simple root of Eq.~\eqref{E:varphi_def} if
 $(p,\theta) \ne (2,0)$; and it is a root of order~4 otherwise.
 \label{Lem:cos_inequality}
\end{lemma}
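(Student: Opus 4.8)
The plan is to reduce everything to the behavior of the one-variable function $g(x)\equiv(\cos\theta-\cos x)/(x-\theta)^{p}$ for $x>\theta$, since Inequality~\eqref{E:cos_inequality} is exactly the statement $g(x)\le A_{p,\theta}$ on $(\theta,\infty)$ and since the point where $g$ is maximized will turn out to be the claimed $\varphi_p(\theta)$. First I would record the routine facts: on the interior of $[|\theta|,\pi)$ one has $\cos x<\cos|\theta|=\cos\theta$, so $g>0$ there; and one-sided Taylor expansions of $\cos\theta-\cos x$ at $x=\theta$ and at $x=\pi$ show that for every $(p,\theta)\in{\mathcal R}$ the function $g$ extends continuously to the closed interval $[|\theta|,\pi]$, with a finite non-negative value at the left endpoint and value $(1+\cos\theta)/(\pi-\theta)^{p}$ at $x=\pi$. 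Hence $A_{p,\theta}=\max_{[|\theta|,\pi]}g$ exists and is positive.

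The core of the proof --- and the step I expect to be the main obstacle --- is to show that this maximizer is unique and to identify it. A direct computation gives $g'(x)=(x-\theta)^{-p-1}f_{p,\theta}(x)$, so $g'$ has the sign of $f_{p,\theta}(x)$, equivalently of $\Phi(x)-p$ where $\Phi(x)\equiv(x-\theta)\sin x/(\cos\theta-\cos x)$; thus everything comes down to proving that $\Phi(x)=p$ has a single solution in $(|\theta|,\pi)$. Using the half-angle variables $a=(x-\theta)/2$ and $b=(x+\theta)/2$ one computes that $\Phi'$ has the sign of
\[
 N(x)\equiv\sin^{2}a\,(\tfrac12\sin 2b-a)+\sin^{2}b\,(\tfrac12\sin 2a-a).
\]
If $\theta\le 0$ then $b\in(0,\pi/2)$ and $a\ge b>0$, so $\tfrac12\sin 2b\le b\le a$ and $\tfrac12\sin 2a<a$, giving $N<0$: $\Phi$ is strictly decreasing on $(|\theta|,\pi)$, and since $\Phi(|\theta|^{+})\ge p$ (it equals $+\infty$ for $\theta<0$ and $2$ for $\theta=0$) there is a unique crossing --- except for $(p,\theta)=(2,0)$, where $\Phi(0^{+})=2=p$ and $\Phi<2$ afterwards, so $f_{p,\theta}<0$ on $(0,\pi)$ and the maximizer degenerates to $\varphi_2(0)=0$. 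If instead $\theta\in(0,\pi/2]$ (which forces $p\le 1$) I would split the interval at $\pi/2$: on $(\theta,\pi/2)$ the function $H(x)\equiv(x-\theta)\sin x+\cos x-\cos\theta$ satisfies $H(\theta)=0$ and $H'(x)=(x-\theta)\cos x>0$, hence $\Phi(x)>1\ge p$; and on $[\pi/2,\pi)$ one checks $N(\pi/2)=\cos\theta+\theta-\pi/2\le 0$ and $N'<0$, so $N<0$ and $\Phi$ decreases strictly from $\Phi(\pi/2)=(\pi/2-\theta)/\cos\theta>1\ge p$ down to $0$. In every case $g$ increases strictly and then decreases strictly, so its maximizer $\varphi_p$ is unique; for $p\le 1$ this argument also puts $\varphi_p\ge\pi/2$, which is precisely the claim that the maximum in Eq.~\eqref{E:A_p_theta_def} may be restricted to $[\max(|\theta|,\pi/2),\pi)$. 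Plugging $f_{p,\theta}(\varphi_p)=0$ back into $g$ yields Eq.~\eqref{E:A_p_theta_expression}.

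It remains to push the inequality from $[|\theta|,\pi)$ out to all $x>\theta$ and to pin down when it is an equality. For $x\in(\theta,|\theta|)$ --- which occurs only if $\theta<0$ --- one has $\cos x>\cos\theta$ and $A_{p,\theta}(x-\theta)^{p}>0$, so the inequality is strict; for $x\ge\pi$ the continuous extension gives $A_{p,\theta}(\pi-\theta)^{p}\ge 1+\cos\theta$, hence $\cos\theta-A_{p,\theta}(x-\theta)^{p}\le\cos\theta-A_{p,\theta}(\pi-\theta)^{p}\le-1\le\cos x$, and equality would force $x=\pi$ with $g(\pi)=A_{p,\theta}$, impossible because the maximizer is interior; on $[|\theta|,\pi)$ equality is just $g(x)=A_{p,\theta}$, i.e.\ $x=\varphi_p$; and at $x=\theta$ both sides equal $\cos\theta$. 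So equality holds exactly at $x=\theta$ or $x=\varphi_p(\theta)$. Finally, for the order of the root, differentiating $g'(x)=(x-\theta)^{-p-1}f_{p,\theta}(x)$ gives $f_{p,\theta}'(\varphi_p)=(\varphi_p-\theta)^{p+1}g''(\varphi_p)$; since $N<0$ strictly at $\varphi_p$ away from the degenerate combinations, $g''(\varphi_p)<0$ and so $\varphi_p$ is a simple zero of $f_{p,\theta}$, while for $(p,\theta)=(2,0)$ the Taylor expansion $f_{2,0}(x)=2\cos x-2+x\sin x=-x^{4}/12+\Oh(x^{6})$ exhibits a zero of order four. The genuinely delicate ingredient, and the place where the hypothesis $(p,\theta)\in{\mathcal R}$ is really used, is the sign analysis of $N$ (equivalently the monotonicity of $\Phi$) underlying the uniqueness of $\varphi_p$, together with the bookkeeping of the degenerate cases $(p,\theta)=(2,0)$ and $p$ at its largest admissible value, where $\varphi_p$ runs into an endpoint.
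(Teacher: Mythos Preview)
Your argument is correct (modulo a harmless factor of $1/2$ in your stated value of $N(\pi/2)$; the sign is what matters), but the route is genuinely different from the paper's.  Both proofs identify the maximizer of $g(x)=(\cos\theta-\cos x)/(x-\theta)^{p}$ with the zero of $f_{p,\theta}$, but the paper establishes uniqueness of that zero by a direct five-case analysis of $f_{p,\theta}$: it computes $f''_{p,\theta}$, locates the single sign change of $f''$ in $(|\theta|,\pi)$ via the equation $(x-\theta)\tan x=2-p$, and then pushes this information forward to $f'$ and $f$ separately in the regimes $(p,\theta)\in(0,1]\times(-\pi,0)$, $(0,1]\times(0,\pi/2]$, $(1,2]\times(-\pi,0)$, $(0,2)\times\{0\}$, and $\{(2,0)\}$.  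You instead introduce the ratio $\Phi(x)=(x-\theta)\sin x/(\cos\theta-\cos x)$, rewrite it in half-angle variables as $a(\cot a+\cot b)$, and reduce everything to the sign of the single expression $N(x)=\sin^{2}a\,(\tfrac12\sin 2b-a)+\sin^{2}b\,(\tfrac12\sin 2a-a)$, which controls the monotonicity of $\Phi$ and hence the uniqueness of the crossing $\Phi=p$.  Your approach is more compact and treats the cases $\theta\le 0$ almost uniformly (using only $b\le a$ and $\sin y\le y$), at the price of a slightly more delicate computation for $\theta>0$ where you must check $N(\pi/2)\le 0$ and $N'<0$ on $[\pi/2,\pi)$; the paper's approach is more elementary term-by-term but requires tracking more subcases.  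Your derivation of the simple-root property via $f_{p,\theta}'(\varphi_p)=(\varphi_p-\theta)^{p+1}g''(\varphi_p)$ together with $g''(\varphi_p)<0$ (which you correctly tie back to $\Phi'(\varphi_p)<0$) is also cleaner than the paper's route, which extracts simplicity from the monotonicity of $f'_{p,\theta}$ near the root in each case.  Both proofs leave the boundary combination $(p,\theta)=(1,\pi/2)$ slightly implicit; you should flag it alongside $(2,0)$ as a place where $\varphi_p$ collapses to $|\theta|$.
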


 The following corollary follows directly from applying
 Lemma~\ref{Lem:cos_inequality} to $x \ge \theta$ and to $y = -x \ge \theta$
 separately.

\begin{corollary}
 For $p\in (0,1]$ and $\theta\in [-\pi/2,\pi/2]$, we have
 \begin{align}
  \cos x \ge& \cos\theta - \cf_{x \ge \theta} A_{p,\theta}^+ |x - \theta|^p -
   \cf_{x<\theta} A_{p,\theta}^- |\theta - x|^p \nonumber \\
  \equiv& \cos\theta - \cf_{x \ge \theta} A_{p,\theta} |x - \theta|^p -
   \cf_{x<\theta} A_{p,-\theta} |\theta - x|^p \nonumber \\
  =& \cos\theta - \frac{\cf_{x \ge \theta} |x - \theta|^p
   \sin\varphi_p^+(\theta)}{p [\varphi_p^+(\theta) - \theta]^{p-1}} +
   \frac{\cf_{x<\theta} |\theta - x|^p \sin\varphi_p^-(\theta)}{p [\theta -
   \varphi_p^-(\theta)]^{p-1}}
  \label{E:extended_cos_inequality}
 \end{align}
 for all $x\in {\mathbb R}$, with equality holds if and only if $x = \theta$,
 $x = \varphi_p^+(\theta) \equiv \varphi_p(\theta)$ or $x =
 \varphi_p^-(\theta) \equiv -\varphi_p(-\theta)$.  Here
 $\varphi_p(\pm\theta)$ are (unique) solutions of Eq.~\eqref{E:varphi_def} in
 the interval $(|\theta|,\pi)$,
 \begin{equation}
  \cf_{x \ge \theta} =
  \begin{cases}
   1 & \text{if~} x \ge \theta , \\
   0 & \text{otherwise} .
  \end{cases}
  \label{E:char_func_def}
 \end{equation}
 and $\cf_{x < \theta}$ is similarly defined.  Moreover, for the special case
 of $\theta = 0$, Inequality~\eqref{E:extended_cos_inequality} holds also for
 $p\in (1,2]$.  That is to say, for $p\in (0,2]$,
 \begin{equation}
  \cos x \ge 1 - \frac{|x|^p \sin\varphi_p^+(0)}{p \varphi_p^+(0)^{p-1}}
  \label{E:extended_cos_inequality_special}
 \end{equation}
 with equality holds if and only if $x = 0,\varphi_p^\pm(0)$.
 \label{Cor:extended_cos_inequality}
\end{corollary}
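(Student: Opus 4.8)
The plan is to derive Corollary~\ref{Cor:extended_cos_inequality} from Lemma~\ref{Lem:cos_inequality} by applying the latter once on each side of the point $x=\theta$ and then stitching the two one-sided bounds together. First I would treat $x\ge\theta$. Since $p\in(0,1]$ and $\theta\in[-\pi/2,\pi/2]\subseteq(-\pi,\pi/2]$, the pair $(p,\theta)$ lies in $\mathcal R$, so Inequality~\eqref{E:cos_inequality} applies verbatim and gives $\cos x\ge\cos\theta-A_{p,\theta}(x-\theta)^p=\cos\theta-A_{p,\theta}|x-\theta|^p$; writing this coefficient as $A_{p,\theta}^+$ and using Eq.~\eqref{E:A_p_theta_expression} with $\varphi_p^+(\theta):=\varphi_p(\theta)$ reproduces the $\cf_{x\ge\theta}$ piece of the claimed right-hand side. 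By the last assertion of Lemma~\ref{Lem:cos_inequality}, on this half-line equality holds exactly for $x=\theta$ and $x=\varphi_p(\theta)$, the latter lying in $(|\theta|,\pi)\subseteq(\theta,\infty)$.

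Next I would treat $x<\theta$ via the reflection $y:=-x>-\theta$. Because $\theta\in[-\pi/2,\pi/2]$ forces $-\theta$ into the same interval, $(p,-\theta)\in\mathcal R$, and applying Lemma~\ref{Lem:cos_inequality} to the variable $y$ with parameter $-\theta$ yields $\cos y\ge\cos(-\theta)-A_{p,-\theta}(y+\theta)^p$. Evenness of cosine turns this into $\cos x\ge\cos\theta-A_{p,-\theta}(\theta-x)^p=\cos\theta-A_{p,-\theta}|\theta-x|^p$, so that $A_{p,\theta}^-$ is identified with $A_{p,-\theta}$, the middle-line identity of Inequality~\eqref{E:extended_cos_inequality}. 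To reach the third line I would set $\varphi_p^-(\theta):=-\varphi_p(-\theta)$ and note that oddness of sine gives $\sin\varphi_p^-(\theta)=-\sin\varphi_p(-\theta)$ while $\theta-\varphi_p^-(\theta)=\theta+\varphi_p(-\theta)$, so substituting Eq.~\eqref{E:A_p_theta_expression} for $A_{p,-\theta}$ rewrites $-A_{p,-\theta}|\theta-x|^p$ as exactly the $\cf_{x<\theta}$ term of Inequality~\eqref{E:extended_cos_inequality}, the residual minus sign being what produces the displayed $+$. On this half-line equality holds iff $y=-\theta$, i.e. $x=\theta$, or $y=\varphi_p(-\theta)$, i.e. $x=\varphi_p^-(\theta)\in(-\pi,-|\theta|)\subseteq(-\infty,\theta)$. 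Taking the union of the two half-lines, and noting that $x=\theta$ already appears as an equality point of the first bound, yields Inequality~\eqref{E:extended_cos_inequality} for all $x\in\mathbb R$ with equality set $\{\theta,\varphi_p^+(\theta),\varphi_p^-(\theta)\}$.

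For the special case $\theta=0$ the same argument goes through with $\theta$ replaced by $0$, and now Lemma~\ref{Lem:cos_inequality} already covers $(p,0)$ for every $p\in(0,2]$, so the restriction $p\le1$ can be dropped: $\cos x\ge1-A_{p,0}x^p$ for $x\ge0$ and, by evenness, $\cos x\ge1-A_{p,0}|x|^p$ for $x<0$, giving $\cos x\ge1-A_{p,0}|x|^p$ with $A_{p,0}=\sin\varphi_p(0)/[p\,\varphi_p(0)^{p-1}]$ (the limiting value $1/2$ at $(p,\theta)=(2,0)$) and equality iff $x\in\{0,\pm\varphi_p(0)\}=\{0,\varphi_p^\pm(0)\}$; this is Inequality~\eqref{E:extended_cos_inequality_special}. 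I do not anticipate a genuine obstacle, since the whole argument is the reflection $x\mapsto-x$ plus bookkeeping; the one place needing care is the sign tracking when $A_{p,-\theta}$ is re-expressed through $\varphi_p^-(\theta)$, together with checking that the hypothesis $\theta\in[-\pi/2,\pi/2]$ --- rather than the larger domain $(-\pi,\pi/2]$ available in Lemma~\ref{Lem:cos_inequality} --- is precisely what guarantees that both $(p,\theta)$ and $(p,-\theta)$ lie in $\mathcal R$, so that the reflected application is legitimate.
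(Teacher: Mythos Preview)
Your proposal is correct and follows essentially the same route as the paper: the paper states that the corollary ``follows directly from applying Lemma~\ref{Lem:cos_inequality} to $x \ge \theta$ and to $y = -x \ge \theta$ separately,'' which is precisely the reflection argument you carry out, and your careful bookkeeping of the sign in $A_{p,-\theta}$ via $\varphi_p^-(\theta)=-\varphi_p(-\theta)$ and your observation that the symmetric restriction $\theta\in[-\pi/2,\pi/2]$ is exactly what keeps both $(p,\theta)$ and $(p,-\theta)$ in $\mathcal R$ are the details the paper leaves implicit.
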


\begin{figure}[t]
 \centering\includegraphics[width=7.5cm]{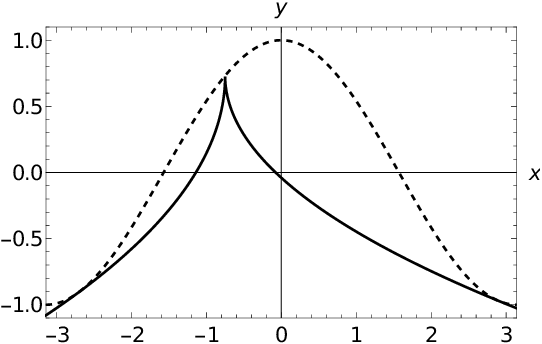}
 \caption{The dashed curve is $y = \cos x$.  The solid curve corresponds to
  the R.H.S. of Inequality~\eqref{E:extended_cos_inequality} for $\theta =
  -0.75$ and $p = 0.5$.  Clearly, the two curves meets at three distinct
  points with two of them being points of tangency.
  \label{F:extended_cos_inequality}}
\end{figure}

\begin{remark}
 Fig.~\ref{F:extended_cos_inequality} plots the L.H.S. and R.H.S. of
 Inequality~\eqref{E:extended_cos_inequality} and highlights its geometric
 meaning.  Clearly, $x = \varphi_p^+(\theta) = \varphi_p(\theta)$ is the
 (unique) point in $[|\theta|,\pi)$ where the curves $\cos x$ and $\cos\theta
 -A_{p,\theta} (x-\theta)^p$ meet tangentially.  Similarly, we may also
 interpret $x = -\varphi_p(-\theta)$ as the (unique) point in
 $(-\pi,-|\theta|]$ where the curves $\cos x$ and $\cos\theta - A_{p,-\theta}
 (\theta - x)^p$ meets tangentially.  Consequently, $\varphi_p^-(\theta) =
 -\varphi_p(-\theta)$ must be the unique solution of Eq.~\eqref{E:varphi_def}
 in the interval $(-\pi,-|\theta|]$ if $(p,\theta) \ne (2,0)$.
 \label{Rem:geometric_meaning}
\end{remark}

 Since Inequality~\eqref{E:extended_cos_inequality} plays a key role in this
 study, we use the notations $\varphi_p^\pm(\theta)$ and $A_{p,\theta}^\pm$
 instead of $\varphi_p(\theta)$, $-\varphi_p(-\theta)$, $A_{p,\theta}$ and
 $A_{p,-\theta}$ from now on except possibly for the case when $\theta = 0$.

\begin{corollary}
 Suppose $0 < p \le 1$ and $\theta \in [-\pi/2,\pi/2]$.  Then,
 $\varphi_p^+(\theta) \pm \theta \in [0,3\pi/2)$ and $\varphi_p^-(\theta) \pm
 \theta \in (-3\pi/2,0]$.  Besides, $\varphi_p^\pm(\theta) - \theta$ are
 strictly decreasing functions of $\theta$ and $\varphi_p^\pm(\theta) +
 \theta$ are strictly increasing functions of $\theta$.
 \label{Cor:varphi_trend}
\end{corollary}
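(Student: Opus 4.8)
The plan is to reduce the whole statement to two monotonicity assertions about $\varphi_p^+ = \varphi_p$ and then prove those by a half-angle change of variables.

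\emph{Step 1 (reductions).} The assertions about $\varphi_p^-$ follow from those about $\varphi_p^+$ via the identity $\varphi_p^-(\theta) = -\varphi_p(-\theta) = -\varphi_p^+(-\theta)$, whose underlying hypothesis $\theta \in [-\pi/2,\pi/2]$ is symmetric under $\theta \mapsto -\theta$. Indeed $\varphi_p^-(\theta) - \theta = -[\varphi_p^+(-\theta) - (-\theta)]$ and $\varphi_p^-(\theta) + \theta = -[\varphi_p^+(-\theta) + (-\theta)]$, so, writing $s = -\theta$, strict monotonicity of $\varphi_p^+(s)\mp s$ in $s$ and the interval $[0,3\pi/2)$ translate by negation into the claims for $\varphi_p^-$ with interval $(-3\pi/2,0]$. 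The inclusions $\varphi_p^+(\theta) \pm \theta \in [0,3\pi/2)$ are themselves immediate from Lemma~\ref{Lem:cos_inequality}: since $\varphi_p(\theta) \in (|\theta|,\pi)$ we get $\varphi_p(\theta) + \theta > |\theta| + \theta \ge 0$ and $\varphi_p(\theta) + \theta < \pi + \theta \le 3\pi/2$, and likewise with $\theta$ replaced by $-\theta$. Thus it remains only to show that, for $0 < p \le 1$ and $\theta \in [-\pi/2,\pi/2]$, the map $\theta \mapsto \varphi_p(\theta) - \theta$ is strictly decreasing and $\theta \mapsto \varphi_p(\theta) + \theta$ is strictly increasing.

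\emph{Step 2 (change of variables).} Set $u = u(\theta) = \tfrac12(\varphi_p(\theta) - \theta)$ and $v = v(\theta) = \tfrac12(\varphi_p(\theta) + \theta)$; by Step~1 both lie in $(0, 3\pi/4) \subset (0,\pi)$, so $\sin u, \sin v > 0$ and the cotangents below are well defined. The two target statements say exactly that $u$ is strictly decreasing and $v$ strictly increasing. Using $\cos\varphi_p - \cos\theta = \cos(u+v) - \cos(v-u) = -2\sin u\sin v$ together with $\varphi_p - \theta = 2u$, the defining equation $f_{p,\theta}(\varphi_p) = 0$ of Eq.~\eqref{E:varphi_def} becomes $u\sin(u+v) = p\sin u\sin v$; dividing by $\sin u\sin v$ and using $\sin(u+v)/(\sin u\sin v) = \cot u + \cot v$ turns it into the symmetric relation $u(\cot u + \cot v) = p$.

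\emph{Step 3 (implicit differentiation).} By Lemma~\ref{Lem:cos_inequality}, $\varphi_p(\theta)$ is a simple root of $f_{p,\theta}$ (recall $p < 2$), so $\partial_\varphi f_{p,\theta} \ne 0$ at $\varphi_p$ and, by the implicit function theorem, $\theta \mapsto \varphi_p(\theta)$ and hence $u, v$ are continuously differentiable. Differentiating $u(\cot u + \cot v) = p$ and $v - u = \theta$ in $\theta$ gives $G_u\,u' + G_v\,v' = 0$ and $v' - u' = 1$, with $G_u = \cot u - u\csc^2 u + \cot v$ and $G_v = -u\csc^2 v < 0$. Eliminating $\cot v$ through the constraint $\cot u + \cot v = p/u$ yields $G_u = p/u - u\csc^2 u = (p\sin^2 u - u^2)/(u\sin^2 u)$; since $0 < \sin u < u$ on $(0,\pi)$ and $0 < p \le 1$, the numerator $p\sin^2 u - u^2 < 0$, so $G_u < 0$ and therefore $G_u + G_v < 0$. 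Solving the linear system gives $u' = u\csc^2 v/(G_u + G_v) < 0$ and $v' = G_u/(G_u + G_v) > 0$ for all $\theta \in [-\pi/2,\pi/2]$, which are precisely the two remaining monotonicity claims. Combined with Step~1 this proves the corollary.

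\emph{Expected obstacle.} The one nonroutine ingredient is recognizing the half-angle substitution $(u,v) = (\tfrac12(\varphi_p - \theta), \tfrac12(\varphi_p + \theta))$, which decouples the equation into a symmetric relation in $\cot u$ and $\cot v$; after that, eliminating $\cot v$ collapses the sign of the decisive derivative $G_u$ to the elementary bound $\sin u < u$. A direct implicit differentiation of Eq.~\eqref{E:varphi_def} in $\varphi_p$ produces ratios of trigonometric polynomials whose signs --- including that of the denominator $\partial_\varphi f_{p,\theta}$ --- are not transparent, so some such device seems unavoidable. One should also make sure to establish the interval bounds $\varphi_p(\theta)\pm\theta \in (0,3\pi/2)$ before Step~3, since these guarantee $u,v \in (0,\pi)$ and hence the legitimacy of the cotangent manipulations.
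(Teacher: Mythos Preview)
Your proof is correct and takes a genuinely different route from the paper's. The paper differentiates Eq.~\eqref{E:varphi_def} directly to obtain the explicit formula~\eqref{E:dvarphi_dtheta} for $d\varphi_p^+/d\theta$, and then has to sign its denominator separately --- this requires introducing $\kappa(p,\theta) = \sin^2\varphi_p^+ - p + p\cos\theta\cos\varphi_p^+$ and differentiating in $p$ (via $\partial\theta/\partial p$) to show $\kappa \le 0$; the two monotonicity claims are then established one at a time by further trigonometric rearrangements. Your half-angle substitution $(u,v) = \bigl(\tfrac12(\varphi_p-\theta),\tfrac12(\varphi_p+\theta)\bigr)$ collapses the defining equation to $u(\cot u + \cot v) = p$, after which both signs $u'<0$ and $v'>0$ fall out simultaneously from the elementary bound $\sin u < u$; this is noticeably cleaner. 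What the paper's approach buys is the explicit derivative~\eqref{E:dvarphi_dtheta}, which is reused later in deriving Eq.~\eqref{E:dA_dtheta} inside the proof of Theorem~\ref{Thrm:CZ_bound_computation}; your argument does not produce that formula as a byproduct. One small caveat: at the single boundary point $(p,\theta)=(1,\pi/2)$ one in fact has $\varphi_1(\pi/2)=\pi/2$, so $u=0$ and your cotangent manipulations degenerate there (the paper treats this point separately too); in your framework this is patched by running Step~3 on $[-\pi/2,\pi/2)$ and invoking continuity of $\varphi_p$ at the endpoint.
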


\begin{remark}
 Ref.~\cite{Chau23} further showed that for $p = 1$, $\varphi_p^+(\theta)$ is
 an increasing function of $\theta$ in the domain $[-\pi/2,\pi/2]$.
 Nevertheless, this is not true for $0 < p < 1$.
 \label{Rem:varphi_trend}
\end{remark}

\begin{lemma}
 Let
 \begin{equation}
  h(x) =
  \begin{cases}
   x\cot(x/2) & \text{for~} x \in (0,\pi), \\
   \\
   \displaystyle \lim_{x\to 0^+} h(x) = 2 & \text{for~} x = 0 .
  \end{cases}
  \label{E:h_def}
 \end{equation}
 Then, $h$ is a strictly decreasing function in $[0,\pi]$.  Moreover, $h
 \colon [0,\pi] \mapsto [0,2]$ is a homeomorphism.
 \label{Lem:x_cotx_div_2}
\end{lemma}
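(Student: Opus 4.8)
The plan is to verify the three ingredients of the claim one at a time: continuity on the closed interval, strict monotonicity via the sign of $h'$, and then the homeomorphism statement as a formal consequence. None of these steps should be hard; the only point requiring a little care is the algebraic reduction of $h'$ to a manifestly signed form.

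First I would confirm that the stated value at the endpoint makes $h$ continuous on $[0,\pi]$. For $x\in(0,\pi)$ write $h(x)=\dfrac{x\cos(x/2)}{\sin(x/2)}=2\cos(x/2)\cdot\dfrac{x/2}{\sin(x/2)}$ and use the standard limit $\lim_{t\to 0^+}t/\sin t=1$ to get $\lim_{x\to 0^+}h(x)=2\cdot 1\cdot 1=2=h(0)$, so $h$ is continuous at $0$; continuity on $(0,\pi]$ is clear since $\cot(x/2)$ is continuous there, with $h(\pi)=\pi\cot(\pi/2)=0$.

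Next, the key step is to differentiate on $(0,\pi)$. From $h(x)=x\cot(x/2)$ one finds
\[
h'(x)=\cot(x/2)-\frac{x}{2}\csc^{2}(x/2)=\frac{2\sin(x/2)\cos(x/2)-x}{2\sin^{2}(x/2)}=\frac{\sin x-x}{2\sin^{2}(x/2)}.
\]
Since $\sin x<x$ for every $x>0$, the numerator is strictly negative and the denominator strictly positive on $(0,\pi)$, so $h'(x)<0$ there. Hence $h$ is strictly decreasing on $(0,\pi)$, and, together with the continuity established above at both endpoints, strictly decreasing on all of $[0,\pi]$.

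Finally, a continuous strictly decreasing function on a compact interval is a bijection onto the closed interval bounded by its endpoint values, here $[h(\pi),h(0)]=[0,2]$; its inverse is automatically continuous (either because a monotone surjection onto an interval is continuous, or because a continuous bijection from a compact space onto a Hausdorff space is a homeomorphism). Therefore $h\colon[0,\pi]\to[0,2]$ is a homeomorphism, completing the argument. I do not anticipate any genuine obstacle — the only mild bookkeeping is the simplification of $h'$ to $(\sin x-x)/(2\sin^{2}(x/2))$, after which strict monotonicity is immediate.
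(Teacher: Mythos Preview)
Your argument is correct and essentially identical to the paper's own proof: both compute $h'(x)=\cot(x/2)-\tfrac{x}{2}\csc^{2}(x/2)$ and reduce its negativity to the elementary inequality $\sin x<x$, then infer the homeomorphism from continuity and strict monotonicity. You spell out the endpoint continuity and the homeomorphism step more carefully than the paper does, but the route is the same.
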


\begin{lemma}
 Let $\epsilon \in [0,1]$.  Then
 \begin{equation}
  \sup_{x \in (0,\pi)} \left[ x \left( \frac{1 - \sqrt{\epsilon}}{2 \sin^2
  \frac{x}{2}} \right)^\frac{1}{x\cot(x/2)} \right] = \cos^{-1}
  (\sqrt{\epsilon}) .
  \label{E:opt_LC_aux}
 \end{equation}
 In fact, this supremum is attained at $x = \cos^{-1}(\sqrt{\epsilon})$.  In
 other words, this is actually a maximum.
 \label{Lem:opt_LC}
\end{lemma}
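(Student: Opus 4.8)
The plan is to take logarithms and reduce the supremum to a one‑variable calculus problem whose critical‑point analysis collapses after a single clean cancellation. Write $s = \sqrt{\epsilon}$; the case $\epsilon = 1$ is immediate (the bracketed quantity is then identically $0 = \cos^{-1} 1$), so assume $\epsilon \in [0,1)$ and $s \in [0,1)$. Using $2\sin^2(x/2) = 1-\cos x$ together with the function $h(x) = x\cot(x/2)$ of Lemma~\ref{Lem:x_cotx_div_2}, set
\[
 F(x) = \ln\!\left[ x\left(\frac{1-s}{1-\cos x}\right)^{1/h(x)}\right] = \ln x + \frac{1}{h(x)}\ln\frac{1-s}{1-\cos x}, \qquad x\in(0,\pi),
\]
and let $x_0 = \cos^{-1}(s)\in(0,\pi/2]$. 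Since $\cos x_0 = s$, the second term vanishes at $x_0$, so $F(x_0) = \ln x_0$; it therefore suffices to prove that $F$ attains its global maximum on $(0,\pi)$ uniquely at $x_0$, for then $\sup_x[\cdots] = e^{F(x_0)} = x_0 = \cos^{-1}(\sqrt\epsilon)$.

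The computation of $F'$ is the crux. Setting $G(x) = \ln(1-s) - \ln(1-\cos x)$, one has $G'(x) = -\sin x/(1-\cos x) = -\cot(x/2)$, hence $G'(x)h(x) = -x\cot^2(x/2) = -h(x)^2/x$. Differentiating $F = \ln x + G/h$ gives $F' = \tfrac1x + (G'h - Gh')/h^2$; the $\tfrac1x$ term and $G'/h = -\tfrac1x$ cancel, leaving
\[
 F'(x) = -\frac{G(x)\,h'(x)}{h(x)^2}.
\]
Now $h$ is strictly decreasing on $(0,\pi)$ by Lemma~\ref{Lem:x_cotx_div_2}, so $h'\le 0$ there and is not identically zero on any subinterval; meanwhile $1-\cos x$ is strictly increasing on $(0,\pi)$, so $G(x) = \ln\frac{1-s}{1-\cos x}$ is $>0$ on $(0,x_0)$, $=0$ at $x_0$, and $<0$ on $(x_0,\pi)$. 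Hence $F'\ge 0$ on $(0,x_0)$ and $F'\le 0$ on $(x_0,\pi)$, and integrating $F'$ over any subinterval avoiding $x_0$ (using that $h$ is genuinely strictly decreasing) shows $F(x) < F(x_0)$ for every $x\ne x_0$. This yields the claim, with the supremum attained uniquely at $x = \cos^{-1}(\sqrt\epsilon)$.

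The one delicate point is invoking $h'$: Lemma~\ref{Lem:x_cotx_div_2} gives only strict monotonicity of $h$, not a sign for $h'$, so one should either first note that $x\cot(x/2)$ is smooth on $(0,\pi)$ with $h'<0$ there, or bypass $h'$ altogether by proving the equivalent inequality $\ln\frac{1-s}{1-\cos x}\le h(x)\ln(x_0/x)$ directly. Writing each side as an integral, $\int_x^{x_0}\cot(t/2)\,dt$ against $\int_x^{x_0} h(x)/t\,dt$, the needed pointwise comparison of integrands is precisely $t\cot(t/2)\le x\cot(x/2)$, i.e.\ $h(t)\le h(x)$ for $t$ between $x$ and $x_0$ — which is just the monotonicity of $h$, with the inequality reversing consistently when $x>x_0$. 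I expect the derivative cancellation to be the main thing to verify carefully; checking the boundary behaviour ($F(0^+)$ finite, $F(\pi^-)=-\infty$, both $<F(x_0)$) and the degenerate case $\epsilon=1$ is routine.
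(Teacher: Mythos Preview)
Your proof is correct and follows essentially the same approach as the paper: take logarithms, differentiate, and identify $x_0=\cos^{-1}(\sqrt\epsilon)$ as the unique critical point and global maximum. Your factorization $F'(x)=-G(x)h'(x)/h(x)^2$ via the cancellation $1/x+G'/h=0$ is a slightly cleaner packaging of exactly the derivative the paper computes; the paper instead writes out the factor $\tfrac{x}{2}\sec^2(x/2)-\tan(x/2)$ explicitly and confirms the maximum by a second-derivative test rather than your sign-change argument.
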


\begin{lemma}
 Let $w \colon [a,b] \to {\mathbb R}$ be a doubly differentiable function,
 $w$ is continuous on $[a,b]$, $w''$ is continuous on $(a,b)$.  Suppose
 further that $w(a) \ge 0$, $w(b) \le 0$, $w'(b) < 0$ and $w''(x) \le 0$ for
 all $x\in [a,b]$.  Then by choosing $b$ as the initial guess, Newton's method
 of finding a root of $w(x) = 0$ in $[a,b]$ always converges.  Surely, the
 convergence is quadratic if this root is simple.
 \label{Lem:Newton_method}
\end{lemma}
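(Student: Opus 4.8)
The plan is to use the concavity of $w$ implied by $w'' \le 0$ to show that Newton's iterates, started at $x_0 = b$, form a monotone sequence that is trapped above the relevant root, and hence converges. First I would locate a root: since $w$ is continuous with $w(a) \ge 0 \ge w(b)$, the intermediate value theorem supplies one, and I set $r = \sup\{x \in [a,b] : w(x) \ge 0\}$. Because $w'' \le 0$, the superlevel set $\{x : w(x) \ge 0\}$ is an interval, so $w \ge 0$ on $[a,r]$, $w(r) = 0$ by continuity, and $w < 0$ on $(r,b]$. If $r = b$, then $w(b) = 0$ and the Newton map $x \mapsto x - w(x)/w'(x)$ is already stationary at $b$, so from now on I assume $r < b$.

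Next I would check that $w' < 0$ on $(r,b]$, which makes the iteration well defined on this interval. Since $w'' \le 0$, $w'$ is non-increasing, and $w'(b) < 0$ is assumed. If $w'(x_0) \ge 0$ for some $x_0 \in (r,b)$, then $w' \ge 0$ on $[r,x_0]$, forcing $w(x_0) \ge w(r) = 0$ and contradicting $w(x_0) < 0$; hence $w' < 0$ throughout $(r,b]$. In particular every iterate that lies in $[r,b]$ has nonvanishing derivative there (or equals $r$, where the iteration is stationary).

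The crux is to prove by induction that $x_n \in (r,b]$ implies $r \le x_{n+1} < x_n$. Strict decrease holds because $w(x_n) < 0$ and $w'(x_n) < 0$ give $x_{n+1} - x_n = -w(x_n)/w'(x_n) < 0$. For the lower bound I would invoke the tangent-line inequality for concave functions, $w(x) \le w(x_n) + w'(x_n)(x - x_n)$; evaluating at $x = r$ gives $0 \le w(x_n) + w'(x_n)(r - x_n)$, and dividing by the negative number $w'(x_n)$ (which reverses the inequality) yields $r - x_n \le -w(x_n)/w'(x_n)$, i.e.\ $x_{n+1} \ge r$. Starting from $x_0 = b \in (r,b]$, the sequence $(x_n)$ is therefore strictly decreasing (until it possibly hits $r$) and bounded below by $r$, hence converges to some $L \in [r,b)$.

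Finally, passing to the limit in $x_{n+1} = x_n - w(x_n)/w'(x_n)$, using continuity of $w$ and of $w'$ at $L$ together with $w'(L) < 0$, gives $w(L) = 0$; since $w < 0$ on $(r,b]$ and $L \ge r$, we conclude $L = r$, which establishes convergence. When $r$ is a simple root, so $w'(r) < 0$ strictly, Taylor's theorem with Lagrange remainder gives $x_{n+1} - r = \tfrac{w''(\xi_n)}{2 w'(x_n)}(x_n - r)^2$ for some $\xi_n$ between $r$ and $x_n$; as $w''$ is continuous (hence bounded) near the interior point $r$ and $w'(x_n) \to w'(r) \ne 0$, this yields $|x_{n+1} - r| \le C (x_n - r)^2$, i.e.\ quadratic convergence. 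I expect the only real subtlety to be the monotone-trapping step, namely getting the bound $x_{n+1} \ge r$ correctly from the tangent-line inequality and the sign reversal; the degenerate case $r = a$ (where $w''$ need not stay bounded near the root) is harmless for our purposes since in the intended application $\varphi_p$ is a root strictly inside $(|\theta|,\pi)$.
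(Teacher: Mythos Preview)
Your proof is correct and follows essentially the same approach as the paper: both exploit concavity via the tangent-line inequality (equivalently, Taylor with remainder) to show that the Newton iterates stay above the root while strictly decreasing, with $w<0$ and $w'<0$ at each iterate. Your version is somewhat more explicit---you define the root as a supremum and prove $w'<0$ on all of $(r,b]$ once, whereas the paper re-establishes $w(x_1)<0$ and $w'(x_1)<0$ by separate contradiction arguments and then recurses---but the underlying idea is the same.
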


\section{The \CZ Bound}
\label{Sec:bound}
 In what follows, we adopt the following notations.  We write a
 time-independent Hamiltonian as a formal sum of its energy eigenvectors,
 namely, $\sum_j E_j \ket{E_j} \bra{E_j}$ with $\ket{E_j}$'s being the energy
 eigenstate of the Hamiltonian.  In addition, we write a normalized initial
 pure state $\ket{\Psi(0)}$ as $\sum_j a_j \ket{E_j}$ with $\sum_j |a_j|^2 =
 1$.  From now on, unless otherwise stated, a Hamiltonian and a quantum state
 in this paper are time-independent and pure, respectively.

\begin{theorem}[\CZ bound]
 The evolution time $\tau$ needed for any quantum state to evolve to another
 state whose fidelity between them is $\epsilon$ under a (time-independent)
 Hamiltonian satisfies the inequality
 \begin{subequations}
 \label{E:CZ_bound_basic}
 \begin{equation}
  \frac{\tau}{\hbar} \ge \frac{T_p(\epsilon)}{\hbar} \equiv
  \max_{\substack{|\theta| \le \cos^{-1}(\sqrt{\epsilon}), \\ E_r \in
  {\mathbb R}}} \left[ \frac{\cos\theta - \sqrt{\epsilon}}{A_{p,\theta}^+
  \eplus{(E-E_r)}{p} + A_{p,\theta}^- \eplus{(E_r-E)}{p}} \right]^\frac{1}{p}
  \label{E:CZ_bound_basic_generic}
 \end{equation}
  for $p \in (0,1]$ and
 \begin{equation}
  \frac{\tau}{\hbar} \ge \frac{T_p(\epsilon)}{\hbar} \equiv \max_{E_r \in
  {\mathbb R}} \left[ \frac{1 - \sqrt{\epsilon}}{A_{p,0} \braket{|E-E_r|^p}}
  \right]^\frac{1}{p}
  \label{E:CZ_bound_basic_p2}
 \end{equation}
 \end{subequations}
 for $p \in (1,2]$ provided that
 \begin{equation}
  \eplus{(E-E_r)}{p} = \sum_{j\colon E_j > E_r} |a_j|^2 |E_j - E_r|^p
  \label{E:Eplus_def}
 \end{equation}
 and the similarly defined $\eplus{(E_r-E)}{p}$ exist.  (Note that
 $\eplus{(E-E_r)}{p}$ and $\eplus{(E_r-E)}{p}$ are the expected measurement
 results of valid observables.  Physically, they relate to the $p$th moment of
 the absolute value of energy and the $p$th signed moment of the absolute
 value of energy of the initial state $\ket{\Psi(0)}$ via the relations
 \begin{subequations}
 \label{E:energy_moments}
 \begin{equation}
  \braket{|E-E_r|^p} = \eplus{(E-E_r)}{p} + \eplus{(E_r-E)}{p}
  \label{E:unsigned_moment}
 \end{equation}
 and
 \begin{equation}
  \braket{\sgn(E-E_r) |E-E_r|^p} = \eplus{(E-E_r)}{p} - \eplus{(E_r-E)}{p} .
  \label{E:signed_moment}
 \end{equation}
 \end{subequations}
 Note further that the denominator of the R.H.S. of
 Inequality~\eqref{E:CZ_bound_basic} vanishes if the initial state is an
 energy eigenstate of the Hamiltonian.  In this case,
 Inequality~\eqref{E:CZ_bound_basic} still holds if one interprets its R.H.S.
 as $0$ if $\epsilon = 1$ and $+\infty$ otherwise.)  The necessary and
 sufficient conditions for the Inequality~\eqref{E:CZ_bound_basic} to be
 saturated by a pair of Hamiltonian and initial quantum state are as follows.
 Let $(\theta_\opt, E_{r,\opt})$ be one of the possible pairs of values of
 $(\theta, E_r)$ that maximize the R.H.S. of
 Inequality~\eqref{E:CZ_bound_basic}.  (We shall show in
 Sec.~\ref{Subsec:numeric_E_r} that $E_{r,\opt}$ is unique if $1 < p \le 2$.
 For the other cases, $E_{r,\opt}$ may not be unique.  Also, we shall show in
 Theorem~\ref{Thrm:CZ_bound_computation} of Sec.~\ref{Subsec:numeric_theta}
 that $\theta_\opt$ is unique for any given $E_{r,\opt}$.)  Let us denote
 $\varphi_p^\pm(\theta_\opt)$ by $\varphi_{p,\opt}^\pm$.  Then, we have the
 followings.
 \begin{itemize}
  \item For $p\in (0,2]$ and $\epsilon = 1$ (and hence $\theta_\opt = 0$),
   Inequality~\eqref{E:CZ_bound_basic} can be saturated by any Hamiltonian and
   initial quantum state pair.
  \item For $p = 2$ and $\epsilon < 1$, Inequality~\eqref{E:CZ_bound_basic}
   cannot be saturated.
  \item For $p \in (0,1]$ and $\theta_\opt \in
   [-\cos^{-1}(\sqrt{\epsilon}),\cos^{-1}(\sqrt{\epsilon})]$,
   Inequality~\eqref{E:CZ_bound_basic} is saturated if and only if the
   (normalized) initial pure quantum state, when expressed in terms of energy
   eigenvectors of the corresponding Hamiltonian, equals
   \begin{equation}
    \ket{\Psi(0)} = a_+ \ket{E_+} + a_r \ket{E_{r,\opt}} + a_- \ket{E_-} .
    \label{E:Psi_optimal_form}
   \end{equation}
   Here $E_- < E_{r,\opt} < E_+$, $(E_+ - E_{r,\opt}) : (E_- - E_{r,\opt}) =
   (\varphi_{p,\opt}^+ - \theta_\opt) : (\varphi_{p,\opt}^- - \theta_\opt)$
   and
   \begin{align}
    &
    \begin{bmatrix}
     |a_+|^2 \\
     |a_r|^2 \\
     |a_-|^2
    \end{bmatrix}
    \nonumber \\
    ={}& \frac{1}{2\sin \frac{\varphi_{p,\opt}^+ - \varphi_{p,\opt}^-}{2} \sin
    \frac{\varphi_{p,\opt}^+ - \theta_\opt}{2} \sin \frac{\theta_\opt -
    \varphi_{p,\opt}^-}{2}}
    \begin{bmatrix}
     \sin \frac{\theta_\opt - \varphi_{p,\opt}^-}{2} \left( \cos
      \frac{\theta_\opt - \varphi_{p,\opt}^-}{2} - \sqrt{\epsilon} \, \cos
      \frac{\theta_\opt + \varphi_{p,\opt}^-}{2} \right) \\
     -\sin \frac{\varphi_{p,\opt}^+ - \varphi_{p,\opt}^-}{2} \left( \cos
      \frac{\varphi_{p,\opt}^+ - \varphi_{p,\opt}^-}{2} - \sqrt{\epsilon}
      \, \cos \frac{\varphi_{p,\opt}^+ + \varphi_{p,\opt}^-}{2} \right) \\
     \sin \frac{\varphi_{p,\opt}^+ - \theta_\opt}{2} \left( \cos
      \frac{\varphi_{p,\opt}^+ - \theta_\opt}{2} - \sqrt{\epsilon} \, \cos
      \frac{\varphi_{p,\opt}^+ + \theta_\opt}{2} \right)
    \end{bmatrix}
    .
    \label{E:prob_ampl_req}
   \end{align}
   (To be more explicit, the corresponding optimizing Hamiltonian is in the
   form $H = E_+ \ket{E_+}\bra{E_+} + E_r \ket{E_r}\bra{E_r} + E_-
   \ket{E_-}\bra{E_-} + H^\perp$ where $H^\perp$ is a Hamiltonian whose
   support is orthogonal to $\ket{E_+}$, $\ket{E_r}$ and $\ket{E_-}$.)
  \item For $p \in (1,2)$ (and hence $\theta_\opt = 0$),
   Inequality~\eqref{E:CZ_bound_basic} is saturated if and only if
   Eqs.~\eqref{E:Psi_optimal_form} and~\eqref{E:prob_ampl_req} hold.
   Furthermore,
   \begin{equation}
    \sqrt{\epsilon} \ge \cos \varphi_{p,\opt}^+ .
    \label{E:prob_ampl_req_varphi}
   \end{equation}
 \end{itemize}
 \label{Thrm:CZ_bound}
\end{theorem}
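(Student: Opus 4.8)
The plan is to establish the inequality~\eqref{E:CZ_bound_basic} by the usual fidelity-amplitude estimate fed with Corollary~\ref{Cor:extended_cos_inequality}, and then to read off the saturation conditions by checking when each step is tight. For the bound, one may assume $\tau>0$, since $\tau=0$ forces $\epsilon=1$, where the claimed right hand side is $0$. Write $\ket{\Psi(0)}=\sum_j a_j\ket{E_j}$, so that $z\equiv\braket{\Psi(0)|\Psi(\tau)}=\sum_j|a_j|^2 e^{-iE_j\tau/\hbar}$ has $|z|=\sqrt\epsilon$. Fix $E_r\in{\mathbb R}$ and $\theta$ with $|\theta|\le\cos^{-1}(\sqrt\epsilon)$ (taking $\theta=0$ when $1<p\le2$), put $x_j=(E_j-E_r)\tau/\hbar+\theta$, and consider $w=e^{i(E_r\tau/\hbar-\theta)}z=\sum_j|a_j|^2 e^{-ix_j}$. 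On the one hand $\operatorname{Re}w\le|w|=\sqrt\epsilon$; on the other hand $\operatorname{Re}w=\sum_j|a_j|^2\cos x_j$, and since $x_j-\theta=(E_j-E_r)\tau/\hbar$ has the same sign as $E_j-E_r$, applying Corollary~\ref{Cor:extended_cos_inequality} term by term (this is exactly where $p\le1$, or $\theta=0$ with $p\le2$, is needed) and summing against $|a_j|^2$ gives $\operatorname{Re}w\ge\cos\theta-(\tau/\hbar)^p\bigl[A_{p,\theta}^+\eplus{(E-E_r)}{p}+A_{p,\theta}^-\eplus{(E_r-E)}{p}\bigr]$. Because $\cos\theta-\sqrt\epsilon\ge0$, combining the two estimates and solving for $\tau/\hbar$ produces the expression being maximised in~\eqref{E:CZ_bound_basic}; taking the supremum over $(\theta,E_r)$ — over $E_r$ only when $p>1$, where $A_{p,0}^\pm=A_{p,0}$ and $\eplus{(E-E_r)}{p}+\eplus{(E_r-E)}{p}=\braket{|E-E_r|^p}$ — completes it, the energy-eigenstate degeneracy being covered by the stated convention.

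For necessity, suppose a pair saturates the bound and fix a maximising pair $(\theta_\opt,E_{r,\opt})$; the statement is phrased relative to such a choice. Then both estimates above are equalities at $(\theta_\opt,E_{r,\opt})$: firstly $w=\sqrt\epsilon$ is real, and secondly, by the equality clause of Corollary~\ref{Cor:extended_cos_inequality}, every $j$ with $a_j\ne0$ has $x_j\in\{\theta_\opt,\varphi_{p,\opt}^+,\varphi_{p,\opt}^-\}$. Translating back, $\ket{\Psi(0)}$ is supported on $\{E_-,E_{r,\opt},E_+\}$ with $E_-<E_{r,\opt}<E_+$ and $(E_+-E_{r,\opt}):(E_--E_{r,\opt})=(\varphi_{p,\opt}^+-\theta_\opt):(\varphi_{p,\opt}^--\theta_\opt)$, i.e.~\eqref{E:Psi_optimal_form}, the three energies being distinct because $\varphi_{p,\opt}^+\in(|\theta_\opt|,\pi)$ and $\varphi_{p,\opt}^-\in(-\pi,-|\theta_\opt|)$ by Lemma~\ref{Lem:cos_inequality} and Remark~\ref{Rem:geometric_meaning}. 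Taking real and imaginary parts of $w=\sqrt\epsilon$ and adjoining $\sum_j|a_j|^2=1$ yields three linear equations for $(|a_+|^2,|a_r|^2,|a_-|^2)$, whose solution — by Cramer's rule, using $\sin(\alpha-\beta)+\sin(\beta-\gamma)+\sin(\gamma-\alpha)=-4\sin\tfrac{\alpha-\beta}{2}\sin\tfrac{\beta-\gamma}{2}\sin\tfrac{\gamma-\alpha}{2}$ for the common denominator and product-to-sum formulas for the numerators — is precisely~\eqref{E:prob_ampl_req}. For $1<p<2$ one has $\theta_\opt=0$ and $\varphi_{p,\opt}^-=-\varphi_{p,\opt}^+$, so~\eqref{E:prob_ampl_req} reduces to $|a_\pm|^2=\frac{1-\sqrt\epsilon}{2(1-\cos\varphi_{p,\opt}^+)}$ and $|a_r|^2=\frac{\sqrt\epsilon-\cos\varphi_{p,\opt}^+}{1-\cos\varphi_{p,\opt}^+}$, the latter being non-negative exactly when~\eqref{E:prob_ampl_req_varphi} holds; hence~\eqref{E:prob_ampl_req_varphi} follows from saturation.

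For sufficiency, run the argument backwards. Given a state of the form~\eqref{E:Psi_optimal_form} with amplitudes~\eqref{E:prob_ampl_req} — which presupposes that~\eqref{E:prob_ampl_req} is a legitimate probability vector, equivalently~\eqref{E:prob_ampl_req_varphi} when $p>1$ — choose the overall energy scale (an additive shift of all energies is physically irrelevant) so that at some time $t^*$ one has $x_j\in\{\theta_\opt,\varphi_{p,\opt}^+,\varphi_{p,\opt}^-\}$. Then~\eqref{E:prob_ampl_req} forces $w=\sqrt\epsilon$ at $t^*$, so $|\braket{\Psi(0)|\Psi(t^*)}|^2=\epsilon$; and because these $x_j$ are precisely the equality points of Corollary~\ref{Cor:extended_cos_inequality}, the chain of the first paragraph becomes an equality at $(\theta_\opt,E_{r,\opt})$, so $t^*/\hbar$ equals the expression being maximised in~\eqref{E:CZ_bound_basic} and is therefore $\le T_p(\epsilon)/\hbar$. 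Since the evolution time $\tau$ satisfies $\tau\le t^*$ (the fidelity already reaches $\epsilon$ at $t^*$) and $\tau/\hbar\ge T_p(\epsilon)/\hbar$ by the bound just proved, the sandwich forces $\tau=t^*=T_p(\epsilon)$, i.e.\ the bound is saturated. The two degenerate cases are immediate: if $\epsilon=1$ then $\theta_\opt=0$, the numerator in~\eqref{E:CZ_bound_basic} vanishes and $\tau=0$ works for any pair; and if $p=2$ with $\epsilon<1$, then $\varphi_2(0)=0$, so $\cos x\ge1-x^2/2$ is tight only at $x=0$, which would force every $x_j=0$, i.e.\ $\ket{\Psi(0)}$ an energy eigenstate and hence $\epsilon=1$, a contradiction, so the bound cannot be saturated there.

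There is no single deep step; the difficulty lies in careful execution. The one genuinely new ingredient is the change of variable $x_j=(E_j-E_r)\tau/\hbar+\theta$, which lets the single parameter $\theta$ interpolate between the \ML-type bounds and the \LC bound while keeping the $p$th energy moments factored out of the estimate. I expect the most laborious part to be the necessity step — solving the $3\times3$ system in closed form and recognising the Cramer-rule output as~\eqref{E:prob_ampl_req} — together with the logical care required because the maximising pair $(\theta_\opt,E_{r,\opt})$ need not be unique when $p\le1$, so the characterisation must be read relative to a fixed such pair; the uniqueness of $E_{r,\opt}$ for $p>1$ and of $\theta_\opt$ for a given $E_{r,\opt}$ is deferred to Sec.~\ref{Sec:numeric}.
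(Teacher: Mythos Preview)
Your derivation of the inequality and of the necessary form of saturating states follows the paper's line closely --- same phase rotation $w=e^{i(E_r\tau/\hbar-\theta)}z$, same termwise application of Corollary~\ref{Cor:extended_cos_inequality}, same $3\times3$ linear system for $(|a_+|^2,|a_r|^2,|a_-|^2)$, and your sandwich argument for sufficiency is sound.

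The substantive gap is in what you dismiss as a presupposition. For $p\in(1,2)$ you correctly observe that legitimacy of the probability vector~\eqref{E:prob_ampl_req} is equivalent to~\eqref{E:prob_ampl_req_varphi}, and the theorem records this as an explicit side condition. But for $p\in(0,1]$ the theorem states the characterisation \emph{without} any such condition; read against the $p\in(1,2)$ bullet, this is an assertion that~\eqref{E:prob_ampl_req} is \emph{always} a legitimate probability vector when $p\le1$. You have not proved this, and it is not automatic: the entries of~\eqref{E:prob_ampl_req} involve differences such as $\cos\tfrac{\theta_\opt-\varphi_{p,\opt}^-}{2}-\sqrt\epsilon\,\cos\tfrac{\theta_\opt+\varphi_{p,\opt}^-}{2}$ whose sign is far from obvious, since $\varphi_{p,\opt}^\pm$ and $\theta_\opt$ can range over wide intervals. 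The paper spends roughly a third of its proof on exactly this point, invoking Corollary~\ref{Cor:varphi_trend} to pin down the ranges of $\varphi_{p,\opt}^\pm\pm\theta_\opt$, and then a monotonicity-in-$\theta$ argument pushed to the boundary $\theta_\crit=\cos^{-1}\sqrt\epsilon$ (Eqs.~\eqref{E:dcoscomplex_dtheta}--\eqref{E:a-_positivity}) to force each entry non-negative, treating $\theta_\opt\ge0$ and $\theta_\opt\le0$ separately. Without this, the ``if'' direction for $p\in(0,1]$ is vacuous and the contrast with $p\in(1,2)$ is unexplained. This, rather than the Cramer-rule bookkeeping, is the laborious part you should have flagged.
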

\begin{proof}
 This proof follows the basic ideas used in Ref.~\cite{Chau13} as well as the
 proof of Theorem~1 in Ref.~\cite{Chau23}.  The part on the saturation of
 Inequality~\eqref{E:CZ_bound_basic} extends the proof of Theorem~2 in
 Ref.~\cite{Chau23}.

 Since the fidelity between two pure states does not decrease under partial
 trace and the R.H.S. of Inequality~\eqref{E:CZ_bound_basic} is a decreasing
 function of fidelity, we only need to consider pure state evolution in the
 extended Hilbert space in our proof~\cite{GLM03}.

 To save space, we only prove the case of $p \in (0,1]$ here.  The case of $p
 \in (1,2]$ can be proven in the same way.  Details are left to interested
 readers.  Using the notations stated at the beginning of this Section and by
 Corollary~\ref{Cor:extended_cos_inequality}, we get
\begin{align}
 \sqrt{\epsilon} ={}& \left| \braket{\Psi(0) | \Psi(\tau)} \right| \ge
  \Re \left( \braket{\Psi(0) | \Psi(\tau)} e^{i E_r \tau / \hbar} e^{-i\theta}
  \right) = \sum_j |a_j|^2 \cos \left[ \frac{(E_j - E_r) \tau}{\hbar} + \theta
  \right] \nonumber \\
 \ge{}& \sum_j |a_j|^2 \left[ \cos\theta - \left( \cf_{E_j \ge E_r}
  A_{p,\theta}^+ + \cf_{E_j < E_r} A_{p,\theta}^- \right) \left| \frac{(E_j -
  E_r) \tau}{\hbar} \right|^p \right] \nonumber \\
 ={}& \cos\theta - \left\{ A_{p,\theta}^+ \eplus{(E-E_r)}{p} + A_{p,\theta}^-
  \eplus{(E_r-E)}{p} \right\} \left( \frac{\tau}{\hbar} \right)^p
 \label{E:fidelity1}
\end{align}
 for any $|\theta| \le \pi/2$ and for any reference energy level $E_r$.  Here
 $\Re(\cdot)$ denotes the real part of its argument.

 We assume that the coefficient of the $(\tau / \hbar)^p$ term is positive.
 (If not, $\ket{\Psi(0)} = \ket{E_r}$ up to an irrelevant phase and hence
 Eq.~\eqref{E:CZ_bound_basic} is trivially true according to the convention
 stated in this Theorem.)  Then, Inequality~\eqref{E:fidelity1} can be
 rewritten as
\begin{equation}
 \left( \frac{\tau}{\hbar} \right)^{p} \ge \frac{\cos\theta -
 \sqrt{\epsilon}}{A_{p,\theta}^+ \eplus{(E-E_r)}{p} + A_{p,\theta}^-
 \eplus{(E_r-E)}{p}} .
 \label{E:CZ_bound_basic_generic_alt}
\end{equation}
 Clearly, Inequality~\eqref{E:CZ_bound_basic_generic_alt} gives meaningful
 constraint on the evolution time $\tau$ if $\cos\theta \ge \sqrt{\epsilon}$.
 Furthermore, the denominator of the R.H.S. of
 Inequality~\eqref{E:CZ_bound_basic_generic_alt} is a continuous non-negative
 function of $\theta$.  Besides, it is unbounded if $E_r \to \pm\infty$.  So,
 for fixed values of $p, \epsilon$ and $\theta$, there is an $E_r$ that
 minimizes the denominator of the R.H.S. of
 Inequality~\eqref{E:CZ_bound_basic_generic_alt} even though such an $E_r$
 need not be unique in general.  Hence, we can maximize the R.H.S. of
 Inequality~\eqref{E:CZ_bound_basic_generic_alt} by minimizing its denominator
 over $E_r$ as well as maximizing over $\theta \in
 [-\cos^{-1}(\sqrt{\epsilon}),\cos^{-1}(\sqrt{\epsilon})]$.  (Obviously, the
 order of minimization and maximization does not affect the final outcome.)
 This gives Inequality~\eqref{E:CZ_bound_basic_generic}.

 We now find the set of all (time-independent) Hamiltonian and initial (pure)
 quantum state pairs that saturate Inequality~\eqref{E:CZ_bound_basic}.  The
 necessary and sufficient conditions for the case of $\epsilon = 1$ are
 obvious as $\tau = 0$.  So, we assume that $\epsilon \in [0,1)$ from now on.
 The necessary and sufficient conditions for the first line of
 Inequality~\eqref{E:fidelity1} to be an equality are that $\Re
 (\braket{\Psi(0) | \Psi(\tau)} e^{i E_r \tau / \hbar} e^{-i\theta}) =
 \sqrt{\epsilon} \ge 0$ and $\Im (\braket{\Psi(0) | \Psi(\tau)} e^{i E_r \tau
 / \hbar} e^{-i\theta}) = 0$, where $\Im(\cdot)$ denotes the imaginary part of
 the argument.  In addition, from Corollary~\ref{Cor:extended_cos_inequality},
 the second line of Inequality~\eqref{E:fidelity1} is an equality if and only
 if $\braket{E_j|\Psi(0)} = 0$ for all $(E_j - E_r)\tau/\hbar \notin \{ 0,
 \varphi_p^{\pm} - \theta \}$.  As a result, the normalized initial state
 $\ket{\Psi(0)}$ must be in the form of Eq.~\eqref{E:Psi_optimal_form} (with
 all the ``opt'' subscripts removed).

 Now we have enough information to discuss the necessary and sufficient
 conditions for saturation of Inequality~\eqref{E:CZ_bound_basic} for the case
 of $p = 2$ and $\epsilon < 1$.  According to Lemma~\ref{Lem:cos_inequality},
 $\varphi_2^\pm(0) = 0$.  Thus, the saturating state $\ket{\Psi(0)}$ in
 Eq.~\eqref{E:Psi_optimal_form} is simply $\ket{E_r}$ up to a global phase.
 Being an energy eigenstate, $\ket{\Psi(0)}$ does not evolve with time.  That
 is why Inequality~\eqref{E:CZ_bound_basic} cannot be saturated in this
 situation.

 Let us continue our saturation condition analysis for the remaining cases.
 From our discussions so far, the normalization of $\ket{\Psi(0)}$ together
 with the requirements on the real and imaginary parts of
 $\braket{\Psi(0)|\Psi(\tau)} e^{i E_r \tau/\hbar} e^{-i\theta}$, we have
 \begin{equation}
  \begin{bmatrix}
   1 & 1 & 1 \\
   \cos\varphi_p^+ & \cos\theta & \cos\varphi_p^- \\
   \sin\varphi_p^+ & \sin\theta & \sin\varphi_p^-
  \end{bmatrix}
  \begin{bmatrix}
   |a_+|^2 \\
   |a_r|^2 \\
   |a_-|^2
  \end{bmatrix}
  =
  \begin{bmatrix}
   1 \\
   \sqrt{\epsilon} \\
   0
  \end{bmatrix}
  .
  \label{E:prob_ampl}
 \end{equation}
 Recall from Lemma~\ref{Lem:cos_inequality} that $-\pi < \varphi_p^- < \theta
 < \varphi_p^+ < \pi$ for $(p,\theta) \in {\mathcal R} \setminus (2,0)$.
 Hence, Eq.~\eqref{E:prob_ampl} has a unique solution.  By solving
 Eq.~\eqref{E:prob_ampl} and then substituting $\theta \in
 [-\cos^{-1}(\sqrt{\epsilon}),\cos^{-1}(\sqrt{\epsilon})]$ by its optimal
 value $\theta_\opt$ obtained in the R.H.S. of
 Inequality~\eqref{E:CZ_bound_basic}, we conclude that $a_+,a_r$ and $a_-$
 must obey Eq.~\eqref{E:prob_ampl_req}.

 Finally, a valid $\ket{\Psi(0)}$ requires $|a_+|^2, |a_r|^2$ and $|a_-|^2$ to
 be non-negative.  In the case of $p\in (0,1]$ and $\theta_\opt \ge 0$,
 Lemma~\ref{Lem:cos_inequality} and~Corollary~\ref{Cor:varphi_trend} imply
 that $\varphi_{p,\opt}^+ - \varphi_{p,\opt}^- \in [\pi,2\pi)$,
 $\varphi_{p,\opt}^+ + \varphi_{p,\opt}^- \in (-\pi/2,\pi/2)$,
 $\varphi_{p,\opt}^+ - \theta_\opt \in [0,\pi)$, $\varphi_{p,\opt}^+ +
 \theta_\opt \in [\pi/2,3\pi/2)$, $\theta_\opt - \varphi_{p,\opt}^- \in
 [\pi/2,3\pi/2)$ and $\theta_\opt + \varphi_{p,\opt}^- \in (-\pi,0]$.  So,
 from Eq.~\eqref{E:prob_ampl_req}, $|a_r|^2 \ge 0$.  As $(\varphi_{p,\opt}^+ +
 \theta_\opt)/2 = (\varphi_{p,\opt}^+ - \theta_\opt)/2 + \theta_\opt$, we
 conclude from the ranges of the arguments of the cosine function that $\cos
 [(\varphi_{p,\opt}^+ - \theta_\opt)/2] \ge \cos [(\varphi_{p,\opt}^+ +
 \theta_\opt)/2]$.  Hence, from Eq.~\eqref{E:prob_ampl_req}, $|a_-|^2 \ge 0$.
 Recall from Corollary~\ref{Cor:varphi_trend}, $\varphi_{p,\opt}^\pm -
 \theta_\opt$ are strictly decreasing functions of $\theta_\opt \in
 (-\cos^{-1}[\sqrt{\epsilon}],\cos^{-1}[\sqrt{\epsilon}]) \subset
 (-\pi/2,\pi/2)$.  Thus,
 \begin{equation}
  \frac{d}{d\theta_\opt} \left( \cos\frac{\theta_\opt - \varphi_{p,\opt}^-}{2}
  - \sqrt{\epsilon} \cos\frac{\theta_\opt + \varphi_{p,\opt}^-}{2} \right) < 0
  .
  \label{E:dcoscomplex_dtheta}
 \end{equation}
 By writing $\theta_\crit \equiv \cos^{-1}(\sqrt{\epsilon}) \in [0,\pi/2]$ and
 $\varphi_{p,\crit}^- \equiv \varphi_p^-(\theta_\crit)$, we know that
 \begin{align}
  \cos \frac{\theta_\opt - \varphi_{p,\opt}^-}{2} - \sqrt{\epsilon} \cos
   \frac{\theta_\opt + \varphi_{p,\opt}^-}{2} &\ge \cos \frac{\theta_\crit -
   \varphi_{p,\crit}^-}{2} - \cos\theta_\crit \cos \frac{\theta_\crit +
   \varphi_{p,\crit}^-}{2} \nonumber \\
  &= \cos \frac{\theta_\crit - \varphi_{p,\crit}^-}{2} - \frac{1}{2} \left(
   \cos \frac{3\theta_\crit + \varphi_{p,\crit}^-}{2} + \cos
   \frac{\theta_\crit - \varphi_{p,\crit}^-}{2} \right) \nonumber \\
  &= \sin\theta_\crit \sin \frac{\theta_\crit + \varphi_{p,\crit}^-}{2} \ge 0
   .
  \label{E:a+_positivity}
 \end{align}
 From Eq.~\eqref{E:prob_ampl_req}, we find that $|a_+|^2 \ge 0$.  Thus,
 $\ket{\Psi(0)}$ is a valid quantum state because $|a_\pm|^2, |a_r|^2 \ge 0$.

 The case of $(p,\theta) \in (0,1] \times [0,\pi/2]$ can be proven in a
 similar way.  Actually, showing $|a_+|^2, |a_r|^2 \ge 0$ is straightforward.
 Proving $|a_-|^2 \ge 0$ is more involved.  Its validity is due to
 \begin{align}
  \cos \frac{\varphi_{p,\opt}^+ - \theta_\opt}{2} - \sqrt{\epsilon} \cos
   \frac{\varphi_{p,\opt}^+ + \theta_\opt}{2} &\ge \cos
   \frac{\varphi_{p,\crit}^+ + \theta_\crit}{2} + \cos\theta_\crit \cos
   \frac{\varphi_{p,\crit}^+ - \theta_\crit}{2} \nonumber \\
  &= \frac{3}{2} \cos \frac{\varphi_{p,\crit}^+ + \theta_\crit}{2} +
   \frac{1}{2} \cos \frac{\varphi_{p,\crit}^+ - 3\theta_\crit}{2} \ge 0 ,
  \label{E:a-_positivity}
 \end{align}
 where $\varphi_{p,\crit}^+ \equiv \varphi_p^+(\theta_\crit)$.  Here we have
 used $\varphi_p^+ - 3\theta_\crit \in [-\pi,\pi)$ to arrive at the last
 inequality.

 The proof for the case of $p \in (1,2)$ follows similar logic though it is
 simpler as $\theta_\opt = 0$.  In fact, $|a_\pm|^2 \ge 0$ is trivially true
 in this case.  And the condition in Inequality~\eqref{E:prob_ampl_req_varphi}
 is due to the requirement that $|a_r|^2 \ge 0$.  Details are left to
 interested readers.
\end{proof}

\begin{remark}
 From the above proof, it is clear that by replacing $\theta_\opt$ with any
 $\theta \in [-\cos^{-1}(\sqrt{\epsilon}),\cos^{-1}(\sqrt{\epsilon})]$ as well
 as by replacing $E_{r,\opt}$ with any $E_r$, we obtain all Hamiltonian and
 initial quantum state pairs that saturate Inequality~\eqref{E:CZ_bound_basic}
 with the optimization over $\theta$ and $E_r$ removed.
 \label{Rem:general_theta}
\end{remark}

\begin{remark}
 By comparing Lemma~\ref{Lem:cos_inequality}, Theorem~\ref{Thrm:CZ_bound} and
 their proofs with the prior works summarized in Sec.~\ref{Sec:prior_art}, we
 observe that
 \begin{itemize}
  \item By putting $p = 1$ together with $E_r = E_0$ and optimizing
   Inequality~\eqref{E:CZ_bound_basic_generic_alt} over $\theta$ alone, we
   obtain the \ML bound.
  \item By putting $p = 1$, $E_r = E_{\max}$ and optimizing
   Inequality~\eqref{E:CZ_bound_basic_generic_alt} over $\theta$ alone, we
   arrive at the \dualML bound.
  \item When $\theta = \tan^{-1}(2p/\pi)$,
   Inequality~\eqref{E:extended_cos_inequality} becomes the key lemma in
   Ref.~\cite{LZ05}.  Hence, Inequality~\eqref{E:CZ_bound_basic_generic_alt}
   reduces to the \LZ bound for this $\theta$ when $E_r = E_0$.
  \item By setting $(p,\theta) = (1,0)$ and by optimizing
   Inequality~\eqref{E:CZ_bound_basic_generic_alt} over $E_r$ only, we get
   back the \Chau bound.
  \item By putting $\theta = 0$ and by optimizing over $E_r$ alone, the \CZ
   bound reduces to the \LC bound.  In this regard,
   Theorem~\ref{Thrm:CZ_bound} gives the necessary and sufficient conditions
   for saturating the \LC bound.  This fills the gap in Ref.~\cite{LC13}.
   Specifically, by going through the proof of Theorem~\ref{Thrm:CZ_bound},
   we find that the \LC bound can be saturated if and only if
   Inequality~\eqref{E:prob_ampl_req_varphi} holds with $\varphi_{p,\opt}^+ =
   \varphi_p^+(0)$.  In particular, the \LC bound can be saturated for all
   $\epsilon \in [0,1]$ if and only if $\varphi_p^+(0) \ge \pi/2$.  According
   to Eq.~\eqref{E:varphi_def}, $p = \varphi_p^+(0) \cot[\varphi_p^+(0)/2]$.
   As $p = \pi/2$ if $\varphi_p^+(0) = \pi/2$, Lemma~\ref{Lem:x_cotx_div_2}
   implies that the \LC bound can be saturated for all $\epsilon$ if and only
   if $p \le \pi/2$.  More importantly, if $p \ge \pi/2$, the \LC bound can
   be saturated if and only if $\epsilon \ge \epsilon_c \equiv
   \cos^2[h^{-1}(p)]$.  Clearly, $\epsilon_c$ is a strictly increasing
   function of $p$ in $[\pi/2,2]$.
  \item For $p = 2$, Inequality~\eqref{E:CZ_bound_basic_p2} agrees with the
   \MT bound up to $\Oh([1-\epsilon]^2)$ in the limit $\epsilon\to 1^-$.
   Moreover, for $\epsilon < 1$, Inequality~\eqref{E:CZ_bound_basic_p2} alone
   is not as powerful as the \MT bound.  Nevertheless,
   Theorem~\ref{Thrm:LC_bound_extension} below means that by optimizing over
   $p$ with $\theta = 0$, we recover the \MT bound.
 \end{itemize}
 These relations can be schematically represented in Fig.~\ref{F:relations}.
 \label{Rem:comparison}
\end{remark}

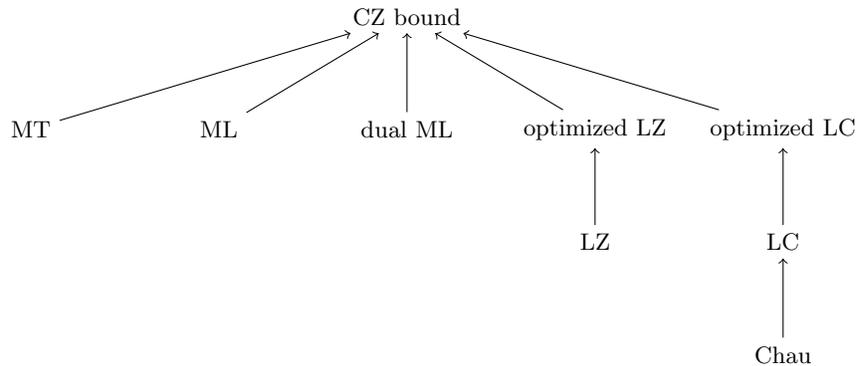
\begin{figure}[t]
 \begin{tikzpicture}[style={sibling distance = 2.5cm}]
  \node {\CZ bound}[<-]
   child { node{\MT} }
   child { node{\ML} }
   child { node{dual \ML} }
   child { node{optimized \LZ}[<-] child { node{\LZ} } }
   child { node{optimized \LC}[<-] child { node{\LC}[<-] child { node{\Chau} }
           } };
 \end{tikzpicture}
 \caption{\label{F:relations}
  Relations between various \QSL{s}.  Here $\text{A} \rightarrow \text{B}$
  means that $\text{A}$ is a special case of $\text{B}$.}
\end{figure}

 A closely related result is the following theorem.  Its special case for $p
 = 2$ was originally proven in the capstone project report of one of the
 authors~\cite{zeng_thesis}.

\begin{theorem}
 The \MT bound can be deduced by optimizing the \LC bound over $p\in (0,2)$.
 In fact,
 \begin{equation}
  \frac{\tau}{\hbar} \ge \max_{E_r}
  \frac{\cos^{-1}(\sqrt{\epsilon})}{\braket{|E-E_r|^p}^\frac{1}{p}}
  \label{E:LC_bound_extension}
 \end{equation}
 for any $p\in (0,2]$ provided that $\varphi_p(0) \le
 \cos^{-1}(\sqrt{\epsilon})$.
 \label{Thrm:LC_bound_extension}
\end{theorem}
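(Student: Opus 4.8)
The plan is to rerun the $\theta=0$ case of the proof of Theorem~\ref{Thrm:CZ_bound}, but to discard the family of supporting lines \eqref{E:extended_cos_inequality_special} in favour of a single sharper one, namely the tangent to $\cos x$ at the points $x=\pm c$, where $c\equiv\cos^{-1}(\sqrt{\epsilon})\in[0,\pi/2]$. Assume $\epsilon<1$ (the case $\epsilon=1$ is trivial). Fix any reference energy $E_r$, put $p_j=|a_j|^2$ and $y_j=(E_j-E_r)\tau/\hbar$; the first step of Inequality~\eqref{E:fidelity1} with $\theta=0$ (multiplying $\braket{\Psi(0)|\Psi(\tau)}$ by the phase $e^{iE_r\tau/\hbar}$ and taking real parts) gives $\cos c=\sqrt{\epsilon}\ge\sum_j p_j\cos y_j$. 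Since $\arccos(\cos y)\in[0,\pi]$ and $\arccos(\cos y)\le|y|$ for every real $y$, replacing each $y_j$ by $\theta_j\equiv\arccos(\cos y_j)$ leaves $\sum_j p_j\cos\theta_j$ unchanged while not increasing $\sum_j p_j|y_j|^p$. Hence it suffices to show that whenever $\theta_j\in[0,\pi]$, $p_j\ge0$, $\sum_j p_j=1$ and $\sum_j p_j\cos\theta_j\le\cos c$, one has $\sum_j p_j\theta_j^p\ge c^p$; granting this, $(\tau/\hbar)^p\braket{|E-E_r|^p}=\sum_j p_j|y_j|^p\ge\sum_j p_j\theta_j^p\ge c^p$, and taking $p$th roots and maximizing over $E_r$ yields Inequality~\eqref{E:LC_bound_extension}. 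The \MT bound is then the $p=2$ case: $\varphi_2(0)=0\le c$ automatically by Lemma~\ref{Lem:cos_inequality}, and $\max_{E_r}\braket{|E-E_r|^2}^{-1/2}=(\Delta E)^{-1}$, so \eqref{E:LC_bound_extension} reads $\tau/\hbar\ge\cos^{-1}(\sqrt{\epsilon})/\Delta E$. Moreover, at the $p$ with $\varphi_p(0)=c$ (equivalently $p=c\cot(c/2)$, by Eq.~\eqref{E:varphi_def} and Lemma~\ref{Lem:x_cotx_div_2}) a direct computation gives $(1-\sqrt{\epsilon})/A_p=c^p$, so there \eqref{E:LC_bound_extension} coincides exactly with the \LC bound; this is the precise sense in which the \MT bound arises by optimizing the \LC bound over $p$, and by Lemma~\ref{Lem:opt_LC} no value of $p$ makes the \LC bound exceed $c\max_{E_r}\braket{|E-E_r|^p}^{-1/p}$.

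The reduced claim follows from the pointwise inequality
\begin{equation*}
 \cos\theta \ge \cos c - \frac{\sin c}{p\,c^{p-1}}\left(|\theta|^p - c^p\right), \qquad \theta\in\mathbb{R},
\end{equation*}
valid whenever $p\in(0,2]$ and $\varphi_p(0)\le c\le\pi/2$: averaging it against $\{p_j\}$ and invoking $\sum_j p_j\cos\theta_j\le\cos c$ gives $\sum_j p_j\theta_j^p\ge c^p$. By evenness and $\arccos(\cos\theta)\le|\theta|$ it is enough to prove it for $\theta\in[0,\pi]$, and the substitution $u=\theta^p$ turns it into $\tilde\Phi(u)\ge0$ on $[0,\pi^p]$, where $\tilde\Phi(u)=\cos(u^{1/p})-\cos c+\frac{\sin c}{p\,c^{p-1}}(u-c^p)$ satisfies $\tilde\Phi(c^p)=\tilde\Phi'(c^p)=0$ (the supporting line is tangent at $u=c^p$). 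A short computation shows $\tilde\Phi''(u)$ has the sign of $\psi(\theta)\equiv(p-1)\sin\theta-\theta\cos\theta$ with $\theta=u^{1/p}$; since $\psi(\theta)/\sin\theta=(p-1)-\theta\cot\theta$ and $\theta\cot\theta$ is strictly decreasing on $(0,\pi)$ from $1$ to $-\infty$, $\psi$ has a unique zero $\theta_1\in[0,\pi)$, negative before and positive after, so $\tilde\Phi$ is concave on $[0,\theta_1^p]$ and convex on $[\theta_1^p,\pi^p]$.

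The hypothesis $\varphi_p(0)\le c$ enters through exactly two facts. First, $\tilde\Phi(0)=1-\cos c-c\sin c/p\ge0$, which is equivalent to $p\ge c\cot(c/2)$; since $p=\varphi_p(0)\cot(\varphi_p(0)/2)$ by Eq.~\eqref{E:varphi_def} and $x\mapsto x\cot(x/2)$ is strictly decreasing (Lemma~\ref{Lem:x_cotx_div_2}), this is exactly $\varphi_p(0)\le c$. Second, $c^p$ lies in the convex piece, i.e. $c>\theta_1$: substituting $p=\varphi_p(0)\cot(\varphi_p(0)/2)$ into $\psi$ and simplifying gives the identity $\psi(\varphi_p(0))=\varphi_p(0)-\sin\varphi_p(0)>0$, so $\varphi_p(0)>\theta_1$ and hence $c\ge\varphi_p(0)>\theta_1$. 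With these in place the rest is routine: on $[\theta_1^p,\pi^p]$, $\tilde\Phi'$ is increasing with a zero at $c^p$, so $\tilde\Phi$ attains there its minimum $\tilde\Phi(c^p)=0$ and stays $\ge0$, in particular $\tilde\Phi(\theta_1^p)\ge0$; on the concave piece $[0,\theta_1^p]$, $\tilde\Phi$ lies above the chord joining its two non-negative endpoint values, hence $\tilde\Phi\ge0$ there too. The same $\psi$-analysis covers $p\in(0,1)$ and $p\in(1,2)$, while $p=2$ degenerates ($\theta_1=0$, $\tilde\Phi$ convex throughout) and is immediate. The only genuinely delicate point is this curvature bookkeeping for $\tilde\Phi$, together with the identity $\psi(\varphi_p(0))=\varphi_p(0)-\sin\varphi_p(0)$ that pins $c$ into the convex region; everything else is the familiar overlap estimate of Theorem~\ref{Thrm:CZ_bound} and an $L^p$-norm monotonicity argument.
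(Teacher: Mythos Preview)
Your proof is correct and takes a genuinely different route from the paper's.

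The paper establishes Inequality~\eqref{E:LC_bound_extension} by an \emph{interpolation argument}: it starts from the \LC bound at each auxiliary exponent $q\in(0,p]$, invokes the power-mean (Jensen) inequality $\braket{|E-E_r|^{p}}^{1/p}\ge\braket{|E-E_r|^q}^{1/q}$ to replace the $q$th moment by the $p$th, and then optimizes the remaining prefactor $[(1-\sqrt{\epsilon})/A_{q,0}]^{1/q}$ over $q$ via Lemma~\ref{Lem:opt_LC}, which identifies the optimum as $\cos^{-1}(\sqrt{\epsilon})$ exactly when the optimizing $\varphi_q=c$ lies in the allowed range $[\varphi_p(0),\pi)$. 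In this approach the hypothesis $\varphi_p(0)\le c$ is what guarantees the optimum of Lemma~\ref{Lem:opt_LC} is attained.

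Your approach bypasses both Jensen and Lemma~\ref{Lem:opt_LC}: you prove directly the single pointwise inequality $\cos\theta\ge\cos c-\frac{\sin c}{p\,c^{p-1}}(|\theta|^p-c^p)$, the tangent to $u\mapsto\cos(u^{1/p})$ at $u=c^p$, and average it. The curvature bookkeeping for $\tilde\Phi$ is clean, and the identity $\psi(\varphi_p(0))=\varphi_p(0)-\sin\varphi_p(0)$ pinning $c$ into the convex region is a nice observation that makes the role of the hypothesis $\varphi_p(0)\le c$ transparent: it is precisely what forces $\tilde\Phi(0)\ge 0$ and $c>\theta_1$. What your method buys is self-containment and a clearer geometric picture; what the paper's method buys is that it \emph{literally} exhibits \eqref{E:LC_bound_extension} as the optimized \LC bound, so the first sentence of the theorem is a by-product rather than a separate remark. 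Your closing paragraph connecting back to \LC (via $(1-\sqrt{\epsilon})/A_{p_*}=c^{p_*}$ at $p_*=c\cot(c/2)$ and the cap from Lemma~\ref{Lem:opt_LC}) is correct but could be stated more explicitly: at $p_*$ the \LC bound equals \eqref{E:LC_bound_extension}, and then $L^p$-monotonicity of norms gives \eqref{E:LC_bound_extension} for all larger $p$, in particular $p=2$.
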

\begin{proof}
 Note that $x^\lambda$ is a concave function for $0 < \lambda \le 1$ in the
 domain $x\ge 0$.  So, Jensen's inequality implies that
 $\braket{|E-E_r|^{q\lambda}} \ge \braket{|E-E_r|^q}^{\lambda}$ for any
 reference energy level $E_r$ provided that $\lambda \in (0,1]$.  Applying
 this inequality to Inequality~\eqref{E:fidelity1} with $\theta = 0$ (which is
 nothing but the \LC bound) and using Eq.~\eqref{E:unsigned_moment}, we get
 \begin{equation}
  \frac{\tau}{\hbar} \ge \left( \frac{1-\sqrt{\epsilon}}{A_{p,0}}
  \right)^\frac{1}{q} \frac{1}{\braket{|E-E_r|^{q\lambda}}^\frac{1}{q\lambda}}
  .
  \label{E:fidelity_extension1}
 \end{equation}
 for all $q\in (0,2)$.

 From Eq.~\eqref{E:varphi_def},
 \begin{equation}
  q = \varphi_q \cot\frac{\varphi_q}{2} \equiv \varphi_q(0) \cot 
  \frac{\varphi_q(0)}{2} .
  \label{E:q_for_theta=0}
 \end{equation}
 (Note that for this range of $q$, $\cot\varphi_q \ne 0$.  So
 Eq.~\eqref{E:q_for_theta=0} is well-defined.)  By
 Lemma~\ref{Lem:x_cotx_div_2}, Eq.~\eqref{E:q_for_theta=0} is a homeomorphism
 from $\varphi_q \in (0,\pi)$ to $q \in (0,2)$.

 By fixing $q\lambda$ to a certain $p \in (0,2]$ as well as by using
 Eqs.~\eqref{E:A_p_theta_expression} and~\eqref{E:q_for_theta=0}, we obtain
 \begin{equation}
  \frac{\tau}{\hbar} \ge \sup_{q\in (0,p)} \left(
  \frac{1-\sqrt{\epsilon}}{A_{q,0}} \right)^\frac{1}{q}
  \frac{1}{\braket{|E-E_r|^p}^\frac{1}{p}} = \sup_{q\in (0,p)} \left[
  \frac{(1-\sqrt{\epsilon}) \varphi_q^q}{2 \sin^2 \frac{\varphi_q}{2}}
  \right]^\frac{1}{\varphi_q \cot(\varphi_q/2)}
  \frac{1}{\braket{|E-E_r|^p}^\frac{1}{p}} .
  \label{E:fidelity_extension2}
 \end{equation}
 From the discussion in the last paragraph, we can replace $q\in (0,p)$ in the
 supremum in the last line of Inequality~\eqref{E:fidelity_extension2} by
 $\varphi_q \in (\varphi_p,\pi) \equiv (\varphi_p(0),\pi)$.  And from
 Lemma~\ref{Lem:opt_LC}, we conclude that the supremum of the R.H.S. of
 Inequality~\eqref{E:fidelity_extension2} is attained at $\varphi_q =
 \cos^{-1}(\sqrt{\epsilon})$ provided that $\varphi_p \le
 \cos^{-1}(\sqrt{\epsilon})$.  In this case, the first factor in the R.H.S. of
 Inequality~\eqref{E:fidelity_extension2} equals $\cos^{-1}(\sqrt{\epsilon})$.
 By maximizing the resultant inequality over $E_r$, we prove
 Inequality~\eqref{E:LC_bound_extension}.  We also mention on passing that in
 case $\varphi_p > \cos^{-1}(\sqrt{\epsilon})$, then from the proof of
 Lemma~\ref{Lem:opt_LC}, the supremum of
 Inequality~\eqref{E:fidelity_extension2} is attained at $\varphi_q =
 \varphi_p$.  In other words, we simply get back the \LC bound.
\end{proof}

\begin{remark}
 Recall from Lemma~\ref{Lem:opt_LC} that the optimized evolution time $\tau$
 in Theorem~\ref{Thrm:LC_bound_extension} is attained when $\varphi_p =
 \cos^{-1}(\sqrt{\epsilon})$.  By compound angle formula, the corresponding
 $p$ equals
 \begin{equation}
  p_\opt = \sqrt{\frac{1+\sqrt{\epsilon}}{1-\sqrt{\epsilon}}}
  \cos^{-1}(\sqrt{\epsilon}) .
  \label{E:opt_LC_p}
 \end{equation}
 \label{Rem:opt_LC_p}
\end{remark}

 An important consequence of Remark~\ref{Rem:comparison} is that we can
 further strengthen the \QSL reported in Theorem~\ref{Thrm:CZ_bound} as
 \begin{equation}
  \tau \ge \max_{p\in (0,2]} T_p(\epsilon) .
  \label{E:optimized_CZ_bound}
 \end{equation}
 By Theorem~\ref{Thrm:LC_bound_extension}, this is stronger than the combined
 \MT, \ML and \dualML bounds studied in Refs.~\cite{LT09,NAS22}.  From
 Remark~\ref{Rem:comparison}, the optimized \CZ bound generalizes the \MT,
 \ML, \dualML, \LZ and \LC bounds.  Nevertheless, the comparison between the
 combined \MT, \ML and \dualML bounds in Refs.~\cite{LT09,NAS22} with the
 optimized \CZ bound is not entirely fair.  This is because the combined bound
 in Refs.~\cite{LT09,NAS22} makes use of three numbers describing the initial
 quantum state, namely, $\braket{E - E_0}$, $\braket{E_{\max} - E}$ and
 $\Delta E$.  In contrast, Inequality~\eqref{E:optimized_CZ_bound} uses an
 infinite number of descriptions of the quantum state, each in the form
 $\eplus{(E-E_r)}{p}$ or $\eplus{(E_r-E)}{p}$.

 Along another line, one can generalize
 Corollary~\ref{Cor:extended_cos_inequality} by using two different values of
 $p$ --- one for $E > E_r$ and the other for $E < E_r$.  By the same argument
 as in the proof of Theorem~\ref{Thrm:CZ_bound}, we have the following \QSL.

\begin{theorem}
 Let $(p,q,\theta) \in (0,1] \times (0,1] \times [-\pi/2,\pi/2] \cup (0,1]
 \times (1,2] \times [0,\pi/2] \cup (1,2] \times (0,1] \times [-\pi/2,0] \cup
 (1,2] \times (1,2] \times \{ 0 \}$, we have
 \begin{equation}
  \sqrt{\epsilon} \ge \cos\theta - \left\{ A_{p,\theta}^+ \eplus{(E-E_r)}{p}
  \left( \frac{\tau}{\hbar} \right)^p + A_{q,\theta}^- \eplus{(E_r-E)}{q}
  \left( \frac{\tau}{\hbar} \right)^q \right\}
  \label{E:fidelity2}
 \end{equation}
 for all reference energy level $E_r$.  In particular, by putting $p = 2q$, we
 arrive at
 \begin{subequations}
 \label{E:CZ_bound_general}
 \begin{equation}
  \frac{\tau}{\hbar} \ge \max_{E_r \in {\mathbb R}} \left( \frac{ \left\{ 4
  A_{2q,\theta}^+ \eplus{(E-E_r)}{2q} (\cos\theta - \sqrt{\epsilon}) +
  (A_{q,\theta}^-)^2 \eplus{(E_r-E)}{q}^2 \right\}^{1/2} - A_{q,\theta}^-
  \eplus{(E_r-E)}{q}}{2 A_{2q,\theta}^+ \eplus{(E-E_r)}{2q}}
  \right)^\frac{1}{q}
  \label{E:CZ_bound_general1}
 \end{equation}
 for $(q,\theta) \in R \equiv (0,1/2] \times [-\pi/2,\pi/2] \cup (1/2,1]
 \times [0,\pi/2]$.  Similarly, the \QSL corresponding to the case of $q = 2p$
 is
 \begin{equation}
  \frac{\tau}{\hbar} \ge \max_{E_r \in {\mathbb R}} \left( \frac{ \left\{ 4
  A_{2p,\theta}^- \eplus{(E_r-E)}{2p} (\cos\theta - \sqrt{\epsilon}) +
  (A_{p,\theta}^+)^2 \eplus{(E-E_r)}{p}^2 \right\}^{1/2} - A_{p,\theta}^+
  \eplus{(E-E_r)}{p}}{2 A_{2p,\theta}^- \eplus{(E_r-E)}{2p}}
  \right)^\frac{1}{p} .
  \label{E:CZ_bound_general2}
 \end{equation}
 \end{subequations}
 \label{Thrm:CZ_bound_asy}
\end{theorem}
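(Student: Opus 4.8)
The plan is to obtain Inequality~\eqref{E:fidelity2} as a ``mixed-exponent'' version of the fidelity argument behind Theorem~\ref{Thrm:CZ_bound}, and then to read off~\eqref{E:CZ_bound_general} from it by solving a quadratic. The first ingredient is a two-sided cosine bound carrying a possibly different exponent on each side of $\theta$. I would get it by invoking Lemma~\ref{Lem:cos_inequality} twice: once with exponent $p$ and angle $\theta$ on the ray $x\ge\theta$, giving $\cos x\ge\cos\theta-A_{p,\theta}(x-\theta)^p$; and once, after the reflection $x\mapsto-x$ that sends $\theta\mapsto-\theta$, with exponent $q$ and angle $-\theta$, giving $\cos x\ge\cos\theta-A_{q,-\theta}(\theta-x)^q$ for $x<\theta$. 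Splicing the two branches yields
\[
 \cos x \ge \cos\theta - \cf_{x\ge\theta}\,A_{p,\theta}^{+}\,|x-\theta|^{p} - \cf_{x<\theta}\,A_{q,\theta}^{-}\,|\theta-x|^{q}\qquad(x\in{\mathbb R}),
\]
with $A_{p,\theta}^{+}=A_{p,\theta}$ and $A_{q,\theta}^{-}=A_{q,-\theta}$ in the established notation. The first invocation of Lemma~\ref{Lem:cos_inequality} is legitimate exactly when $(p,\theta)\in{\mathcal R}$ and the second exactly when $(q,-\theta)\in{\mathcal R}$; intersecting these two membership conditions reproduces precisely the four-piece parameter set in the hypothesis (the piece $p,q\in(0,1]$ allowing $|\theta|\le\pi/2$, each piece with exactly one of $p,q$ in $(1,2]$ confining $\theta$ to the corresponding half of $[-\pi/2,\pi/2]$, and the piece $p,q\in(1,2]$ forcing $\theta=0$).

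From here I would repeat the opening of the proof of Theorem~\ref{Thrm:CZ_bound} verbatim: pass to pure-state evolution in the extended Hilbert space, bound $\sqrt{\epsilon}=\left|\braket{\Psi(0)|\Psi(\tau)}\right|\ge\Re(\braket{\Psi(0)|\Psi(\tau)}e^{iE_r\tau/\hbar}e^{-i\theta})=\sum_j|a_j|^2\cos[(E_j-E_r)\tau/\hbar+\theta]$, use that $\tau/\hbar\ge0$ makes the sign of $(E_j-E_r)\tau/\hbar$ agree with that of $E_j-E_r$, apply the mixed cosine bound termwise, and collect the $E_j>E_r$ and $E_j<E_r$ contributions into $\eplus{(E-E_r)}{p}$ and $\eplus{(E_r-E)}{q}$ respectively. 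This yields Inequality~\eqref{E:fidelity2} for every $E_r$ and every admissible $\theta$, with no further hypotheses. Since Theorem~\ref{Thrm:CZ_bound_asy} makes no saturation claim, the remainder of the proof of Theorem~\ref{Thrm:CZ_bound} is unnecessary here.

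Finally, for $p=2q$ I would set $u=(\tau/\hbar)^{q}\ge0$, turning Inequality~\eqref{E:fidelity2} into $A_{2q,\theta}^{+}\eplus{(E-E_r)}{2q}\,u^{2}+A_{q,\theta}^{-}\eplus{(E_r-E)}{q}\,u-(\cos\theta-\sqrt{\epsilon})\ge0$: a quadratic in $u$ with non-negative linear coefficient, leading coefficient $A_{2q,\theta}^{+}\eplus{(E-E_r)}{2q}$ positive except in the degenerate situations (treated as in Theorem~\ref{Thrm:CZ_bound}) where $\ket{\Psi(0)}$ carries no energy support strictly above $E_r$, and constant term $\cos\theta-\sqrt{\epsilon}\ge0$ on the relevant range $|\theta|\le\cos^{-1}(\sqrt{\epsilon})$. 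Such a quadratic has a single non-negative root, $u$ must be at least that root, and since $t\mapsto t^{1/q}$ is increasing on $[0,\infty)$ this root raised to the power $1/q$ is exactly the bracketed expression in Inequality~\eqref{E:CZ_bound_general1}; maximizing over $E_r$ finishes the $p=2q$ case, while $q=2p$ follows in the identical way after swapping the roles of the $E>E_r$ and $E<E_r$ branches (with the parameter domain transforming accordingly). The only step demanding care is the domain bookkeeping in the first paragraph — keeping the two reflected copies of ${\mathcal R}$ straight so that the sign of $\theta$ is constrained correctly whenever $p$ or $q$ exceeds $1$ — as everything afterwards is just the established fidelity chain together with the quadratic formula.
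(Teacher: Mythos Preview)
Your proposal is correct and follows essentially the same approach as the paper, which simply says the result follows ``by the same argument as in the proof of Theorem~\ref{Thrm:CZ_bound}'' after generalizing Corollary~\ref{Cor:extended_cos_inequality} to allow different exponents on the two sides of $\theta$. You have filled in exactly those details --- the two reflected applications of Lemma~\ref{Lem:cos_inequality}, the domain bookkeeping that recovers the four-piece parameter set, the fidelity chain, and the quadratic-formula extraction of~\eqref{E:CZ_bound_general} --- and nothing is missing.
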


\begin{remark}
 In the proof of the saturation conditions for the \CZ bound in
 Theorem~\ref{Thrm:CZ_bound}, we do not use the property that the same $p$ is
 used for $\varphi_{p,\opt}^+$ and $\varphi_{p,\opt}^-$.  Therefore, simply by
 changing $\varphi_{p,\opt}^-$ to $\varphi_{q,\opt}^-$ in the saturation
 description part of Theorem~\ref{Thrm:CZ_bound}, we obtain the necessary and
 sufficient conditions for the \QSL expressed in
 Inequality~\eqref{E:fidelity2}.
 \label{Rem:CZ_bound_asy_saturation}
\end{remark}

 Although the \QSL reported in Inequality~\eqref{E:fidelity2} of
 Theorem~\ref{Thrm:CZ_bound_asy} is stronger than the \QSL in
 Theorem~\ref{Thrm:CZ_bound}, it has a few drawbacks.  First, an additional
 optimization over $q$ together with a numerical solution of the least
 positive root of Inequality~\eqref{E:fidelity2} with the inequality replaced
 by equality is required.  Second, the absence of a closed form for $\tau$
 makes finding $\tau$ more troublesome and the \QSL conceptually less
 appealing.  These two facts make the evaluation of the optimized version of
 this \QSL very time consuming.  One may consider the special cases such as
 Inequalities~\eqref{E:CZ_bound_general1} and~\eqref{E:CZ_bound_general2} in
 which the $p$ and $q$ are constrained.  Even though they are better than
 Inequality~\eqref{E:fidelity2} both in terms of the simplicity of the
 expression and computational tractability, they still lack the appeal of
 simplicity compared to the \CZ bound.  Moreover, as compared to the
 computational methods for the \CZ bound to be reported in
 Sec.~\ref{Sec:numeric}, it is not likely to obtain a similarly efficient
 method for the \QSL{s} stated in Theorem~\ref{Thrm:CZ_bound_asy}.   This is
 because these methods all start from an explicit expression of $\tau$.  Last
 but not least, we know from Remark~\ref{Rem:CZ_bound_asy_saturation} that
 Inequalities~\eqref{E:CZ_bound_general1} and~\eqref{E:CZ_bound_general2} are
 complementary to the \CZ bound.  This is because by writing in energy
 eigenbasis, those initial states saturating, say,
 Inequality~\eqref{E:CZ_bound_general1} must be in the form $a_+ \ket{E_+} +
 a_r \ket{E_{r,\opt}} + a_- \ket{E_-}$ with $(E_+ - E_{r,\opt}) : (E_{r,\opt}
 - E_-) = (\varphi_{2q,\opt}^+ - \theta_\opt) : (\varphi_{q,\opt}^- -
 \theta_\opt)$.  They are clearly different from those in
 Eq.~\eqref{E:Psi_optimal_form}.

\section{Accurate And Efficient Computation Of The \CZ Bound}
\label{Sec:numeric}
 Although the optimized \CZ bound in the form of
 Inequality~\eqref{E:CZ_bound_basic} in Theorem~\ref{Thrm:CZ_bound} is
 conceptually appealing, using it to calculate the optimized \CZ bound,
 namely, the \CZ bound optimized over $p$, $\theta$ and $E_r$, seems to be
 inefficient for the case of $p \in (0,1]$.  This is because one has to
 optimize over both $\theta$ and $E_r$ for any given $\epsilon$ (together with
 $\eplus{(E-E_r)}{p}$ and $\eplus{(E_r-E)}{p}$).  Furthermore, we have to
 further optimize over $p$ to obtain Inequality~\eqref{E:optimized_CZ_bound}.

 Here we report an accurate and efficient way to compute
 Inequalities~\eqref{E:CZ_bound_basic} given an oracle that accurately
 computes the minimized $p$th moment of the absolute value of energy of a
 quantum state and its corresponding $p$th signed moment of the absolute
 value of energy, or more precisely, $\braket{|E-E_{r,\opt}|^p} \equiv
 \min_{E_r\in {\mathbb R}} \braket{|E-E_r|^p}$ and $\braket{\sgn(E-E_{r,\opt})
 |E-E_{r,\opt}|^p}$.  And from Eq.~\eqref{E:energy_moments}, this is
 equivalent to having an oracle that accurately computes
 $\eplus{(E-E_{r,\opt})}{p}$ and $\eplus{(E_{r,\opt}-E)}{p}$.  The key
 observation is that $E_{r,\opt}$, the optimized value of $E_r$, is $\theta$
 independent.  So, we may first compute $E_{r,\opt}$.  Then, we evaluate
 $\eplus{(E-E_{r,\opt})}{p}$ and $\eplus{(E_{r,\opt}-E}{p}$ together with the
 optimized values of $\theta$ given $E_{r,\opt}$.  More explicitly,
 Inequality~\eqref{E:CZ_bound_basic_generic} can be rewritten as
\begin{align}
 \left( \frac{\tau}{\hbar} \right)^p &\ge \max_{|\theta| \le
  \cos^{-1}(\sqrt{\epsilon})} \left\{ \frac{\cos\theta -
  \sqrt{\epsilon}}{\left[ \mu^+(E_{r,\opt}) A_{p,\theta}^+ + \mu^-(E_{r,\opt})
  A_{p,\theta}^- \right] \displaystyle \min_{E_r \in {\mathbb R}}
  \braket{|E-E_r|^p}} \right\} \nonumber \\
 &= \max_{|\theta| \le \cos^{-1}(\sqrt{\epsilon})} \left\{ \frac{\cos\theta -
  \sqrt{\epsilon}}{\left[ \mu^+(E_{r,\opt}) A_{p,\theta}^+ + \mu^-(E_{r,\opt})
  A_{p,\theta}^- \right] \braket{|E-E_{r,\opt}|^p}} \right\} ,
 \label{E:CZ_bound_basic_generic_improved}
\end{align}
 where
\begin{equation}
 \eplus{(E-E_{r,\opt})}{p} : \eplus{(E_{r,\opt}-E)}{p} = \mu^+(E_{r,\opt}) :
 \mu^-(E_{r,\opt})
 \label{E:mu_pm_def}
\end{equation}
 with $\mu^+(E_{r,\opt}) + \mu^-(E_{r,\opt}) = 1$.

 Here we remark that Inequality~\eqref{E:CZ_bound_basic_generic_improved} is
 in almost the same form as all the \QSL{s} mentioned in
 Sec.~\ref{Sec:prior_art}.  More precisely, it is a product of a term
 depending only on the $p$th moment of the absolute value of energy of the
 quantum state optimized over the reference energy level with a term depending
 on $\epsilon$ and $\mu^\pm(E_{r,\opt})$.  In other words, the second term can
 be pre-computed and reused if $\mu^\pm(E_{r,\opt})$ are known in advance.

\subsection{On The Computation Of $\boldsymbol{\braket{|E-E_r|^p}}$ And
 $\boldsymbol{E_{r,\opt}}$}
\label{Subsec:numeric_E_r}
 Here we justify our assumption on the existence of an oracle to evaluate
 $\eplus{(E-E_{r,\opt})}{p}$ and $\eplus{(E_{r,\opt}-E)}{p}$ by showing that
 it can be replaced by computationally efficient and accurate algorithms in a
 number of useful situations.  In all cases, the idea is to first find
 $E_{r,\opt}$ and use it to obtain $\eplus{(E-E_{r,\opt})}{p}$ and
 $\eplus{(E_{r,\opt}-E)}{p}$ (or equivalently $\braket{|E-E_{r,\opt}|^p}$ and
 $\mu^\pm(E_{r,\opt})$).  The first case is when the normalized initial
 quantum state can be expressed in the form
\begin{equation}
 \ket{\Psi(0)} = \sum_{j = 1}^n a_j \ket{E_j}
 \label{E:finite-dim_Psi_form}
\end{equation}
 where $\{ E_j \}$ is a strictly increasing sequence of real numbers and
 $|a_j| \ne 0$ for all $j$.  We assume that all $a_j$'s and $E_j$'s are known.
 (Note that this assumption is not as restrictive as it appears.  For
 Hamiltonian $H$ and initial quantum state $\ket{\Psi(0)}$ express in another
 basis, $E_j$'s can be found, say, by Householder transformation and QR
 algorithm --- both are fast and numerically stable~\cite{Num}.  As for
 $a_j$'s, they are the projection of $\ket{\Psi(0)}$ on invariant subspaces of
 $H$.  Since finding invariant subspaces of a Hermitian matrix is efficient
 and numerically stable~\cite{CD79,BGK79}, so is computing $a_j$'s.)

 Note that $x^p$ is a strictly convex (concave) function of $x \ge 0$ if $1 <
 p \le 2$ ($0 < p < 1$).  And when $p = 1$, this function is both convex and
 concave.  Since the sum of convex (concave) functions is convex (concave),
 $\braket{|E-E_r|^p}$ is a piecewise convex (concave) function of $E_r$ for
 $1 \le p \le 2$ ($0 < p \le 1$).  Consequently, for $p > 1$,
 $\braket{|E-E_r|^p}$ attains its minimum at a certain $E_{r,\opt} \in
 (E_j,E_{j+1})$ for some $j = 1,2,\ldots,n-1$ where $E_{r,\opt}$ is the
 solution of the equation
\begin{equation}
 \eplus{(E-E_r)}{p-1} - \eplus{(E_r-E)}{p-1} = 0
 \label{E:E_r_opt_condition}
\end{equation}
 for $p \ge 1$.  It is easy to see that the L.H.S. of
 Eq.~\eqref{E:E_r_opt_condition} is strictly decreasing and continuous for $p
 > 1$.  Besides, $\eplus{(E_1-E)}{p-1} = 0 = \eplus{(E-E_n)}{p-1}$.  So,
 $E_{r,\opt}$ exists and is unique.  Furthermore, it can be evaluated
 accurately and efficiently via bisection method.  Nevertheless, as depicted
 in Fig.~\ref{F:E_r_curve}, the L.H.S. of Eq.~\eqref{E:E_r_opt_condition}
 makes sharp turns around and has infinite slopes at $E_j$'s.  (The latter can
 be proven by differentiating the L.H.S. of Eq.~\eqref{E:E_r_opt_condition}
 with respect to $E_r$.)  This greatly lowers the efficiency of bisection
 method if $E_{r,\opt}$ is near one of the $E_j$'s.  Here we recommend first
 using binary search to efficiently limit $E_{r,\opt}$ to one of the intervals
 $(E_j,E_{j+1})$.  Then, since the second derivative of the L.H.S. of
 Eq.~\eqref{E:E_r_opt_condition} with respect to $E_r$ changes sign exactly
 once in this interval, Lemma~\ref{Lem:Newton_method} implies that using
 either $E_j + \delta$ or $E_{j+1} - \delta$ for a sufficiently small $\delta
 > 0$ as initial guess, Newton's method is guaranteed to converge to
 $E_{r,\opt}$.  Indeed, this is what we have observed in our numerical
 experiments.

\begin{figure}[t]
 \centering\includegraphics[width=7.5cm]{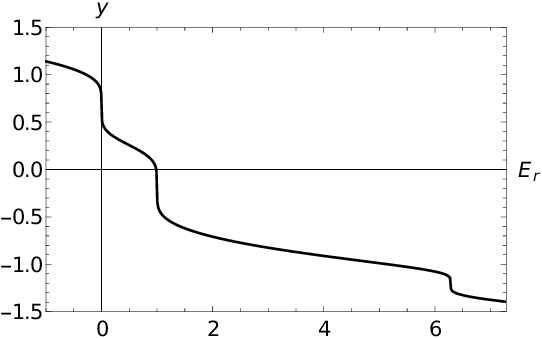}
 \caption{The L.H.S. of Eq.~\eqref{E:E_r_opt_condition} as a function of $E_r$
  for $p = 1.2$ and $\ket{\Psi(0)} = \sqrt{0.4}\ket{0} + \sqrt{0.45} \ket{1} +
  \sqrt{0.15} \ket{2\pi}$, namely, one of the initial states used in
  Sec.~\ref{Sec:performance} to study the performance of various \QSL{s}.  The
  shape of this curve is typical among finite-dimensional $\ket{\Psi(0)}$'s.
  Notice that the curve makes sharp turns with infinite slope at each $E_j$.
  \label{F:E_r_curve}}
\end{figure}

 As for the case of $p = 1$, the L.H.S. and R.H.S. of
 Eq.~\eqref{E:E_r_opt_condition} are only piecewise continuous in general.
 Thus, standard calculus technique does not work.  Fortunately,
 Eq.~\eqref{E:E_r_opt_condition} means that $E_{r,\opt}$ can be chosen to be
 the median energy of the state~\cite{Chau13}.  That is to say, $E_{r,\opt}$
 obeys $\sum_{j \colon E_j \ge E_{r,\opt}} |a_j|^2 \ge 1/2$ and $\sum_{j
 \colon E_j \le E_{r,\opt}} |a_j|^2 \ge 1/2$.  Note that in this case, the
 values of $\mu^\pm$ are unique even though $E_{r,\opt}$ need not be unique.

 The case of $0 < p < 1$ is computationally more involved.  Since
 $\braket{|E-E_r|^p}$ is a piecewise concave function, its minimum is attained
 at $E_r = E_j$ for some $j = 1,2,\ldots,n$.  That is,
\begin{equation}
 E_{r,\opt} \in \{ E_j \colon \braket{|E-E_j|^p} \le \braket{|E-E_\ell|^p}
 \ \forall \, \ell = 1,2,\ldots,n \} .
 \label{E:concave_E_r_opt_condition}
\end{equation}
 Finding $E_{r,\opt}$ is accurate whose time complexity scales as $\Oh(n^2)$.
 Unfortunately, there is no way to further reduce the computing time using a
 deterministic algorithm for $\braket{|E-E_j|^p}$ shows no trend in general.

 Another case of importance is when $H$ has a bounded and continuous energy
 spectrum, $\eplus{(E-E_r)}{p}$ and $\eplus{(E_r-E)}{p}$ are differentiable
 functions of $E_r$ with $p \in (0,2]$.  Moreover, $\eplus{(E-E_r)}{p-1}$ and
 $\eplus{(E_r-E)}{p-1}$ are continuous functions of $E_r$.  So the above four
 functions can be accurately and efficiently computed via any standard
 numerical integrator of choice.  Using the same argument in this Subsection
 plus differentiation under the integral sign, Eq.~\eqref{E:E_r_opt_condition}
 holds for any $p \in (0,2]$.  The only difference is that the L.H.S. and
 R.H.S. of Eq.~\eqref{E:E_r_opt_condition} are strictly decreasing
 (increasing) and strictly increasing (decreasing) functions of $E_r$ for $p <
 1$ ($p \ge 1$), respectively.  In both cases, Newton's method must converge
 to unique $E_{r,\opt}$ according to Lemma~\ref{Lem:Newton_method}.

\subsection{Computation Of $\boldsymbol{\varphi_p}$ And
 $\boldsymbol{\theta_\opt}$}
\label{Subsec:numeric_theta}
 We now study how to efficiently compute $\theta_\opt$ for fixed $\mu^\pm
 \equiv \mu^\pm(E_{r,\opt})$.  To do so, we need a way to calculate the
 optimal $\theta$ that maximizes the R.H.S. of
 Inequality~\eqref{E:CZ_bound_basic_generic_improved}.

\begin{theorem}
 Let $0 < p \le 1$.  Suppose $\braket{|E-E_r|^p} \ne 0$, and denote
 $\eplus{(E-E_r)}{p} : \eplus{(E_r-E)}{p} = \mu^+ : \mu^-$ with $\mu^+ + \mu^-
 = 1$.  Then
 \begin{equation}
  \max_{|\theta|\le \cos^{-1}(\sqrt{\epsilon})} \frac{\cos\theta -
  \sqrt{\epsilon}}{\mu^+ A_{p,\theta}^+ + \mu^- A_{p,\theta}^-}
   \label{E:max_theta_def}
 \end{equation}
 attains its maximum when $\theta$ is the unique solution of
 \begin{equation}
  s_\epsilon(\theta) \equiv \mu^+ s_\epsilon^+(\theta) + \mu^- 
  s_\epsilon^-(\theta) \equiv \frac{\mu^+ \sin\frac{\varphi_p^+ - \theta}{2}
  \left( \cos\frac{\varphi_p^+ - \theta}{2} - \sqrt{\epsilon}
  \cos\frac{\varphi_p^+ + \theta}{2} \right)}{(\varphi_p^+ - \theta)^p} +
  \frac{\mu^-\sin\frac{\varphi_p^- - \theta}{2} \left( \cos\frac{\varphi_p^- -
  \theta}{2} - \sqrt{\epsilon} \cos\frac{\varphi_p^- + \theta}{2}
  \right)}{(\theta - \varphi_p^-)^p} = 0
  \label{E:max_theta_alg}
 \end{equation}
 in the interval $[-\theta_\crit,\theta_\crit]$.  Here, $\theta_\crit$ is the
 unique solution of
 \begin{equation}
  r_\epsilon^-(\theta) \equiv \cos \left( \frac{\varphi_p^- - \theta}{2}
  \right) - \sqrt{\epsilon} \cos \left( \frac{\varphi_p^- + \theta}{2}
  \right) = 0
  \label{E:theta_crit_def}
 \end{equation}
 in the interval $[0,\cos^{-1}(\sqrt{\epsilon})]$.
 \label{Thrm:CZ_bound_computation}
\end{theorem}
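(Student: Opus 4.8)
The plan is to study the one–variable function
\[
 g(\theta)\equiv\frac{N(\theta)}{D(\theta)}=\frac{\cos\theta-\sqrt\epsilon}{\mu^+A_{p,\theta}^++\mu^-A_{p,\theta}^-},\qquad\theta\in[-\theta_{\max},\theta_{\max}],\ \theta_{\max}\equiv\cos^{-1}(\sqrt\epsilon),
\]
whose maximum is the quantity in \eqref{E:max_theta_def}. The case $\epsilon=1$ is trivial ($\theta_{\max}=0$, $g\equiv0$), so assume $\epsilon<1$. Because $A_{p,\theta}^\pm>0$ by Lemma~\ref{Lem:cos_inequality}, $\mu^++\mu^-=1$ with $\mu^\pm\ge0$ (the hypothesis $\braket{|E-E_r|^p}\ne0$ making the $\mu^\pm$ meaningful), the denominator $D$ is strictly positive, while $N>0$ on the open interval and $N=0$ at $\pm\theta_{\max}$. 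Hence $g$ is continuous and nonnegative on a compact interval, strictly positive inside, and vanishes at the endpoints, so it attains a positive maximum at an interior critical point. The proof then has three parts: (i) derive the critical–point equation $s_\epsilon(\theta)=0$; (ii) show $\theta_\crit$ is well defined and that every critical point lies in $[-\theta_\crit,\theta_\crit]$; (iii) show that critical point is unique.

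\emph{(i)}~First note that by Lemma~\ref{Lem:cos_inequality} each $\varphi_p^\pm(\theta)$ is a \emph{simple} root of the smooth function $f_{p,\theta}$ in \eqref{E:varphi_def} (which is, up to a positive factor, the first–order condition $\partial_x[(\cos\theta-\cos x)(x-\theta)^{-p}]=0$), so the implicit function theorem makes $\theta\mapsto\varphi_p^\pm(\theta)$ — hence $A_{p,\theta}^\pm$ and $g$ — smooth near $[-\theta_{\max},\theta_{\max}]$ (for $A^-$ use $A_{p,\theta}^-=A_{p,-\theta}$). Since $\varphi_p^+(\theta)$ is the interior maximizer in \eqref{E:A_p_theta_def}, the envelope theorem gives $\tfrac{d}{d\theta}A_{p,\theta}^+=\partial_\theta[(\cos\theta-\cos x)(x-\theta)^{-p}]_{x=\varphi_p^+(\theta)}$, which on eliminating $\cos\varphi_p^+$ via \eqref{E:varphi_def} simplifies to $(\sin\varphi_p^+-\sin\theta)/(\varphi_p^+-\theta)^p$, and likewise $\tfrac{d}{d\theta}A_{p,\theta}^-=(\sin\varphi_p^--\sin\theta)/(\theta-\varphi_p^-)^p$. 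Feeding these into $g'=(N'D-ND')/D^2$ and collapsing the numerator with $\sin\varphi-\sin\theta=2\cos\tfrac{\varphi+\theta}{2}\sin\tfrac{\varphi-\theta}{2}$ and $\cos\theta-\cos\varphi=2\sin\tfrac{\varphi+\theta}{2}\sin\tfrac{\varphi-\theta}{2}$, one finds that $-\sin\theta\,A_{p,\theta}^\pm-(\cos\theta-\sqrt\epsilon)\tfrac{d}{d\theta}A_{p,\theta}^\pm=-2s_\epsilon^\pm(\theta)$ with $s_\epsilon^\pm$ exactly as in \eqref{E:max_theta_alg}, so $g'(\theta)=-2s_\epsilon(\theta)/D(\theta)^2$. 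Thus the critical points of $g$ are precisely the zeros of $s_\epsilon$.

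\emph{(ii)}~Write $s_\epsilon^\pm(\theta)=\big[\sin\tfrac{\varphi_p^\pm-\theta}{2}\,|\varphi_p^\pm-\theta|^{-p}\big]r_\epsilon^\pm(\theta)$ with $r_\epsilon^-$ as in \eqref{E:theta_crit_def} and $r_\epsilon^+(\theta)\equiv\cos\tfrac{\varphi_p^+-\theta}{2}-\sqrt\epsilon\cos\tfrac{\varphi_p^++\theta}{2}$; by Corollary~\ref{Cor:varphi_trend} the bracket is $>0$ for $s_\epsilon^+$ and $<0$ for $s_\epsilon^-$ on $[-\theta_{\max},\theta_{\max}]$. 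Corollary~\ref{Cor:varphi_trend} also shows $r_\epsilon^-$ is strictly decreasing: with $u=\theta-\varphi_p^-$, $v=\theta+\varphi_p^-$ (both strictly increasing in $\theta$, $u\in[0,3\pi/2)$, $v\in(-3\pi/2,0]$), $r_\epsilon^-=\cos\tfrac u2-\sqrt\epsilon\cos\tfrac{|v|}{2}$ has both summands nonincreasing in $\theta$, the first strictly. Since $r_\epsilon^-(0)=(1-\sqrt\epsilon)\cos\tfrac{\varphi_p^-(0)}{2}>0$ while a short computation (using the range of $\varphi_p^-$ from Remark~\ref{Rem:geometric_meaning}) gives $r_\epsilon^-(\theta_{\max})=\sin\theta_{\max}\sin\tfrac{\theta_{\max}+\varphi_p^-(\theta_{\max})}{2}<0$, equation \eqref{E:theta_crit_def} has a unique root $\theta_\crit\in(0,\theta_{\max})$. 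The identity $\varphi_p^+(-\theta)=-\varphi_p^-(\theta)$ yields $r_\epsilon^+(-\theta)=r_\epsilon^-(\theta)$, so $r_\epsilon^+$ is strictly increasing with its unique zero at $-\theta_\crit$. Combining: $s_\epsilon^+<0$ on $[-\theta_{\max},-\theta_\crit)$ and $>0$ afterwards; $s_\epsilon^-<0$ on $[-\theta_{\max},\theta_\crit)$ and $>0$ afterwards; with $\mu^\pm\ge0$ this forces $s_\epsilon<0$ on $[-\theta_{\max},-\theta_\crit)$ and $s_\epsilon>0$ on $(\theta_\crit,\theta_{\max}]$. Hence $g$ strictly increases on $[-\theta_{\max},-\theta_\crit]$ and strictly decreases on $[\theta_\crit,\theta_{\max}]$, and every zero of $s_\epsilon$, in particular the maximizer $\theta_\opt$, lies in $[-\theta_\crit,\theta_\crit]$. (If $\mu^+=0$ or $\mu^-=0$ the zero is $-\theta_\crit$ or $\theta_\crit$; otherwise $s_\epsilon(-\theta_\crit)<0<s_\epsilon(\theta_\crit)$, so a zero exists strictly inside.)

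\emph{(iii)}~The remaining and main obstacle is to show $s_\epsilon$ has exactly one zero in $[-\theta_\crit,\theta_\crit]$. I would prove it is strictly increasing there. At a zero $\theta_0\in(-\theta_\crit,\theta_\crit)$ one has $\mu^+s_\epsilon^+(\theta_0)=-\mu^-s_\epsilon^-(\theta_0)>0$, hence $s_\epsilon'(\theta_0)=\mu^+s_\epsilon^+(\theta_0)\big[(\ln s_\epsilon^+)'-(\ln(-s_\epsilon^-))'\big]_{\theta_0}$, so it suffices to show the ratio $s_\epsilon^+(\theta)/(-s_\epsilon^-(\theta))$ is strictly increasing on $(-\theta_\crit,\theta_\crit)$; since that ratio runs from $0$ at $-\theta_\crit$ to $+\infty$ at $\theta_\crit$, it then meets $\mu^-/\mu^+$ exactly once. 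The computational heart is thus the inequality $\tfrac{d}{d\theta}\ln\big[s_\epsilon^+/(-s_\epsilon^-)\big]>0$: one expands $\ln s_\epsilon^\pm=\ln\sin\tfrac{\varphi_p^\pm-\theta}{2}-p\ln|\varphi_p^\pm-\theta|+\ln r_\epsilon^\pm$, computes $(\varphi_p^\pm)'(\theta)$ by implicit differentiation of \eqref{E:varphi_def}, and reduces the bulky but elementary trigonometric result using Corollary~\ref{Cor:varphi_trend} and \eqref{E:varphi_def}. (Should $s_\epsilon$ fail to be globally monotone on this interval, the fallback is to prove $s_\epsilon'>0$ at each of its zeros, which follows from the same inequality and still yields uniqueness.) Given uniqueness of the zero, $g$ has a single interior critical point; as $g>0$ inside, vanishes at $\pm\theta_{\max}$, and is monotone on the two flanking subintervals, that critical point is the global maximum, and it is the unique solution of $s_\epsilon(\theta)=0$ in $[-\theta_\crit,\theta_\crit]$ — the asserted characterization.
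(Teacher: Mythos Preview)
Your parts~(i) and~(ii) are correct and track the paper closely: the derivative formula $dA_{p,\theta}^+/d\theta=(\sin\varphi_p^+-\sin\theta)/(\varphi_p^+-\theta)^p$ (via the envelope theorem or direct differentiation) and the reduction of $g'=0$ to $s_\epsilon=0$ match the paper's Eqs.~\eqref{E:dA_dtheta}--\eqref{E:numerical_turning_point_expression}, and your sign analysis localizing the zero to $[-\theta_\crit,\theta_\crit]$ is the same as the paper's.

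The genuine gap is in part~(iii). You correctly identify uniqueness as ``the remaining and main obstacle,'' propose a route (monotonicity of $s_\epsilon^+/(-s_\epsilon^-)$), but then do not carry it out: the line ``one expands \ldots\ computes \ldots\ and reduces the bulky but elementary trigonometric result'' is a promissory note, not a proof. This step is not routine; it is where the paper spends most of its effort. The paper's argument is different from your sketch and more direct: it shows that $s_\epsilon^+$ itself (and hence, by $s_\epsilon^-(\theta)=-s_\epsilon^+(-\theta)$, also $s_\epsilon^-$) is strictly increasing on $[-\theta_\crit,\theta_\crit]$, so that $s_\epsilon=\mu^+s_\epsilon^++\mu^-s_\epsilon^-$ is strictly increasing. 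To do this it splits $s_\epsilon^+=s_{\epsilon,1}^+-\sqrt\epsilon\,s_{\epsilon,2}^+$ (separating the two cosine terms) and shows $ds_{\epsilon,1}^+/d\theta>0$ and $ds_{\epsilon,2}^+/d\theta<0$ separately. After differentiating and invoking the monotonicity of $\varphi_p^+\pm\theta$ from Corollary~\ref{Cor:varphi_trend}, both reduce to the scalar inequality
\[
\frac{\varphi_p^+-\theta}{2}\cot\!\Big(\frac{\varphi_p^+-\theta}{2}\Big)\ \ge\ \frac{p}{2}
\quad\text{and}\quad
(\varphi_p^+-\theta)\cot(\varphi_p^+-\theta)\ <\ p
\]
on $|\theta|<\theta_\crit$, which are established via Lemma~\ref{Lem:x_cotx_div_2} (monotonicity of $x\cot(x/2)$) together with the identity $p=\varphi_p^+(0)\cot[\varphi_p^+(0)/2]$ coming from \eqref{E:varphi_def} at $\theta=0$. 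Your ratio approach would in fact follow from this (increasing positive over decreasing positive is increasing), but without supplying a concrete inequality of this kind your argument is incomplete.
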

\begin{proof}
 From Eqs.~\eqref{E:A_p_theta_expression}--\eqref{E:varphi_def}
 and~\eqref{E:dvarphi_dtheta}, we have
 \begin{equation}
  \frac{dA_{p,\theta}^+}{d\theta} = A_{p,\theta}^+ \left[
  \frac{d\varphi_p^+}{d\theta} \cot\varphi_p^+ - \frac{(p-1) \left(
  \frac{d\varphi_p^+}{d\theta} - 1 \right)}{\varphi_p^+ - \theta} \right] =
  \frac{p A_{p,\theta}^+}{\varphi_p^+ - \theta} \left( 1 -
  \frac{\sin\theta}{\sin\varphi_p^+} \right) = \frac{\sin\varphi_p^+ -
  \sin\theta}{(\varphi_p^+ - \theta)^p} .
  \label{E:dA_dtheta}
 \end{equation}
 Hence,
 \begin{subequations}
  \label{E:numerical_turning_point_expression}
 \begin{equation}
  A_{p,\theta}^+ \sin\theta + (\cos\theta - \sqrt{\epsilon})
  \frac{dA_{p,\theta}^+}{d\theta} = \frac{2\sin\frac{\varphi_p^+ - \theta}{2}
  \left( \cos\frac{\varphi_p^+ - \theta}{2} - \sqrt{\epsilon}
  \cos\frac{\varphi_p^+ + \theta}{2} \right)}{(\varphi_p^+ - \theta)^p} .
  \label{E:numerical_turning_point_expression_plus}
 \end{equation}
 Similarly, we have
 \begin{equation}
  A_{p,\theta}^- \sin\theta + (\cos\theta - \sqrt{\epsilon})
  \frac{dA_{p,\theta}^-}{d\theta} = \frac{2\sin\frac{\varphi_p^- - \theta}{2}
  \left( \cos\frac{\varphi_p^- - \theta}{2} - \sqrt{\epsilon}
  \cos\frac{\varphi_p^- + \theta}{2} \right)}{(\theta - \varphi_p^-)^p} .
  \label{E:numerical_turning_point_expression_minus}
 \end{equation}
 \end{subequations}
 Since the argument in Expression~\eqref{E:max_theta_def} is a differentiable
 function of $\theta \in
 (-\cos^{-1}[\sqrt{\epsilon}],\cos^{-1}[\sqrt{\epsilon}])$, by differentiating
 this expression with respect to $\theta$ with the help of
 Eqs.~\eqref{E:A_p_theta_expression}
 and~\eqref{E:numerical_turning_point_expression}, after some algebraic
 manipulation, we conclude that the extremum in
 Expression~\eqref{E:max_theta_def} occurs when $\theta$ obeys
 Eq.~\eqref{E:max_theta_alg}.

 Recall from Corollary~\ref{Cor:extended_cos_inequality} that
 $\varphi_p^-(\theta) = -\varphi_p^+(-\theta)$.  So from the proof of
 Theorem~\ref{Thrm:CZ_bound}, we know that $r_\epsilon^-(\theta)$ is a
 strictly decreasing function of $\theta \in [-\cos^{-1}(\sqrt{\epsilon}),
 \cos^{-1}(\sqrt{\epsilon})]$.  Applying compound angle formula twice and
 from Corollary~\ref{Cor:varphi_trend}, we obtain
 $\left. r_\epsilon^-(\theta) \right|_{\theta = \cos^{-1}(\sqrt{\epsilon})} =
 \cos[(\varphi_p^- - \theta)/2] - \cos\theta \cos[(\varphi_p^- + \theta)/2] =
 \sin[(\varphi_p^- + \theta)/2] \sin\theta \le 0$ and $\left.
 r_\epsilon^-(\theta) \right|_{\theta = 0} = (1-\sqrt{\epsilon})
 \cos(\varphi_p^-/2) \ge 0$.  Therefore, $\theta_\crit$ is the unique solution
 of Eq.~\eqref{E:theta_crit_def} in the interval
 $[0,\cos^{-1}(\sqrt{\epsilon})]$.  From Corollary~\ref{Cor:varphi_trend},
 $\sin [(\varphi_p^- - \theta)/2] \ge 0$ for $|\theta| \le \pi/2$.  In other
 words, $s_\epsilon^-(\theta) \ge 0$ for $\theta \in [-\pi/2,\theta_\crit]$
 and $s_\epsilon^-(\theta) < 0$ for $\theta \in (\theta_\crit,\pi/2]$.  As
 $\varphi_p^+(\theta) = -\varphi_p^-(-\theta)$, we know that
 $s_\epsilon^+(\theta) = -s_\epsilon^-(-\theta)$.  So $s_\epsilon^+(\theta)
 \ge 0$ for $\theta \in [-\pi/2,-\theta_\crit]$ and $s_\epsilon^+(\theta) < 0$
 for $\theta \in [\theta_\crit,\pi/2]$.  Hence, solutions of
 Eq.~\eqref{E:max_theta_alg}, if any, must lie in
 $[-\theta_\crit,\theta_\crit]$.

 Let us rewrite $s_\epsilon^+(\theta)$ in Eq.~\eqref{E:max_theta_alg} as
 $s_{\epsilon,1}^+(\theta) - \sqrt{\epsilon} s_{\epsilon,2}^+(\theta)$.
 Note that $s_\epsilon^-(\theta) = -s_\epsilon^+(-\theta)$.  Moreover, from
 the proof of Theorem~\ref{Thrm:CZ_bound_computation}, $s_\epsilon^+$ is
 differentiable for $\theta \in [-\cos^{-1}(\sqrt{\epsilon}),
 \cos^{-1}(\sqrt{\epsilon})]$.  Furthermore, $s_\epsilon^+(-\theta_\crit) \le
 0$ and $s_\epsilon^+(\theta_\crit) \ge 0$.  So Eq.~\eqref{E:max_theta_alg}
 has a unique solution in the domain $[-\theta_\crit,\theta_\crit] \subset
 [-\cos^{-1}(\sqrt{\epsilon}),\cos^{-1}(\sqrt{\epsilon})]$ if we could show
 that $d s_{\epsilon,1}/d\theta > 0$ and $d s_{\epsilon,2}/d\theta < 0$ for
 any $|\theta| < \theta_\crit$.

 Clearly, $\theta = \theta_\crit$ is the unique solution of
 Eq.~\eqref{E:max_theta_alg} in $[-\theta_\crit,\theta_\crit]$ in the case of
 $\theta_\crit = 0$.  So, we only need to consider the case of $\theta_\crit >
 0$.  Since $d s_{\epsilon,2}/d\theta < 0$ if
 \begin{align}
  \frac{d (\varphi_p^+ - \theta)}{d\theta} \cos \left( \frac{\varphi_p^+ +
   \theta}{2} \right) \left[ (\varphi_p^+ - \theta) \cos \left(
   \frac{\varphi_p^+ - \theta}{2} \right) - p \sin \left( \frac{\varphi_p^+ -
   \theta}{2} \right) \right] \nonumber \\
  - \frac{d (\varphi_p^+ + \theta)}{d\theta} ( \varphi_p^+ - \theta) \sin
   \left( \frac{\varphi_p^+ - \theta}{2} \right) \sin \left( \frac{\varphi_p^+
   + \theta}{2} \right) &< 0 .
  \label{E:ds2_dtheta_1}
 \end{align}
 From Corollary~\ref{Cor:varphi_trend}, we know that the second term in the
 L.H.S. of Inequality~\eqref{E:ds2_dtheta_1} is negative and $\cos
 [(\varphi_p^+ + \theta)/2] d(\varphi_p^+ - \theta)/d\theta < 0$.  Therefore,
 it remains to show that
 \begin{equation}
  \frac{\varphi_p^+ - \theta}{2} \cot \left( \frac{\varphi_p^+ - \theta}{2}
  \right) \ge \frac{p}{2}
  \label{E:ds2_dtheta_2}
 \end{equation}
 for all $|\theta| < \theta_\crit$.  Note that from
 Corollary~\ref{Cor:varphi_trend}, $(\varphi_p^+ - \theta)/2 \in [0, 3\pi/4)$
 is a decreasing function of $\theta$.  Therefore, according to
 Lemma~\ref{Lem:x_cotx_div_2}, Inequality~\eqref{E:ds2_dtheta_2} holds if this
 inequality is true for $\theta = \theta_\crit$.  From
 Eq.~\eqref{E:theta_crit_def}, Corollary~\ref{Cor:varphi_trend} and the fact
 that $\varphi_p^+(\theta) = -\varphi_p^-(-\theta)$, we see that $\varphi_p^+
 - \theta_\crit$ is maximized when $\epsilon = 1$.  This happens when
 $\theta_\crit = 0$.  From Eq.~\eqref{E:varphi_def}, $p (1 -
 \cos\varphi_{p,\crit}^+) = \varphi_{p,\crit}^+ \sin\varphi_{p,\crit}^+$.  In
 other words, $[\varphi_{p,\crit}^+ - \theta_\crit] \cot [(\varphi_{p,\crit}^+
 - \theta_\crit)/2] = p$.  Hence, Inequality~\eqref{E:ds2_dtheta_2} holds for
 all $|\theta| < \theta_\crit$.

 Surely, $d s_{\epsilon,1}/d\theta > 0$ provided that $(\varphi_p^+ - \theta)
 \cot (\varphi_p^+ - \theta) < p$ for all $|\theta| < \theta_\crit$.  By the
 similar argument in the previous paragraph, we know that $\varphi_p^+ -
 \theta$ is minimized if $\theta = -\theta_\crit$.  This happens when
 $\epsilon = 1$ and hence $\theta_\crit = 0$.  Therefore, $p (1 -
 \cos\varphi_{p,-\crit}^+) = \varphi_{p,-\crit}^+ \sin\varphi_{p,-\crit}^+$,
 where $\varphi_{p,-\crit}^+$ is the shorthand notation for
 $\varphi_p^+(-\theta_\crit)$.  Hence, $p = [\varphi_{p,-\crit}^+ +
 \theta_\crit] \cot [(\varphi_{p,-\crit}^++ \theta_\crit)/2 ] >
 (\varphi_{p,-\crit}^+ + \theta_\crit] \cot (\varphi_{p,-\crit}^+ +
 \theta_\crit)$.  This completes our proof.
\end{proof}

\begin{remark}
 From Eqs.~\eqref{E:prob_ampl_req} and~\eqref{E:max_theta_alg}, the $\theta$
 maximizing Expression~\eqref{E:max_theta_def}, which we denote by
 $\theta_\opt$, obeys
 \begin{equation}
  \mu^+ |a_-|^2 (\theta_\opt - \varphi_p^-)^p = \mu^- |a_+|^2 (\varphi_p^+ -
  \theta_\opt)^p
  \label{E:theta_opt_interpretation}
 \end{equation}
 In particular, for the initial quantum state in
 Eq.~\eqref{E:Psi_optimal_form}, Eq.~\eqref{E:theta_opt_interpretation}
 becomes $\eplus{(E-E_r)}{p} : \eplus{(E_r-E)}{p} = \mu^+ : \mu^-$.  In other
 words, from Theorem~\ref{Thrm:CZ_bound}, for any $p \in (0,1]$, $\epsilon \in
 [0,1]$ and for any ratio $\mu^+ : \mu^-$, the \CZ bound can be saturated.  As
 expected, $\theta_\opt$ is also equal to the optimized value of $\theta$ in
 Inequality~\eqref{E:CZ_bound_basic_generic} of the \CZ bound.
 \label{Rem:theta_opt_interpretation}
\end{remark}

\begin{remark}
 Note that Expression~\eqref{E:max_theta_def} attains its maximum when $\theta
 = \theta_\crit$ if $\mu^- = 1$ or $\theta = -\theta_\crit$ if $\mu^+ = 1$.
 In this regard, Theorem~\ref{Thrm:CZ_bound_computation} tells us that for a
 general $\mu^+$, $\theta_\opt$ lies between these two limiting cases.
 \label{Rem:mu_theta_relation}
\end{remark}

 To use Theorem~\ref{Thrm:CZ_bound_computation} to compute the \CZ bound
 efficiently, we first need to calculate $\varphi_p^\pm$.  Recall that for
 $(p,\theta)\in {\mathcal R} \setminus (2,0)$, $\varphi_p^+$ and $\varphi_p^-$
 are the unique roots of Eq.~\eqref{E:varphi_def} in the intervals
 $[|\theta|,\pi)$ and $(-\pi,-|\theta|]$, respectively.  In the former case,
 we use Newton's method with $\pi$ as the initial guess; and in the latter
 case, we use $-\pi$ as the initial guess instead.  This method is
 quadratically convergent.  Here we prove this claim for $\varphi_p^+$.  The
 case of $\varphi_p^-$ can be similarly proven.  Note that from
 Eqs.~\eqref{E:f_p} and~\eqref{E:f_pp} in the proof of
 Lemma~\ref{Lem:cos_inequality} in the Appendix, $f''_{p,\theta}(x) < 0$ for
 $(p,x) \in (0,2] \times [\pi/2,\pi]$ and $f'_{p,\theta}(\pi) \le 0$.  As
 $\varphi_p^+ \in [|\theta|,\pi)$ for $p \in (0,1]$,
 Lemma~\ref{Lem:Newton_method} implies that the root $\varphi_p^+$ of
 Eq.~\eqref{E:varphi_def} in the interval $[|\theta|,\pi)$ can always be found
 by Newton's method using the initial guess $\pi$ for the case of $p \in
 (0,1]$.

 For the case when the root $\varphi_p^+ \in [0,\pi/2)$, we know from the
 proof of Lemma~\ref{Lem:cos_inequality} that this can only happen when $p \in
 (1,2]$ and $\theta \in (-\pi/2,0]$.  In this case, Eqs.~\eqref{E:f_p}
 and~\eqref{E:f_pp} tell us that $f''_{p,\theta}(\pi/2) = \theta -
 \varphi_p(\theta) \le 0$ and $f'_{p,\theta}(\pi/2) = 1 - p < 0$.  From the
 proof of Lemma~\ref{Lem:cos_inequality} in the Appendix, we know that
 $f'_{p,\theta}(x) = 0$ has exactly one root, say, $x_1$ in $(0,\pi/2)$.  In
 addition, $f''_{p,\theta}(x) = 0$ implies $(x-\theta) \tan x = 2 - p$.
 Obviously, the L.H.S. of this equation is a bijection from $[0,\pi/2)$ to
 $[0,+\infty)$.  Together with the fact that $p \le 2$, we conclude that
 $f''_{p,\theta}(x) = 0$ has exactly one root in $(0,\pi/2)$.  Thus,
 $f''_{p,\theta}(x) \le 0$ for all $x \in [x_1,\pi/2]$.  Recall that
 $f''_{p,\theta}(x) \le 0$ for $x \in [\pi/2,\pi]$ and $f'_{p,\theta}(\pi) <
 0$.  Applying Lemma~\ref{Lem:Newton_method} to $f_{p,\theta}(x)$ in the
 interval $[x_1,\pi]$, we know that Newton's method converges using the
 initial guess $\pi$.

 Since $\theta$ and $\varphi_p^+$ are simple roots of
 Eq.~\eqref{E:varphi_def}, combined with Eq.~\eqref{E:dvarphi_dtheta},
 Newton's method is stable.  Moreover, the necessary conditions for the loss
 of significance, in this case due to ill-conditioning, are that $\varphi_p^+
 \approx \theta \approx \pi/2$ and $p \approx 1$.  To play safe, we may switch
 to bisection method in this situation.  Nonetheless, our numerical experiment
 shows that Newton's method is highly accurate in this case as well.  The same
 findings apply to the numerical computation of $\varphi_p^-$.  Nevertheless,
 this ill-conditioning issue does affect the computation of
 Expression~\eqref{E:max_theta_def}.  We shall discuss it in
 Sec.~\ref{Subsec:numeric_coef} below.

\begin{figure}[t]
 \centering\includegraphics[width=7.5cm]{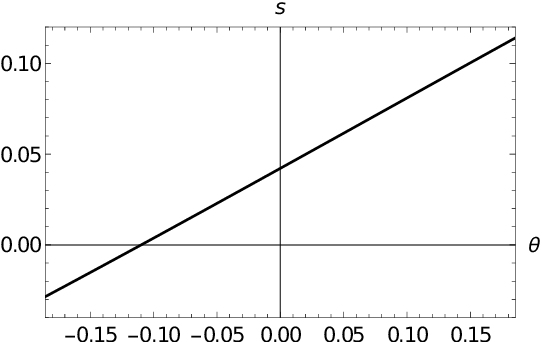}
 \caption{A typical $s(\theta)$ curve is very close to a straight line for all
  parameters used when $|\theta| \le \theta_\crit$.  We set $p = 0.7$,
  $\epsilon = 0.2$ and $\mu^+ = 0.8$ in this plot.
  \label{F:s_epsilon_curve}}
\end{figure}

 With $\varphi_p^\pm$ now accurately and efficiently evaluated, we can find
 $\theta_\opt$ by solving Eq.~\eqref{E:max_theta_alg}.  From the proof of
 Theorem~\ref{Thrm:CZ_bound_computation}, this can be done by bisection
 method using $[-\theta_\crit,\theta_\crit]$ as the initial interval.  We can
 do much better than this in practice.  As depicted in
 Fig.~\ref{F:s_epsilon_curve}, $s_\epsilon(\theta)$ looks like a straight line
 for $|\theta| \le \theta_\crit$.  Thus, Newton's method with an initial
 guess of, say, $\theta = 0$, will converge to $\theta_\opt$ quadratically.
 Another advantage of this method is that we do not need to numerically
 compute $\theta_\crit$.  In addition, for our parameter ranges, numerical
 stability is never an issue.  As for round-off error, the only situation that
 requires attention is when $|\varphi_p^\pm| \approx \theta$.  Here we may
 have to use the first few terms of the Taylor's series expansion of $\sin
 [(\varphi_p^\pm - \theta)/2]$ to accurately compute $\sin [(\varphi_p^\pm -
 \theta)/2]/|\varphi_p^\pm - \theta|^p$.

\subsection{Computation Of Expression~\eqref{E:max_theta_def}}
\label{Subsec:numeric_coef}
 Recall that the $\mu^\pm$ corresponding to the optimized $\braket{|E-E_r|^p}$
 may not be unique if $0 < p < 1$.  In this case, we need to maximize
 Expression~\eqref{E:max_theta_def} over all these $\mu^\pm$'s and hence the
 corresponding $\theta_\opt$'s.  Furthermore, irrespective of the uniqueness
 of $\mu^\pm$, from Eq.~\eqref{E:A_p_theta_def}, in order to evaluate
 Expression~\eqref{E:max_theta_def} accurately, highly accurate $\theta_\opt$
 and $\varphi_p^+(\theta_\opt)$ are required if $\varphi_p^+(\theta_\opt)
 \approx \pi$ or $\theta_\opt$ to compensate for the loss of significance in
 evaluating $A_{p,\theta_\opt}^\pm$ due to ill-conditioning and error
 propagation.  Similarly for $\varphi_p^-(\theta_\opt)$.  To compute
 Expression~\eqref{E:max_theta_def} to a certain accuracy in these situations,
 we may have to evaluate $\theta_\opt$ and $\varphi_p^\pm(\theta_\opt)$ to
 higher precision.  This can be done by changing the stopping criterion and
 perhaps also by increasing the working precision.  From
 Eq.~\eqref{E:varphi_def}, if $\varphi_p^+ \approx \pi$, then $p = \Oh(\pi -
 \varphi_p^+)$, which is small.  Conversely, if $p$ is small, then
 $\varphi_p^+ = \pi - \Oh(p)$.  That is to say, we need to evaluate
 $\varphi_p^+$ to a higher working precision to avoid round-off error if $p
 \approx 0$.  In fact, our numerical experiments suggest that using double
 precision arithmetic, rounding error is an issue, sometimes a serious one
 that gives totally wrong results, when $p \lesssim 10^{-5}$.  So we switch to
 quadruple precision arithmetic for $p < 10^{-5}$ in our Mathematica code.

 In summary, given $p, \epsilon$ and $\mu^+$, the method outlined in the
 previous paragraph can compute Expression~\eqref{E:max_theta_def} by
 numerically solving up to three equations, one for $\theta_\opt$ and another
 two for $\varphi_p^\pm(\theta_\opt)$.  (In case $\theta_\opt = 0$, one only
 needs to solve $\varphi_p^+$ as it is equal to $-\varphi_p^-$.)  And to
 evaluate the \CZ bound in Inequality~\eqref{E:CZ_bound_basic}, one has to
 further numerically find $E_{r,\opt}$ and hence $\mu^+$ by solving one more
 equation.  (In case $\mu^+$ is not unique, one has to solve $\theta_\opt$ and
 $\varphi_p^\pm$ for each $\mu^+$.)  Finally, one obtains the R.H.S. of
 Inequality~\eqref{E:CZ_bound_basic} by substitution.  In contrast, the most
 efficient way to obtain the \ML bound numerically is by solving just one
 single equation given $\epsilon$~\cite{Chau23}.  Can we adapt that method
 here?

 The reason why the \ML bound can be obtained by solving just one equation is
 that we can recast the problem as finding a normalized initial state that
 saturates the \ML bound.  Since any such state must belong to the Hilbert
 space spanned by two energy eigenstates of the Hamiltonian, only one degree
 of freedom remains, namely, the amplitude square of the ground state energy
 component of the normalized initial state.  More importantly, the exact
 evolution time can be written as an explicit function of this amplitude
 square as well as $\epsilon$.  Hence, computing the \ML bound is reduced to
 the problem of maximizing certain evolution time, which can further be
 reduced to the problem of solving a non-linear equation of one
 variable~\cite{Chau23}.

 Following the same logic, evaluating the \CZ bound in this way has to
 optimize over a normalized initial state in the form of
 Eq.~\eqref{E:Psi_optimal_form}, which has two degrees of freedom, namely,
 $|a_+|^2$ and $|a_-|^2$.  Unfortunately, the evolution time given $\epsilon$
 is an implicit function of $|a_\pm|^2$'s that we do not know how to write in
 an explicit form.  Thus, we can only proceed by solving a system of three
 equations, namely, the one relating the evolution time with $|a_\pm|^2$'s and
 two equations determined by maximizing the evolution time through varying
 $|a_\pm|^2$'s.  Consequently, there is no computational advantage over the
 method we have just presented.

\subsection{Computation Of The \CZ Bound For Fixed $\boldsymbol{p}$}
\label{Subsec:numeric_QSL_fixed_p}
 With the above discussion, it is clear that for a fixed $p$, the \CZ bound
 can be computed as follows.

 \begin{enumerate}
  \item Find $\braket{|E-E_{r,\opt}|^p}$ and $\mu^+$ either using the method
   in Sec.~\ref{Subsec:numeric_E_r} through determining $E_{r,\opt}$ or by an
   oracle that returns $\braket{|E-E_{r,\opt}|^p}$ and
   $\eplus{(E-E_{r,\opt})}{p}$.
  \item If $p \in (0,1]$, compute $\theta_\opt$ using the method in
   Sec.~\ref{Subsec:numeric_theta}.  Otherwise, set $\theta_\opt = 0$.
  \item Finally, compute Expression~\eqref{E:max_theta_def} and hence the \CZ
   bound using method in Sec.~\ref{Subsec:numeric_coef}.
 \end{enumerate}

 Evaluating the \LC bound can be done in almost the same way.  The only
 exception is that $\theta_\opt$ is always set to $0$.  For finite-dimensional
 quantum systems, this method is computationally accurate and efficient.

 Last but not least, we remark that our analysis here focuses on a fixed $p$.
 In spite of the fact that both $\lim_{p\to 0^+} \eplus{(E-E_r)}{p}$ and
 $\lim_{p\to 0^+} \eplus{(E_r-E)}{p}$ exist for any given $E_r$, the existence
 of finite \LC or \CZ bounds as $p\to 0^+$ is not guaranteed.  We shall
 report an interesting consequence of this observation in
 Sec.~\ref{Sec:performance}.

\subsection{Computation Of The Optimized \CZ Bound}
\label{Subsec:numeric_QSL_opt_p}
 Calculating the optimized \CZ bound is straightforward to implement but
 difficult to analyze.  Without additional information on $\ket{\Psi(0)}$, the
 powerful convex optimization method does not apply.  All we can do is to use
 a modern general optimization algorithm, such as differential evolution and
 Nelder-Mead method, over the parameter $p$ using the \CZ bound for fixed $p$
 as the target function.  Note that $\varphi_p^\pm$ vary smoothly with
 $\theta$.  If we further assume that $\braket{|E-E_r|^p}$,
 $\eplus{(E-E_r)}{p}$, $\eplus{(E-E_r)}{p-1}$ and $\eplus{(E_r-E)}{p-1}$ are
 continuous and monotonic functions of $p$, then we expect that any modern
 general optimization algorithm should work reasonably well both in speed and
 in accuracy.  One of us has posted the Mathematica code to compute the
 optimized \CZ bound~\cite{code}.  Based on this code, our numerical
 experiments to be reported in Sec.~\ref{Sec:performance} show that this is
 indeed the case although vigorous mathematical analysis is beyond reach for a
 general $\ket{\Psi(0)}$.  The same analysis applies also to the calculation
 of the optimized \LC and \LZ bounds.  Once again, our numerical experiments
 find that modern general optimization algorithms work very well for these two
 optimized bounds.

\subsection{Simple Expression For The Optimized \CZ Bound On Two-Dimensional
 Quantum Systems}
\label{Subsec:2D-CZ_expression}
 Interestingly, numerical analysis in this section gives us a simplified
 expression for the optimized \CZ bound on two-dimensional quantum systems.
 By shifting the reference level, we write the normalized initial state of
 such system as $a_1 \ket{E_1} + a_0 \ket{E_0}$ with $E_1 > E_0$.

 For the case of $p \le 1$, analysis in Sec.~\ref{Subsec:numeric_E_r} tells us
 that $E_{r,\opt}$ is either $E_0$ or $E_1$.  For the first subcase, $\mu^+$
 defined in Theorem~\ref{Thrm:CZ_bound_computation} has to be $1$; whereas for
 the second subcase, $\mu^+ = 0$.  So from
 Theorem~\ref{Thrm:CZ_bound_computation} and
 Remark~\ref{Rem:mu_theta_relation},
 \begin{equation}
  \left( \frac{\tau}{\hbar} \right)^p \ge \frac{1}{(E_1 - E_0)^p} \frac{\cos
  \theta_\crit - \sqrt{\epsilon}}{\min \left( A_{p,-\theta_\crit}^+ |a_1|^2,
  A_{p,\theta_\crit}^- |a_0|^2 \right)} ,
  \label{E:2d_p<1_case}
 \end{equation}
 with $\theta_\crit$ given by the unique solution of
 Eq.~\eqref{E:theta_crit_def}.

 If $1 \le p \le 2$, then $E_{r,\opt}$ obeys Eq.~\eqref{E:E_r_opt_condition}
 so that the \CZ bound becomes
 \begin{equation}
  \left( \frac{\tau}{\hbar} \right)^p \ge \frac{1 - \sqrt{\epsilon}}{(E_1 -
  E_0)^p A_{p,0}} \left( |a_0|^{-\frac{2}{p-1}} + |a_1|^{-\frac{2}{p-1}}
  \right)^{p-1} .
  \label{E:2d_p>1_case}
 \end{equation}
 Thus, the optimized \CZ bound for two-dimensional quantum systems can be
 expressed as
 \begin{equation}
  \frac{\tau}{\hbar} \ge \frac{1}{E_1 - E_0} \max \left\{ \max_{1\le p\le 2}
  \left[ \frac{(1 - \sqrt{\epsilon}) \left( |a_0|^{-\frac{2}{p-1}} +
  |a_1|^{-\frac{2}{p-1}} \right)}{A_{p,0}} \right]^{\frac{1}{p}} , \max_{0 < p
  \le 1} \left[ \frac{\cos\theta_\crit - \sqrt{\epsilon}}{\min \left(
  A_{p,-\theta_\crit}^+ |a_1|^2, A_{p,\theta_\crit}^- |a_0|^2 \right)}
  \right]^{\frac{1}{p}} \right\} .
  \label{E:CZ_bound_for_2d_case}
 \end{equation}
 A notable feature of this expression is that its R.H.S. is proportional to a
 factor that depends only on $\epsilon$ and $|a_0|$.

\begin{table*}[t]
 \caption{Various \QSL bounds for various initial states $\ket{\Psi(0)}$ and
  square root fidelities $\sqrt{\epsilon}$.  The initial states used in this
  Table from case~(a) to~(g) are $(\ket{0} + \ket{1})/\sqrt{2}$, 
  $\sum_{j=0}^{2047} \ket{j}/2^{11/2}$,
  $\sum_{j=1}^{2048} j^{-1} \ket{j} / \sqrt{\sum_{\ell=1}^{2048} \ell^{-2}}$,
  $\sqrt{0.1}\ket{0} + \sqrt{0.9}\ket{1}$,
  $\sqrt{0.3}\ket{0} + \sqrt{0.6}\ket{1} + \sqrt{0.1}\ket{2}$,
  $\sqrt{0.3}\ket{0} + \sqrt{0.6}\ket{1} + \sqrt{0.1}\ket{\pi}$ and
  $\sqrt{0.4}\ket{0} + \sqrt{0.45}\ket{1} + \sqrt{0.15}\ket{2\pi}$,
  respectively.
  Note that the \LZ bound is optimized over $p$, the \LC bound is optimized
  over $p$ and $E_r$, whereas the \CZ bound is optimized over $p$, $\theta$
  and $E_r$.  We do not tabulate the optimized values of $E_r$ here because
  relatively little can be learned from it.  Note that all \QSL bounds obtained
  are finite except for the optimized \LC and \CZ bounds in case number~(e)
  with $\sqrt{\epsilon} = 0.19$.  For the divergent case, the ``best
  estimation'' of the optimized \LC bound is $\approx 10^{4700}$ at $p
  \approx 10^{-6}$.  And the ``the estimation'' of the optimized \CZ bound is
  $\approx 10^{50000000}$ at $p \approx 10^{-10}$ and $\theta \approx
  10^{-11}$.
  \label{T:result}
 }
% \begin{ruledtabular}
 \begin{longtable}{||l|c|D{.}{.}{2.8}|D{.}{.}{2.8}|D{.}{.}{2.8}|D{.}{.}{2.3}|D{.}{.}{2.8}|D{.}{.}{2.7}|D{.}{.}{2.8}|D{.}{.}{2.7}|D{.}{.}{2.7}||}
  \hline\hline
  case & $\sqrt{\epsilon}$ & 
   \multicolumn{1}{c|}{\MT} & \multicolumn{1}{c|}{\ML} &
   \multicolumn{2}{c|}{optimized \LZ} &
   \multicolumn{2}{c|}{optimized \LC} & \multicolumn{3}{c||}{optimized \CZ}
   \\
  \cline{5-11}
  & & \multicolumn{1}{c|}{bound} & \multicolumn{1}{c|}{bound} &
   \multicolumn{1}{c|}{bound} & \multicolumn{1}{c|}{$p_\opt$} &
   \multicolumn{1}{c|}{bound} & \multicolumn{1}{c|}{$p_\opt$} &
   \multicolumn{1}{c|}{bound} & \multicolumn{1}{c|}{$p_\opt$} &
   \multicolumn{1}{c||}{$\theta_\opt$} \\
  \hline
  (a) & 0 &
   3.1416 & 3.1416 &
   3.1416 & 1.78 &
   3.1416 & 1.2\times 10^{-5} &
   3.1416 & 8.8\times 10^{-4} & -5.6\times 10^{-4} \\
  \hline
  (b) & 0 &
   2.66\times 10^{-3} & 1.53\times 10^{-3} &
   1.88\times 10^{-3} & 2.00 &
   2.84\times 10^{-3} & 1.36 &
   2.84\times 10^{-3} & 1.36 & 0.00 \\
  \hline
  (c) & 0 &
   0.0353 & 0.2181 &
   0.4166 & 0.46 &
   0.3961 & 0.44 &
   0.4166 & 0.42 & -0.28 \\
  \hline
  (d) & 0.1 &
   4.9021 & 1.5432 &
   2.1437 & 2.00 &
   12.4204 & 1.00 &
   12.4204 & 1.00 & 0.00 \\
  \hline
  (e) & 0.19 &
   2.2994 & 1.5397 &
   1.8485 & 2.00 &
   \multicolumn{1}{c|}{$\infty^*$} & 0.00^* &
   \multicolumn{1}{c|}{$\infty^*$} & 0.00^* & 0.00^* \\
  \cline{2-11}
  & 0.20 &
   2.2824 & 1.5183 &
   1.8268 & 2.00 &
   3.1416 & 1.3\times 10^{-5} &
   3.1416 & 1.3\times 10^{-5} & 2.9\times 10^{-6} \\
  \hline
  (f) & 0.20 &
   1.5800 & 1.3287 &
   1.4586 & 1.75 &
   2.5970 & 1.2\times 10^{-5} &
   2.5970 & 1.0\times 10^{-5} & 2.2\times 10^{-6} \\
  \hline
  (g) & 0.00 &
   0.7461 & 1.1281 &
   1.1795 & 0.67 &
   1.3401 & 0.46 &
   1.3410 & 0.46 & 0.03 \\
  \cline{2-11}
  & 0.15 &
   0.6746 & 0.9342 &
   0.9323 & 0.89 &
   1.0211 & 0.73 &
   1.0221 & 0.74 & -0.04 \\
  \cline{2-11}
  & 0.35 &
   0.5762 & 0.6932 &
   0.6641 & 1.12 &
   0.7525 & 1.02 &
   0.7577 & 1.00 & -0.10 \\
  \cline{2-11}
  & 0.99 &
   0.0672 & 0.0099 &
   2.7\times 10^{-14} & 0.19 &
   0.0674 & 1.99 &
   0.0674 & 1.99 & 0.00 \\
  \hline\hline
 \end{longtable}
% \end{ruledtabular}
\end{table*}

\section{Performance Analysis}
\label{Sec:performance}
 Table~\ref{T:result} compares the minimum evolution times for various initial
 pure quantum states in energy representation using different \QSL{s}.
 (Recall from Sec.~\ref{Sec:prior_art} that the \dualML bound is basically the
 \ML bound of a time- and energy-reversed system.  Thus, we omit the \dualML
 bound in our table because such comparison is already reflected in the
 relative performance between \ML and \CZ bounds over various initial states.)
 These initial states are specifically chosen to illustrate our points.  From
 Table~\ref{T:result}, it is clear that the optimized \CZ bound is the best
 for all cases which is closely followed by the optimized \LC bound.  This
 result is consistent with our conclusion in Sec.~\ref{Sec:bound} that the
 optimized \CZ bound unifies all other bounds in the Table.  It also
 demonstrates that the \MT, \ML and optimized \LZ bounds are mutually
 complementary, and so are the optimized \LZ and optimized \LC bounds.

 Let us study Table~\ref{T:result} in detail.  The initial state of case
 number~(a) is $\ket{\Psi(0)} = (\ket{0} + \ket{1}) / \sqrt{2}$.  It saturates
 the \MT bound for any given of $\epsilon \in [0,1]$.  Table~\ref{T:result}
 shows that all the bounds we have covered give the saturation value of $\pi$
 when $\epsilon = 0$.  This is consistent with our conclusion in
 Theorem~\ref{Thrm:LC_bound_extension} that the optimized \LC and hence the
 optimized \CZ bounds are at least as good as the \MT bound.

 The initial state considered in case~(b) of Table~\ref{T:result} is
 $\sum_{j=0}^{2047} \ket{j} / 2^{11/2}$.  The time for it to its orthogonal
 complement equals $\tau = \pi\hbar/1024 \approx 0.0030680 \hbar$.  In this
 case, both the optimized \LC and \CZ bounds are the best.  They give about
 92.5\% of the actual evolution time whereas the \MT bound is just about
 86.6\%.  Here, both the optimized \LC and \CZ bounds give the same $p_\opt
 \approx 1.36 > 1$.

 Likewise, the (un-normalized) quantum state $\sum_{j=1}^{2048} j^{-1}
 \ket{j}$ used in case~(c) is also $2048$-dimensional.  (Here the dimension
 refers to the minimum Hilbert space dimension of the Hamiltonian needed to
 support such an initial quantum state.  We simply call this the Hilbert space
 dimension of the system and denote it by $n$.)  Its values of $p_\opt$ for
 both the optimized \LC and \CZ bounds are $\approx 0.89 < 1$.  We pick
 cases~(a)--(c) to test how the computational times vary as the Hilbert space
 dimension of the system $n$ increases for various \QSL{s} in practice.
 Moreover, we compare the practical efficiency of our methods for both $p_\opt
 < 1$ and $p_\opt \ge 1$ in case $n$ is large since the algorithms of finding
 $E_{r,\opt}$ introduced in Sec.~\ref{Subsec:numeric_E_r} and their
 corresponding complexities are vastly different in these two cases.  We use
 Mathematica code with just-in-time compilation installed in a typical desktop
 to compare their performance.  For the \MT and \ML bounds, increasing $n$
 from $2$ to $2048$ has relatively little effect on the computational times.
 The runtimes for the \ML bound are at most several ms in all three cases.  In
 contrast, the runtimes for the \MT bounds are about 20~$\mu$s, 0.9~ms and
 40~ms for cases~(a) to~(c), respectively.  For the unoptimized \LZ bound, the
 runtimes are similar to those of the \MT bound.  Whereas for the optimized
 \LZ bound, it increases from about 50~ms to about 0.3~s when $n$ increases
 from 2 to 2048.  As expected, longer times are needed to compute the bound
 for fixed $p$ and much longer to optimize $p$ as $n$ increases.  For example,
 for the unoptimized \LC bound, the runtimes are less than 1~ms for case~(a),
 about 70~ms for case~(b) and about 0.1~s for case~(c).  And for the optimized
 \LC bound, the runtimes are about 0.5~s, 150~s and 300~s for cases~(a)
 to~(c), respectively.  Last but not least, for the unoptimized \CZ bound, the
 increase is from about 3~ms for case~(a) to about 70~ms for case~(b) to about
 0.1~s for case~(c); while for the optimized \CZ bound, the increase is from
 about 5~s for case~(a) to 150~s for case~(b) to about 300~s for case~(c).  To
 conclude, our experiment shows that the unoptimized \LZ, \LC and \CZ bounds
 are all extremely efficient to evaluate.  And up to our expectation, longer
 time is required to compute both the optimized and unoptimized versions of
 the \LC and \CZ bounds when $p_\opt \le 1$ because of the higher
 computational complexity cost partly due to the existence of $n$ local minima
 in $\braket{|E-E_r|^p}$.  But in all cases, their optimized versions are fast
 enough to be used in the field even when the Hilbert space dimension of the
 system $n$ is of order of 1000.  Here we also mention on passing that simply
 using generic optimization method to obtain the optimized \LC and \CZ bounds
 is in general not practical when $n$ is large.  This is particularly true
 when $p_\opt \le 1$.  For example, computing the optimized \CZ bound in
 case~(c) using generic optimization takes about 1.3~hr even for probabilistic
 methods.  This is roughly 15~times longer than our algorithm.  Sometimes,
 generic optimization fails to produce an answer due to insufficient computer
 memory.  To be fair, using differential evolution, a generic probabilistic
 optimization technique, the runtime of the optimized \LC bound for case~(c)
 can sometimes be shortened to about 13~s.  We do not have a good explanation
 though.

 We now investigate how these \QSL{s} perform when it is not possible to
 evolve the given initial state to another state with fidelity $\epsilon$ in
 finite time.  Consider case~(d) with $\sqrt{\epsilon} = 0.1$ and case~(e)
 with $\sqrt{\epsilon} = 0.2$.  It is easy to show that the required evolution
 is not possible.  We pick these two cases to illustrate two points.  First,
 the optimized \LC and \CZ bounds can all diverge as $p\to 0^+$.  (We also
 see in some other cases not listed in Table~\ref{T:result} that the optimized
 \LZ bound diverges as well.)  Second, even for the case that these bounds
 cannot detect this impossible evolution, they generally give much higher
 \QSL{s}.  Next, we use case~(f) to test how various \QSL{s} handle another
 type of impossible evolution.  Specifically, we choose $\ket{\Psi(0)} =
 \sqrt{0.3} \ket{0} + \sqrt{0.6} \ket{1} + \sqrt{0.1} \ket{\pi}$ so that it
 could reach $\sqrt{0.3} \ket{0} - \sqrt{0.6} \ket{1} + \sqrt{0.1} \ket{\pi}$,
 namely, the only state whose fidelity $\epsilon$ is $0.2^2$ from it only in
 infinite time.  Table~\ref{T:result} shows that none of the bounds correctly
 give a divergent result even though the optimized \LC and \CZ bounds give
 identical \QSL that is significantly higher than the rest.  This is not
 surprising for given only information on the $p$th moments of the absolute
 value of energy of this type of systems, one may not have enough information
 to conclude that the evolution cannot be completed in finite time.  Note that
 the same conclusion can be drawn for case~(g) with $\epsilon = 0$ that we
 will cover in detail later in this Section.

 As a variation of the theme, we consider case~(e) with $\sqrt{\epsilon}$
 around $0.2$.  The evolution time $\tau$ needed is $\pi$.  (In fact, the
 evolution time is finite whenever $\sqrt{\epsilon} \ge 0.2$.)  Interestingly,
 Table~\ref{T:result} tells us that both the optimized \LC and \CZ bounds give
 this exact result.  More importantly, first order phase transition in the
 values of the optimized \LC and \CZ bounds are observed at $\sqrt{\epsilon} =
 0.2$.  This demonstrates the power of these two optimized bounds.

 The last case we consider is case~(g).  Here we fix $\ket{\Psi(0)} =
 \sqrt{0.4} \ket{0} + \sqrt{0.45} \ket{1} + \sqrt{0.15} \ket{2\pi}$ and vary
 $\epsilon$.  Surely, Table~\ref{T:result} shows that all bounds decrease as
 $\epsilon$ increases.  In addition, we find that the optimized \LC and \CZ
 bounds are the best.  They are generally better than the other \QSL{s} by
 about 10\%.  (And in some other cases listed in Table~\ref{T:result}, they
 can be about 30\% to 60\% better, sometimes even a few times better.)
 Besides, when $p_\opt \le 1$, the optimized \CZ bound is better than the
 optimized \LC bound by about 1\% because the optimized \CZ bound has the
 freedom to pick a non-zero $\theta_\opt$.  In addition, we see that $p_\opt$
 increases as $\epsilon$ increases for the optimized \LC and \CZ bounds.
 Table~\ref{T:result} also demonstrates that $p_\opt$ can take on any value in
 $(0,2]$, including the special case of $p_\opt = 1$.  These two trends are
 consistent with the fact that the evolution time $\tau$ decreases with
 increasing $\epsilon$ so that the optimized $p_\opt$ is likely to be the one
 with $\varphi_{p,\opt}^\pm$ both getting closer and closer to $\theta_\opt$.

\section{Conclusions And Outlook}
\label{Sec:conclusion}
 To summarize, we have proven the optimized \CZ bound that includes all
 existing \QSL{s} for time-independent Hamiltonian evolution as special cases.
 We also developed a precise and accurate numerical algorithm to compute this
 bound for quantum systems with underlying Hilbert space dimension $\lesssim
 2000$ and illustrated the usage of this bound through example initial states
 in Table~\ref{T:result}.  This optimized \CZ bound is at least as well as the
 existing ones and sometimes can be a few percent to a few times better.

 It is instructive to see how this bound can be used as a performance metric
 in realistic situation.  One possibility we have identified is in quantum
 control using piecewise constant pulse such as the one used in
 Ref.~\cite{SFWB12}.  This would be our follow up project.

\begin{acknowledgments}
 H.F.C. is supported by the RGC Grant No.~17303323 of the Hong Kong SAR
 Government.
\end{acknowledgments}

\medskip
\appendix
\section{Appendix}
\numberwithin{equation}{section}
\begin{proof}[Proof of Lemma~\ref{Lem:cos_inequality}]
 To show the validity of Inequality~\eqref{E:cos_inequality} for $x > \theta$,
 it suffices to prove that $A_{p,\theta}$ exists and equals
 \begin{equation}
  A_{p,\theta} = \sup_{x > \theta} m_{p,\theta}(x) \equiv \sup_{x > \theta}
  \frac{\cos\theta - \cos x}{(x-\theta)^p} .
  \label{E:A_p_theta_alt}
 \end{equation}
 Since $m_{p,\theta}(\pi) > 0$ for all $\theta \in (-\pi,\pi/2]$,
 $A_{p,\theta} > 0$ if it exists.  As $m_{p,\theta}(x) < 0$ for $\theta < 0$
 and $\theta < x < -\theta$, we only need to consider those $x \ge -\theta$
 for the supremum in Eq.~\eqref{E:A_p_theta_alt} if $\theta < 0$.  By fixing
 $b \in [-1,\cos\theta)$ for $\theta \in (-\pi,\pi/2]$, the set $S_{b,\theta}
 = \{ x > \theta \colon \cos x = b \}$ is non-empty and $\min S_{b,\theta} \in
 [|\theta|,\pi]$.  Observe that $m_{p,\theta}(x) > m_{p,\theta}(y)$ for all
 $x,y\in S_{b,\theta}$ with $x < y$ and $p > 0$.  So, $\lim_{x\to \theta^+}
 m_{p,\theta}(x)$ exists if $(p,\theta) \in {\mathcal R}$.  Consequently, by
 extending the definition of $m_{p,\theta}(x)$ to $x = \theta$ by continuity
 in the case of $\theta \ge 0$, the supremum in Eq.~\eqref{E:A_p_theta_alt} is
 in fact a maximum attained at a certain $\varphi_p(\theta) \in
 [|\theta|,\pi]$.  Since $dm_{p,\theta}(\pi)/dx < 0$, $A_{p,\theta} >
 m_{p,\theta}(\pi)$, we can safely omit $x = \pi$ in this maximization.  In
 this way, we obtain Inequality~\eqref{E:cos_inequality} for $(p,\theta) \in
 {\mathcal R}$.

 We rewrite Inequality~\eqref{E:cos_inequality} as
 \begin{equation}
  g_{p,\theta}(x) \equiv \cos x - \cos\theta + A_{p,\theta} (x-\theta)^p \ge 0
  \label{E:g_theta}
 \end{equation}
 for all $x \ge \theta$.  (Surely, Inequality~\eqref{E:g_theta} is trivially
 true for $x = \theta$.)  Suppose the maximum of the R.H.S. of
 Eq.~\eqref{E:A_p_theta_def} is reached when $x = \varphi_p(\theta)$.  For the
 time being, we do not assume that $\varphi_p(\theta)$ is unique.  We simply
 set $\varphi_p(\theta)$ to be any one of those $x$'s that maximizes the
 R.H.S. of Eq.~\eqref{E:A_p_theta_def}.  And we are going to prove its
 uniqueness in the next paragraph.  Clearly, $x = \theta$ and
 $\varphi_p(\theta)$ are zeros of the equation $g_{p,\theta}(x) = 0$ with the
 latter being a multiple root.  Hence, $g_{p,\theta}(\varphi_p) =
 g'_{p,\theta}(\varphi_p) = 0$.  This gives the expressions for $A_{p,\theta}$
 and $f_{p,\theta}$ in Eqs.~\eqref{E:A_p_theta_expression}
 and~\eqref{E:varphi_def}, respectively.

 We now prove the properties of the solutions of Eq.~\eqref{E:varphi_def} in
 Lemma~\ref{Lem:cos_inequality}.  Obviously, there is exactly one $x$
 maximizing Eq.~\eqref{E:A_p_theta_alt} if these properties are correct.  This
 is because more than one maximizing $x\in (|\theta|,\pi)$ means that there
 are at least two distinct $\varphi_p(\theta)$'s both in the same relevant
 domain satisfying Eq.~\eqref{E:varphi_def}.  This contradicts with the
 property that $\varphi_p(\theta)$ is unique.  We divide the remaining proof
 into the following five cases.  And we make use of the following equations,
 which are derived from Eq.~\eqref{E:varphi_def}.
 \begin{subequations}
 \begin{equation}
  f_{p,\theta}(0) = p (1 - \cos\theta) ,
  \label{E:f_at_0}
 \end{equation}
 \begin{equation}
  f_{p,\theta}(\theta) = 0 ,
  \label{E:f_at_theta}
 \end{equation}
 \begin{equation}
  f_{p,\theta}(-\theta) = 2\theta \sin\theta ,
  \label{E:f_at_minus_theta}
 \end{equation}
 \begin{equation}
  f_{p,\theta}(\frac{\pi}{2}) = \frac{\pi}{2} - \theta - p \cos\theta ,
  \label{E:f_at_pi_over_2}
 \end{equation}
 \begin{equation}
  f_{p,\theta}(\pi) = -p (1 + \cos\theta) ,
  \label{E:f_at_pi}
 \end{equation}
 \begin{equation}
  f'_{p,\theta}(\varphi_p) = (1-p) \sin\varphi_p + (\varphi_p - \theta)
  \cos\varphi_p
  \label{E:f_p}
 \end{equation}
 and
 \begin{equation}
  f''_{p,\theta}(\varphi_p) = (2-p) \cos\varphi_p - (\varphi_p - \theta)
  \sin\varphi_p .
  \label{E:f_pp}
 \end{equation}
 \end{subequations}

 \emph{Case~(a):} $(p,\theta) \in (0,1] \times (-\pi,0)$.  It is clear from
  Eq.~\eqref{E:f_pp} that $f''_{p,\theta}(\varphi_p)$ is a smooth function of
  $\varphi_p$ and $f''_{p,\theta}(\pi/2) \ne 0$.  Hence,
  $f''_{p,\theta}(\varphi_p) = 0$ implies that
  \begin{equation}
   (\varphi_p - \theta) \tan\varphi_p = 2 - p
   \label{E:f_pp_zero_condition}
  \end{equation}
  for $\varphi_p \in [0,\pi)$.  Obviously, the L.H.S. of
  Eq.~\eqref{E:f_pp_zero_condition} is a strictly increasing non-negative
  function in $[0,\pi/2)$ and a strictly increasing non-positive function in
  $(\pi/2,\pi]$, respectively.  As a result, Eq.~\eqref{E:f_pp_zero_condition}
  has a unique simple root $x_c$ in the interval $(0,\pi/2)$.  Combined with
  $f''_{p,\theta}(0) > 0$ and $f''_{p,\theta}(\pi) < 0$, we conclude that
  $f''_{p,\theta}(x) > 0$ for all $x \in [0,x_c)$ and $f''_{p,\theta}(x) < 0$
  for all $x \in (x_c,\pi)$.

  As $f_{p,\theta}(0), f'_{p,\theta}(0) > 0$, we deduce from the sign of
  $f''_{p,\theta}(x)$ for $x \in [0,x_c)$ that $f_{p,\theta}(x) > 0$ for all
  $x$ in this interval and $f'_{p,\theta}(x_c) > 0$.  According to
  Eq.~\eqref{E:f_at_pi_over_2}, $f_{p,\theta}(\pi/2)$ decreases as $\theta$
  increases from $-\pi$ to $0$.  Therefore, $f_{p,\theta}(\pi/2)
  \ge f_{p,0}(\pi/2) = \pi/2 - p > 0$.  From the sign of $f''_{p,\theta}(x)$
  for $x \in (x_c,\pi)$, we know that $f_{p,\theta}(x) > 0$ for all $x \in
  (x_c,\pi/2]$.  Besides, $f'_{p,\theta}(x)$ is strictly decreasing in
  $(\pi/2,\pi)$ according to the analysis in the last paragraph.  Together
  with $f_{p,\theta}(\pi) < 0$, we conclude that Eq.~\eqref{E:varphi_def} has
  a unique simple root in $(\pi/2,\pi)$.  This is because mean value theorem
  implies that Eq.~\eqref{E:varphi_def} has a root in $(\pi/2,\pi)$.  As the
  roots of Eq.~\eqref{E:varphi_def} in this interval forms a closed set, the
  smallest root exists which we denote by $x_r$.  Since $f_{p,\theta}(\pi/2) >
  0$ and $f'_{p,\theta}(x)$ is strictly decreasing in $(\pi/2,\pi)$, we know
  that $f'_{p,\theta}(x_r) < 0$.  Consequently, $x_r$ is the unique root of
  Eq.~\eqref{E:varphi_def} in $(\pi/2,\pi)$ because $f_{p,\theta}(x) > 0$ for
  all $x \in (\pi/2,x_r)$ and $f_{p,\theta}(x) < 0$ for all $x \in (x_r,\pi)$.

  Lastly, as $f_{p,\theta}(-\theta) > 0$, this unique solution of
  Eq.~\eqref{E:varphi_def} must lie in $[|\theta|,\pi)$ and hence in
  $[\max(|\theta|,\pi/2),\pi)$.  This completes the proof of the properties of
  roots of Eq.~\eqref{E:varphi_def} in this case.

 \emph{Case (b):} $(p,\theta) \in (0,1] \times (0,\pi/2]$.  Note that
  $f_{p,\theta}(\theta) = 0$ and $f'_{p,\theta}(\theta) \ge 0$ with equality
  holds if and only if $p = 1$.  Moreover, using the same argument to analyze
  $f''_{p,\theta}$ through Eq.~\eqref{E:f_pp_zero_condition} in the proof of
  case~(a), there is a $\bar{x}_c \in (0,\pi/2)$ such that
  $f''_{p,\theta}(\bar{x}_c) = 0$, $f''_{p,\theta}(x) > 0$ for $x \in
  (\theta,\bar{x}_c)$ and $f''_{p,\theta}(x) < 0$ for $x \in (\bar{x}_c,\pi)$.
  Therefore, for $f_{p,\theta}(\theta  + \delta), f'_{p,\theta}(\theta +
  \delta) > 0$ provided that $p \ne 1$ and $\delta > 0$ is sufficiently small.
  By Taylor's series expansion with remainder, the same is true for the case
  of $p = 1$.  Using similar argument in the proof of case~(a), we deduce that
  $f_{p,\theta}(\pi/2) \ge f_{p,\pi/2}(\pi/2) = 0$ with equality holds if and
  only if $\theta = \pi/2$.

  From the same argument using the sign of $f''_{p,\theta}$ in case~(a) and
  by using the fact that $f_{p,\theta}(\pi) < 0$, we conclude that
  $f_{p,\theta}(x) > 0$ for all $x \in (\theta,\pi/2)$.  Besides,
  Eq.~\eqref{E:varphi_def} has a unique simple root in $[\pi/2,\pi)$.

 \emph{Case~(c):} $(p,\theta) \in (1,2] \times (-\pi,0)$.  Using the same
  argument on $f''_{p,\theta}$ in the proof of case~(a), we know that there is
  a $\bar{\bar{x}}_c \in (0,\pi/2)$ such that $f''_{p,\theta}(\bar{\bar{x}}_c)
  = 0$, $f''_{p,\theta}(x) > 0$ for $x \in [0,\bar{\bar{x}}_c)$ and
  $f''_{p,\theta}(x) < 0$ for $x \in (\bar{\bar{x}}_c,\pi/2)$.  Note that
  $f_{p,\theta}(0), f_{p,\theta}(-\theta)$ and $f'_{p,\theta}(0) > 0$.  So
  using the same argument as in the proof of case~(a), we know that
  $f_{p,\theta}(x) > 0$ for all $x\in [0,|\theta|]$.  Together with the fact
  that $f_{p,\theta}(\pi) < 0$, we conclude that Eq.~\eqref{E:varphi_def} has
  a unique simple root in $[|\theta|,\pi)$.

 \emph{Case~(d):} $(p,\theta) \in (0,2) \times \{ 0 \}$.  In this case,
  $f_{p,0}(0) = f'_{p,0}(0) = 0$ and $f''_{p,0}(0) > 0$.  Hence,
  $f_{p,0}(\delta), f'_{p,0}(\delta) > 0$ for a sufficiently small $\delta >
  0$.  Together with $f_{p,0}(\pi) < 0$, we can use the same argument as in
  the proof of case~(a) to deduce the existence of a unique simple root for
  Eq.~\eqref{E:varphi_def} in $(0,\pi)$.  Furthermore, if $p \in (0,1]$, using
  the argument in the proof of case~(a), we know that $f_{p,0}(\pi/2) > 0$.
  Hence, the solution of Eq.~\eqref{E:varphi_def} can be further restricted to
  $[\pi/2,\pi)$.
 
 \emph{Case~(e):} $(p,\theta) = (2,0)$.  The argument is similar to that in
  the proof of case~(d).  Here, it is straightforward to check that
  $f^{(j)}_{p,0}(0) = 0$ for $j = 0,1,2,3$.  Besides, $f^{(4)}_{p,0}(0) < 0$.
  Therefore, $0$ is a root of Eq.~\eqref{E:varphi_def} of order~4.  From
  Eq.~\eqref{E:f_pp}, $f''_{p,0}(x) < 0$ for $x \in (0,\pi)$.  Therefore,
  $f'_{p,0}(x) < 0$ and $f_{p,0}(x) < 0$ for $x \in (0,\pi)$.  In other words,
  $0$ is the only root of Eq.~\eqref{E:varphi_def} in $[0,\pi)$.  This
  completes the proof of case~(e).
\end{proof}

\begin{proof}[Proof of Corollary~\ref{Cor:varphi_trend}]
 Here we only prove properties of those functions involving
 $\varphi_p^+(\theta)$.  As $\varphi_p^-(\theta) = -\varphi_p^+(-\theta)$,
 properties of those functions involving $\varphi_p^-(\theta)$ follows from
 the properties of the corresponding functions of $\varphi_p^+(\theta)$.
 Recall from Lemma~\ref{Lem:cos_inequality} that Eq.~\eqref{E:varphi_def} has
 a unique solution $\varphi_p^+(\theta) \in [-\pi/2,\pi)$ whenever $p \in
 (0,1]$ and $\theta \in [-\pi/2,\pi/2]$.  Therefore, $\varphi_p^+ \pm \theta
 \in [0,3\pi/2)$.  Moreover, the solution of Eq.~\eqref{E:varphi_def}, namely,
 $\varphi_p^+(\theta)$, is the one that maximizes the R.H.S. of
 Eq.~\eqref{E:A_p_theta_def}.  Thus, applying the implicit function theorem to
 Eq.~\eqref{E:varphi_def}, we know that $\varphi_p^+$ is differentiable and
 equals
 \begin{equation}
  \frac{d\varphi_p^+}{d\theta} = \frac{\sin\varphi_p^+ - p
  \sin\theta}{(\varphi_p^+ - \theta) \cos\varphi_p^+ + (1-p) \sin\varphi_p^+}
  \label{E:dvarphi_dtheta}
 \end{equation}
 provided that the denominator of this equation is non-zero.

 We claim that this denominator is non-positive for all $|\theta| \le \pi/2$
 and $p\in (0,1]$ with equality holds if and only if $p = 1$ and $\theta =
 \pi/2$.  In these parameter ranges, Lemma~\ref{Lem:cos_inequality} implies
 that $\varphi_p^+ \in [\pi/2,\pi)$ and $\varphi_p^+ \ge |\theta|$ with
 equality holds if $\theta = \pi/2$.  By multiplying the denominator of
 Eq.~\eqref{E:dvarphi_dtheta} by $\sin\varphi_p^+$ and by using
 Eq.~\eqref{E:varphi_def}, it suffices to prove that
 \begin{equation}
  \kappa(p,\theta) = \sin^2\varphi_p^+ - p + p \cos\theta \cos\varphi_p^+ \le
  0 .
  \label{E:dvarphi_dtheta_denom}
 \end{equation}
 (Note that through Eq.~\eqref{E:varphi_def}, we may regard $\kappa$ as a
 function of $p$ and $\theta$.)  From Eq.~\eqref{E:varphi_def}, we know that
 \begin{equation}
  \frac{\partial\theta}{\partial p} = \frac{\cos\varphi_p^+ -
 \cos\theta}{(1-p) \sin\theta} .
  \label{E:partialtheta_partialp}
 \end{equation}
 Hence,
 \begin{equation}
  \frac{d\kappa}{dp} = \cos\theta \cos\varphi_p^+ - 1 - \frac{p
  \cos\varphi_p^+ (\cos\varphi_p^+ - \cos\theta)}{1-p} < 0
 \end{equation}
 whenever $|\theta| < \pi/2$.  Therefore,
 \begin{equation}
  \kappa(p,\theta) < \lim_{q\to 0^+} \kappa(q,\theta) = \lim_{q\to 0^+} \sin^2
  \varphi_{q}(\theta) = 0
  \label{E:kappa_limit}
 \end{equation}
 for all $\theta \in [-\pi/2,\pi/2)$.  Here we obtain the last equality by
 solving Eq.~\eqref{E:varphi_def} in the limit of $q = 0^+$.  From
 Eq.~\eqref{E:varphi_def}, $[\varphi_p^+(\pi/2) - \pi/2] \tan
 \varphi_p^+(\pi/2) = -p$.  Using the analysis on the property of
 Eq.~\eqref{E:f_pp_zero_condition} in the proof of
 Lemma~\ref{Lem:cos_inequality}, we conclude that $\varphi_p^+(\pi/2) \ge
 \pi/2$ with equality holds if and only if $p = 1$.  So, $\kappa(p,\pi/2) \le
 \lim_{q\to 0^+} \kappa(q,\pi/2) = \sin^2\varphi_0^+(\pi/2) - p = 0$ with
 equality holds if and only if $p = 1$.  In summary, $\kappa(p,\theta) \le 0$
 for $p\in (0,1]$ and $|\theta| \le \pi/2$.  This proves our claim that the
 denominator of Eq.~\eqref{E:dvarphi_dtheta} is non-positive and equality
 holds if and only if $(p,\theta) = (1,\pi/2)$.

 As $\varphi_p^+$ is differentiable for $|\theta| \le \pi/2$, so to prove that
 $\varphi_p^+ - \theta$ is a strictly decreasing function of $\theta$ is
 equivalent to show that $\sin\varphi_p^+ - p \sin\theta > (\varphi_p^+ -
 \theta) \cos\varphi_p^+ - (1-p) \sin\varphi_p^+$.  Multiplying this
 inequality by $\sin\varphi_p^+$ and using Eq.~\eqref{E:varphi_def}, this is
 equivalent to proving that
 \begin{equation}
  2\sin^2\varphi_p^+ + p[\cos 2\varphi_p^+ - \cos(\varphi_p^+ - \theta)] > 0 .
  \label{E:varphi_trend1}
 \end{equation}
 Clearly, this inequality holds if $\cos 2\varphi_p^+ \ge \cos(\varphi_p^+ -
 \theta)$.  And for the case of $\cos 2\varphi_p^+ < \cos(\varphi_p^+ -
 \theta)$, it suffices to show that
 \begin{equation}
  0 < 2\sin^2\varphi_p^+ + \cos 2\varphi_p^+ - \cos(\varphi_p^+ - \theta)  = 1
  - \cos(\varphi_p^+ - \theta) .
  \label{E:varphi_trend2}
 \end{equation}
 Since $\varphi_p^+ > \theta$ unless $\theta = \pi/2$,
 Inequality~\eqref{E:varphi_trend2} is satisfied except possibly when
 $\varphi_p^+ = \theta = \pi/2$.  And in this case,
 Eq.~\eqref{E:dvarphi_dtheta} becomes $d\varphi_p^+/d\theta = 1$.  Therefore,
 $\varphi_p^+ - \theta$ is a strictly decreasing function of $\theta$.

 By the same token, we prove that $\varphi_p^+ + \theta$ is an increasing
 function of $\theta$ by showing that
 \begin{align}
  & \sin\varphi_p^+ - p \sin\theta + (\varphi_p^+ - \theta) \cos\varphi_p^+
   -(1-p) \sin\varphi_p^+ < 0 \nonumber \\
  \Longleftrightarrow{}& \cos(\varphi_p^+ + \theta) - \cos2\varphi_p < 0
   \nonumber \\
  \Longleftrightarrow{}& \sin\left( \frac{3\varphi_p^+ + \theta}{2} \right)
   \sin \left( \frac{\varphi_p^+ - \theta}{2} \right) < 0 .
  \label{E:varphi_trend3}
 \end{align}
  So, it suffices to prove that $3\varphi_p^+ + \theta \in (2\pi,4\pi)$.  Let
  us write $3\varphi_p^+ + \theta = 3(\varphi_p^+ - \theta) + 4\theta$.  Since
  $\varphi_p^+ - \theta$ is a strictly decreasing function of $\theta$ whose
  range is in $[0,3\pi/2)$, we conclude that $3\varphi_p^+ + \theta \in (2\pi,
  5\pi/2] \subset (2\pi,4\pi)$.
\end{proof}

\begin{proof}[Proof of Lemma~\ref{Lem:x_cotx_div_2}]
 Let $x \in (0,\pi)$.  Since $x > \sin x$, we have $2\cot (x/2) < x
 \csc^2(x/2)$.  This means $dh/dx < 0$ and hence $h$ is strictly decreasing in
 $[0,\pi]$.  Thus, $h \colon [0,\pi] \mapsto [0,2]$ is a homeomorphism.
\end{proof}

\begin{proof}[Proof of Lemma~\ref{Lem:opt_LC}]
 For the case of $\epsilon = 1$, the argument in the L.H.S. of
 Eq.~\eqref{E:opt_LC_aux} equals $0$ for all $x \in (0,\pi)$.  Thus,
 Eq.~\eqref{E:opt_LC_aux} holds in this case.

 For the case of $\epsilon \in [0,1)$, we consider the function
 \begin{equation}
  u(x) = \ln x + \frac{\ln \left( \frac{1-\sqrt{\epsilon}}{2} \right) - 2
  \ln\sin \frac{x}{2}}{x \cot \frac{x}{2}} ,
  \label{E:u_aux_def}
 \end{equation}
 which is smooth for $x \in (0,\pi)$.  Clearly, $u(x)$ is the logarithm of the
 argument in the L.H.S. of Eq.~\eqref{E:opt_LC_aux}.  In addition,
 \begin{equation}
  \frac{du}{dx} = \frac{1}{x^2} \left( \ln \frac{1-\sqrt{\epsilon}}{2} - 2
  \ln\sin \frac{x}{2} \right) \left[\frac{x \sec^2\frac{x}{2}}{2} - \tan
  \frac{x}{2} \right] .
  \label{E:dudx}
 \end{equation}
 Note that the first factor in the R.H.S. of Eq.~\eqref{E:dudx} is non-zero in
 $(0,\pi)$.  For the third factor to vanish, $x = \sin x$.  So, the third
 factor is non-zero in $(0,\pi)$, too.  As for the second factor in the R.H.S.
 of Eq.~\eqref{E:dudx}, it has a unique zero in $(0,\pi)$, namely, at $x =
 \cos^{-1}(\sqrt{\epsilon})$.  Since
 \begin{equation}
  \left. \frac{d^2 u}{dx^2} \right|_{x=\cos^{-1}(\sqrt{\epsilon})} =
  \frac{1}{[\cos^{-1}(\sqrt{\epsilon})]^2} \left[ 1-
  \frac{\cos^{-1}(\sqrt{\epsilon})}{\sqrt{1-\epsilon}} \right] < 0 ,
  \label{E:d2udx2}
 \end{equation}
 $u(x)$ and hence $\exp[u(x)]$ attain their global maxima in the interval
 $(0,\pi)$ at $x=\cos^{-1}(\sqrt{\epsilon})$.  As $\exp\{
 u(\cos^{-1}[\sqrt{\epsilon}]) \} = \cos^{-1}(\sqrt{\epsilon})$, we conclude
 that Eq.~\eqref{E:opt_LC_aux} is valid for $\epsilon \in [0,1)$.
\end{proof}

\begin{proof}[Proof of Lemma~\ref{Lem:Newton_method}]
 Denote the root by $x_r$.  Since this Lemma is trivially true when $x_r = b$,
 we may assume that $x_r < b$ and $w(b) < 0$.  By Taylor's series expansion
 with remainder, $0 = w(x_r) \le w(b) + w'(b) (x_r-b)$ or $x_r \le x_1 \equiv
 b - w(b)/w'(b)$.  We claim that $w(x_1) < 0$.  Suppose the contrary, $w(x) =
 0$ has another root $\bar{x}_r > x_1$.  Nonetheless, the above Taylor's
 series argument implies that $\bar{x}_r \le x_1$, which is absurd.  Note that
 $w'(x_1) < 0$.  If not, $w'(x) \ge 0$ for all $x \in [a,x_1)$ because $w''(x)
 \le 0$.  Then, by mean value theorem, $w(x) \le w(x_1) < 0$ contradicting the
 condition that $w(a) \ge 0$.  Replacing $b$ by $x_1$ and repeating the above
 argument, this Lemma can be proven by recursion.
\end{proof}

\bibliographystyle{apsrev4-2}
\bibliography{qsl2.3}

\end{document}